\documentclass[aps, pra, reprint, superscriptaddress, onecolumn, notitlepage, tightenlines, 11pt]{revtex4-2}

\pdfoutput=1

\usepackage{header}

\graphicspath{{graphics/}}

\begin{document}

\title{Fast Algorithms for Stoquastic Spin Systems}

\author{Ryan L. Mann}
\email{mail@ryanmann.org}
\homepage{http://www.ryanmann.org}
\affiliation{Centre for Quantum Software and Information, School of Computer Science, Faculty of Engineering \& Information Technology, University of Technology Sydney, NSW 2007, Australia}

\begin{abstract}
    We establish a general framework for developing fast sampling and counting algorithms for stoquastic spin systems at high temperature. Our framework is based on a rapidly mixing Markov chain for polymer models and a subcritical percolation process for sampling individual polymers. We apply our framework to obtain fast algorithms for approximating the partition function and sampling from the thermal distribution of (1) general stoquastic spin systems, (2) ferromagnetic Heisenberg models, and (3) antiferromagnetic Heisenberg models on bipartite graphs. For the Heisenberg models, we obtain an improved bound on the inverse temperature by using their respective cycle and loop representations.
\end{abstract}

\maketitle

{
\hypersetup{linkcolor=black}
\tableofcontents
}

\section{Introduction}
\label{section:Introduction}

Approximating the partition function and sampling from the thermal distribution are central problems in statistical physics. The polymer model formalism provides a general framework for developing efficient algorithms for these problems~\cite{gruber1971general, kotecky1986cluster}. The principal algorithmic technique within this framework is the cluster expansion, which has been successfully applied to obtain efficient algorithms for the hardcore model~\cite{helmuth2020algorithmic, jenssen2020algorithms}, the Potts model~\cite{helmuth2020algorithmic, jenssen2020algorithms, borgs2020efficient}, and quantum spin systems~\cite{mann2021efficient, helmuth2023efficient}. The resulting algorithms run in polynomial time, though the degree of the polynomial is often relatively high. For classical systems, fast algorithms have been developed via Markov chain methods for polymer models~\cite{chen2021fast, galanis2022fast, blanca2024fast}.

These fast classical algorithms are based on sampling directly from the polymer model rather than explicitly computing a truncated cluster expansion. In particular, the polymer dynamics of Chen et al.~\cite{chen2021fast} provides a rapidly mixing Markov chain for sampling polymer configurations, while the graphlet sampling algorithm of Blanca et al.~\cite{blanca2024fast} provides an efficient method for sampling individual polymers via a subcritical percolation process. These methods apply broadly to classical systems, raising the question of whether they can be extended to quantum systems.

Classical algorithms for quantum systems are known in several settings. The cluster expansion has been used to obtain efficient algorithms for general quantum systems at high temperature~\cite{mann2021efficient, yin2023polynomial, mann2024algorithmic, bakshi2024high, ramkumar2026high}, stable quantum perturbations of classical spin systems at low temperature~\cite{helmuth2023efficient}, and weakly-interacting quantum spin systems at arbitrary temperature~\cite{mann2026efficient}. More efficient algorithms have been obtained via importance sampling of the cumulant expansion for weakly-interacting fermionic systems at arbitrary temperature~\cite{chen2025convergence}, and via importance sampling of the cluster expansion for general quantum systems at high temperature~\cite{putterman2026quantum}.

Quantum algorithms for thermal state preparation provide an alternative approach to sampling from the thermal distribution~\cite{chen2023efficient, ding2025efficient, gilyen2024quantum}. This approach has been used to obtain efficient algorithms for general quantum spin systems at high temperature~\cite{rouze2026optimal, bergamaschi2026fast}, weakly-interacting quantum spin systems at arbitrary temperature~\cite{smid2025rapid}, and weakly-interacting fermionic systems at arbitrary temperature~\cite{tong2025fast, smid2025polynomial}. These quantum algorithms are typically faster than their classical counterparts, motivating the search for faster classical algorithms for quantum systems.

An important class of quantum spin systems is that of stoquastic spin systems~\cite{bravyi2008complexity}, characterised by Hamiltonians whose interaction terms have non-positive off-diagonal matrix elements in the spin basis. This class includes the transverse-field Ising model, the ferromagnetic Heisenberg model, and the antiferromagnetic Heisenberg model on bipartite graphs. These systems are amenable to Markov chain methods~\cite{suzuki1977monte}, and efficient classical algorithms are known for transverse-field Ising models~\cite{bravyi2015monte, crosson2025classical}, XY models~\cite{bravyi2017polynomial, rayudu2025fast}, and Heisenberg models on star-like bipartite graphs~\cite{takahashi2024rapidly}. This suggests that stoquastic systems may provide a suitable setting for applying fast polymer methods.

A key challenge in applying fast polymer algorithms to quantum systems is the computational cost of evaluating polymer weights and their possible negativity. For quantum systems, polymer weights are typically expressed as a trace over a Hilbert space whose dimension is exponential in the size of the polymer. For stoquastic systems, these weights admit non-negative representations that can be sampled to obtain probabilistic estimators. The challenge is therefore to combine these estimators with fast polymer sampling and counting algorithms while controlling their computational cost.

In this paper, we establish a general framework for developing fast sampling and counting algorithms for stoquastic spin systems at high temperature. Our framework is based on the polymer dynamics of Chen et al.~\cite{chen2021fast} and the graphlet sampling algorithm of Blanca et al.~\cite{blanca2024fast}. We apply our framework to obtain fast algorithms for approximating the partition function and sampling from the thermal distribution of (1) general stoquastic spin systems, (2) ferromagnetic Heisenberg models, and (3) antiferromagnetic Heisenberg models on bipartite graphs. For the Heisenberg models, we obtain an improved bound on the inverse temperature by using the cycle representation of T\'oth~\cite{toth1993improved, goldschmidt2011quantum} and the loop representation of Aizenman and Nachtergaele~\cite{aizenman1994geometric, goldschmidt2011quantum}.

This paper is structured as follows. In Section~\ref{section:Preliminaries}, we introduce the necessary preliminaries. Then, in Section~\ref{section:SamplingAlgorithms} and Section~\ref{section:CountingAlgorithms}, we establish our sampling and counting algorithms for polymer models. In Section~\ref{section:TruncatedPolymerModels}, we extend our algorithms to truncated polymer models. In Section~\ref{section:Applications}, we apply our algorithms to stoquastic spin systems. Finally, we conclude in Section~\ref{section:ConclusionOutlook} with some remarks and open problems.

\section{Preliminaries}
\label{section:Preliminaries}

\subsection{Graph Theory}
\label{section:GraphTheory}

Let $G$ be a graph with vertex set $V(G)$ and edge set $E(G)$. We denote the \emph{order} of $G$ by $\abs{G}\coloneqq\abs{V(G)}$ and the \emph{size} of $G$ by $\norm{G}\coloneqq\abs{E(G)}$. The \emph{induced subgraph} of a subset of vertices $U \subseteq V(G)$ is the graph $G[U]$ whose vertex set is $U$ and whose edge set consists of all edges in $G$ which have both endpoints in $U$. A \emph{graphlet} of $G$ is a connected induced subgraph of $G$. The \emph{maximum degree} of $G$ is the maximum degree over all vertices of $G$. We denote the \emph{open neighbourhood} of a vertex $v \in V(G)$ by $\mathcal{N}(v)\coloneqq\{u \mid \{u,v\} \in E(G)\}$. More generally, we denote the open neighbourhood of a subset of vertices $U \subseteq V(G)$ by $\mathcal{N}(U)\coloneqq\left(\bigcup_{v \in U}\mathcal{N}(v)\right){\setminus}U$. The \emph{line graph} of $G$ is the graph $L(G)$ whose vertex set is $E(G)$ and whose edge set consists of all pairs of edges of $G$ which share an endpoint.

\subsection{Polymer Models}
\label{section:PolymerModels}

An \emph{abstract polymer model} is a triple $\mathcal{P}=(\mathcal{C},w,\sim)$, where $\mathcal{C}$ is a countable set of objects called \emph{polymers}, $w:\mathcal{C}\to\mathbb{R}_{\geq0}$ is a function that assigns to each polymer $\gamma\in\mathcal{C}$ a non-negative weight $w_\gamma$, and $\sim$ is a symmetric compatibility relation such that each polymer is incompatible with itself. A set of polymers is called \emph{admissible} if the polymers in the set are all pairwise compatible and have positive weight. Note that the empty set is admissible. Let $\mathcal{G}$ denote the collection of all admissible sets of polymers from $\mathcal{C}$. The \emph{partition function} of $\mathcal{P}$ is defined by
\begin{equation}
    Z_\mathcal{P} \coloneqq \sum_{\Gamma\in\mathcal{G}}\prod_{\gamma\in\Gamma}w_\gamma. \notag
\end{equation}
The \emph{polymer distribution} $\mu_\mathcal{P}$ of $\mathcal{P}$ is the probability distribution on $\mathcal{G}$ defined by
\begin{equation}
    \mu_\mathcal{P}(\Gamma) \coloneqq \frac{1}{Z_\mathcal{P}}
    \prod_{\gamma\in\Gamma}w_\gamma. \notag
\end{equation}

Let $G$ be a graph. A \emph{graphlet polymer model} on $G$ is a polymer model $\mathcal{P}=(\mathcal{C},w,\sim)$, where $\mathcal{C}$ is a set of graphlets of $G$, and two polymers $\gamma$ and $\gamma'$ are compatible if and only if $G[V(\gamma) \cup V(\gamma')]$ is disconnected. We extend $w$ to all graphlets of $G$ by setting $w_\gamma=0$ for $\gamma\notin\mathcal{C}$. For each vertex $v \in V(G)$, let $\mathcal{C}_v\coloneqq\{\gamma\in\mathcal{C} \mid v \in V(\gamma)\}$ denote the set of graphlets that contain $v$.

We introduce the following condition on the polymer weights.
\begin{definition}[Vertex incompatibility condition]
    Fix $\theta\in(0,1)$. A graphlet polymer model $\mathcal{P}=(\mathcal{C},w,\sim)$ defined on a graph $G$ satisfies the \emph{vertex incompatibility condition} if, for all $v \in V(G)$,
    \begin{equation}
        \sum_{\gamma\in\mathcal{C}_v}\abs{\gamma}w_\gamma \leq \theta. \notag
    \end{equation}
\end{definition}

The following lemma bounds the number of graphlets of a given order containing a fixed vertex.

\begin{lemma}[{\cite[Lemma~2.1]{borgs2013left}}]
    \label{lemma:GraphletCount}
    Fix $\Delta\in\mathbb{Z}_{\geq3}$. Let $G$ be a graph of maximum degree at most $\Delta$, and let $v$ be a vertex of $G$. The number of graphlets of $G$ of order $k$ that contain $v$ is at most $(e\Delta)^{k-1}$.
\end{lemma}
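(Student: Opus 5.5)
We will prove the bound by injecting connected vertex sets into subtrees of the infinite $\Delta$-ary rooted tree, and then counting those subtrees with the Fuss--Catalan formula. We identify a graphlet of order $k$ containing $v$ with its vertex set $U\ni v$, $\abs{U}=k$, such that $G[U]$ is connected. It therefore suffices to count such connected sets.

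\textbf{Step 1: spanning trees and a canonical choice.} Fix an ordering of $V(G)$. For each vertex $u$, fix an injective labelling of its neighbours by $\{1,\dots,\Delta\}$. Given a connected set $U\ni v$, run breadth-first search in $G[U]$ from $v$, exploring neighbours in label order. This produces a canonical spanning tree $T_U$ rooted at $v$, together with a labelling of each tree edge by the label of the child as seen from its parent. Under the labelling, $T_U$ maps to a rooted subtree, containing the root, of the infinite rooted tree $\mathbb{T}_\Delta$ in which every vertex has $\Delta$ labelled children.

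\textbf{Step 2: injectivity.} The map $U\mapsto T_U\subseteq\mathbb{T}_\Delta$ is injective. Starting from $v$ and following the labels, the embedded tree determines each vertex of $U$ uniquely. Hence the number of such $U$ is at most the number of $k$-vertex subtrees of $\mathbb{T}_\Delta$ that contain the root.

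\textbf{Step 3: counting subtrees.} The number of $k$-vertex rooted subtrees of the $\Delta$-ary tree containing the root is the Fuss--Catalan number
\[
t_k=\frac{1}{(\Delta-1)k+1}\binom{\Delta k}{k}.
\]
This follows from Lagrange inversion applied to $F(x)=x(1+F(x))^{\Delta}$. Using $\binom{\Delta k}{k}\le \frac{(\Delta k)^k}{k!}$ and $k!\ge (k/e)^k$ gives $\binom{\Delta k}{k}\le (e\Delta)^k$. We then need to shave off one factor of $e\Delta$. The sharper computation is
\[
t_k=\frac{1}{k}\binom{\Delta k}{k-1}\le \frac{1}{k}\frac{(\Delta k)^{k-1}}{(k-1)!}=\frac{(\Delta k)^{k-1}}{k!}\le \frac{\Delta^{k-1}k^{k-1}e^{k-1}}{k^{k-1}}=(e\Delta)^{k-1},
\]
where we used $k!\ge k^{k}e^{-(k-1)}$. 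That inequality holds by the standard integral bound $\ln k!\ge k\ln k-k+1$. The identity $\frac{1}{(\Delta-1)k+1}\binom{\Delta k}{k}=\frac{1}{k}\binom{\Delta k}{k-1}$ is a routine manipulation of binomial coefficients.

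\textbf{Main obstacle.} The delicate point is obtaining exactly $(e\Delta)^{k-1}$ rather than a weaker constant, which requires the precise Fuss--Catalan count. A crude bound of $\Delta^{2(k-1)}$, obtained by encoding the tree walk, would not suffice.

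An alternative is to avoid the closed form and argue by induction on $k$. In that approach one would use that the root of $\mathbb{T}_\Delta$ has $\Delta$ children and apply a generating-function comparison: the function $F$ has radius of convergence $(\Delta-1)^{\Delta-1}/\Delta^\Delta\ge 1/(e\Delta)$. I would expect the direct computation above to be cleaner, and it is valid for every $\Delta\ge 1$. The hypothesis $\Delta\ge3$ is not needed for the argument.
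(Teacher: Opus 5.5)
Your argument is correct and is essentially the standard proof behind the cited lemma of Borgs et al., which the paper quotes without proof: you inject connected vertex sets containing $v$ into root-containing subtrees of an infinite tree via a spanning tree and neighbour labels, and then bound the Fuss--Catalan count $\frac{1}{k}\binom{\Delta k}{k-1}$ by $(e\Delta)^{k-1}$. One cosmetic point: with $k!\ge k^k e^{-(k-1)}$ as you state it, the last step actually gives the stronger bound $(e\Delta)^{k-1}/k$, so the denominator $k^{k-1}$ in your display should read $k^k$ (the inequality as written is still true, just looser).
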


The following lemma gives a sufficient condition on the polymer weights of a graphlet polymer model for the vertex incompatibility condition to hold.

\begin{lemma}
    \label{lemma:PolymerWeightVertexIncompatibilityCondition}
    Fix $\Delta\in\mathbb{Z}_{\geq3}$. Let $\mathcal{P}=(\mathcal{C},w,\sim)$ be a graphlet polymer model defined on a graph $G$ of maximum degree at most $\Delta$. Suppose that, for all $\gamma\in\mathcal{C}$, the weight $w_\gamma$ satisfies
    \begin{equation}
        w_\gamma \leq \left(\frac{1}{e^2\Delta}\right)^\abs{\gamma}. \notag
    \end{equation}
    Then $\mathcal{P}$ satisfies the vertex incompatibility condition with $\theta=\frac{1}{(e-1)^2\Delta}$.
\end{lemma}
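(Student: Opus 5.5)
Fix a vertex $v\in V(G)$. The plan is to group the polymers in $\mathcal{C}_v$ by their order and bound the vertex incompatibility sum directly. Every polymer is a graphlet, so every $\gamma\in\mathcal{C}_v$ has $\abs{\gamma}=k$ for some $k\geq1$. Since $w_\gamma=0$ for graphlets outside $\mathcal{C}$, I will sum over all graphlets of order $k$ that contain $v$. By Lemma~\ref{lemma:GraphletCount} there are at most $(e\Delta)^{k-1}$ of them. Combining this count with the assumed weight bound gives
\begin{equation}
    \sum_{\gamma\in\mathcal{C}_v}\abs{\gamma}w_\gamma \leq \sum_{k\geq1} k\,(e\Delta)^{k-1}\left(\frac{1}{e^2\Delta}\right)^{k} = \frac{1}{e^2\Delta}\sum_{k\geq1} k\,e^{-(k-1)} . \notag
\end{equation}
The factors of $\Delta$ cancel within each term, and a single $1/(e^2\Delta)$ is left over.

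Next I evaluate the series. It is the derivative series $\sum_{k\geq1}kx^{k-1}=(1-x)^{-2}$ at $x=1/e$, which gives $(1-1/e)^{-2}=e^2/(e-1)^2$. The sum is therefore at most
\begin{equation}
    \frac{1}{e^2\Delta}\cdot\frac{e^2}{(e-1)^2}=\frac{1}{(e-1)^2\Delta}, \notag
\end{equation}
which is exactly the claimed $\theta$.

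Finally I check that $\theta\in(0,1)$, as the definition requires. We have $(e-1)^2\approx2.95$, and with $\Delta\geq3$ this gives $\theta<1$. The only place the hypothesis $\Delta\geq3$ is really needed is to invoke Lemma~\ref{lemma:GraphletCount}. The argument has no genuine obstacle. The only care needed is to match the exponent $k-1$ in the graphlet count against the exponent $k$ in the weight bound, so that the leftover factor is $1/(e^2\Delta)$, and to justify exchanging sums of non-negative terms.
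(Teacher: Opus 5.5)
Your proposal is correct and follows essentially the same approach as the paper: group the polymers in $\mathcal{C}_v$ by order, bound their number using Lemma~\ref{lemma:GraphletCount}, and sum the resulting geometric-derivative series to obtain $\theta=\frac{1}{(e-1)^2\Delta}$. The only difference is bookkeeping: you pull out $\frac{1}{e^2\Delta}$ and use $\sum_k kx^{k-1}$, while the paper pulls out $\frac{1}{e\Delta}$ and uses $\sum_k ke^{-k}$.
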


\begin{proof}
    Fix a vertex $v \in V(G)$. The number of polymers of order $k$ in $\mathcal{C}_v$ is at most $(e\Delta)^{k-1}$ by Lemma~\ref{lemma:GraphletCount}. Therefore,
    \begin{equation}
        \sum_{\gamma\in\mathcal{C}_v}\abs{\gamma}w_\gamma \leq \sum_{k=1}^\infty k(e\Delta)^{k-1}\left(\frac{1}{e^2\Delta}\right)^k = \frac{1}{e\Delta}\sum_{k=1}^\infty ke^{-k} = \frac{1}{(e-1)^2\Delta} < 1, \notag
    \end{equation}
    completing the proof.
\end{proof}

We introduce the notion of $\kappa$-computable graphlet polymer models, which characterises the cost of evaluating polymer weights.

\begin{definition}[$\kappa$-computable graphlet polymer model]
    A graphlet polymer model $\mathcal{P}=(\mathcal{C},w,\sim)$ is $\kappa$-computable if, for all polymers $\gamma\in\mathcal{C}$, $w_\gamma$ can be computed in time $e^{\kappa\abs{\gamma}}\abs{\gamma}^{O(1)}$.
\end{definition}

Let $G$ be a graph. A \emph{subgraph polymer model} on $G$ is a polymer model $\mathcal{P}=(\mathcal{C},w,\sim)$, where $\mathcal{C}$ is a set of connected subgraphs of $G$ with at least one edge, and two polymers $\gamma$ and $\gamma'$ are compatible if and only if $V(\gamma)\cap V(\gamma')=\varnothing$. We extend $w$ to all such subgraphs by setting $w_\gamma=0$ for $\gamma\notin\mathcal{C}$. A subgraph polymer model $\mathcal{P}$ on $G$ is equivalent to a graphlet polymer model on $L(G)$ under the identification of connected subgraphs of $G$ with the corresponding connected induced subgraphs of $L(G)$.

Let $\mathcal{P}=(\mathcal{C},w,\sim)$ be a subgraph polymer model on $G$, and let $d\in\mathbb{Z}^+$. For each $\gamma\in\mathcal{C}$ with $w_\gamma>0$, let $\mu_\gamma$ be a probability distribution on $[d]^{V(\gamma)}$, and let $\mu_v$ be a probability distribution on $[d]$ for each $v \in V(G)$. The \emph{coloured polymer distribution} $\hat{\mu}_\mathcal{P}$ is the probability distribution on $[d]^{V(G)}$ defined by
\begin{equation}
    \hat{\mu}_\mathcal{P}(\sigma) \coloneqq \sum_{\Gamma\in\mathcal{G}}\mu_\mathcal{P}(\Gamma)\prod_{\gamma\in\Gamma}\mu_\gamma(\sigma_{V(\gamma)})\prod_{v \in V(G){\setminus}V(\Gamma)}\mu_v(\sigma_v), \notag
\end{equation}
where $V(\Gamma)\coloneqq\bigcup_{\gamma\in\Gamma}V(\gamma)$. The notion of $\kappa$-computability extends to subgraph polymer models: $\mathcal{P}$ is $\kappa$-computable if, for all $\gamma\in\mathcal{C}$, $w_\gamma$ can be computed in time $e^{\kappa\norm{\gamma}}\norm{\gamma}^{O(1)}$. For the coloured polymer distribution $\hat{\mu}_\mathcal{P}$, we further require that, for all $\gamma\in\mathcal{C}$, $\mu_\gamma$ can be sampled in time $e^{\kappa\norm{\gamma}}\norm{\gamma}^{O(1)}$.

\subsection{Stoquastic Spin Systems}
\label{section:StoquasticSpinSystems}

A \emph{stoquastic spin system} $\mathcal{S}$ is modelled by a graph $G$. At each vertex $v$ of $G$, there is a $d$-dimensional Hilbert space $\mathcal{H}_v$ with $d<\infty$. The Hilbert space on the graph is given by $\mathcal{H}_G\coloneqq\bigotimes_{v \in V(G)}\mathcal{H}_v$. An interaction $\Phi$ assigns a self-adjoint operator $\Phi(e)$ on $\bigotimes_{v \in e}\mathcal{H}_v$ to each edge $e$ of $G$. The Hamiltonian of $\mathcal{S}$ is defined by $H_\mathcal{S}\coloneqq\sum_{e \in E(G)}\Phi(e)$. We consider stoquastic Hamiltonians in the sense that all matrix elements of $\Phi(e)$ are non-positive in the spin basis for all $e \in E(G)$. This differs from the usual definition in that we also require the diagonal matrix elements to be non-positive. This condition can always be achieved by subtracting a suitable multiple of the identity from each $\Phi(e)$.

At inverse temperature $\beta\geq0$, the \emph{partition function} $Z_\mathcal{S}$ is defined by $Z_\mathcal{S}\coloneqq\Tr\left[e^{-\beta H_\mathcal{S}}\right]$ and the \emph{thermal state} $\rho_\mathcal{S}$ is defined by $\rho_\mathcal{S}\coloneqq(Z_\mathcal{S})^{-1}e^{-\beta H_\mathcal{S}}$. The \emph{thermal distribution} $\mu_{\rho_\mathcal{S}}$ over the spin space $[d]^{V(G)}$ is defined by $\mu_{\rho_\mathcal{S}}(\sigma)\coloneqq\Tr\left[\ketbra{\sigma}{\sigma}\rho_\mathcal{S}\right]$ for all $\sigma\in[d]^{V(G)}$. We restrict our attention to stoquastic spin systems modelled by bounded-degree graphs. We shall assume that the trace is normalised so that $\Tr[\mathbb{I}]=1$, which is equivalent to rescaling the partition function and thermal state by a multiplicative factor and does not affect the thermal distribution. Further, we assume that $0<\norm{\Phi(e)}\leq1$ for every $e \in E(G)$, where $\norm{\;\cdot\;}$ denotes the operator norm. Note that this is always possible by a rescaling of $\beta$.

\subsection{Approximation Schemes}
\label{section:ApproximationSchemes}

An \emph{$\epsilon$-approximate sampling algorithm} for a probability distribution $\mu$ is a randomised algorithm that, for any $\epsilon>0$, outputs a sample from a distribution $\hat{\mu}$ such that $\norm{\mu-\hat{\mu}}_\mathrm{TV}\leq\epsilon$. An \emph{$\epsilon$-approximate counting algorithm} for a positive real number $Z$ is a randomised algorithm that, for any $\epsilon>0$, outputs an estimate $\widehat{Z}$ such that $\abs{Z-\widehat{Z}} \leq \epsilon Z$ with probability at least $\frac{2}{3}$.

\section{Sampling Algorithms}
\label{section:SamplingAlgorithms}

In this section we establish an efficient algorithm for sampling from the polymer distribution $\mu_\mathcal{P}$ based on the polymer dynamics of Chen et al.~\cite{chen2021fast} and the graphlet sampling algorithm of Blanca et al.~\cite{blanca2024fast}.

\subsection{Polymer Dynamics}
\label{section:PolymerDynamics}

In this section we introduce the polymer dynamics, a Markov chain on the admissible sets of a graphlet polymer model whose stationary distribution is $\mu_\mathcal{P}$. This Markov chain was introduced in Ref.~\cite{chen2021fast}.

Let $\mathcal{P}=(\mathcal{C},w,\sim)$ be a graphlet polymer model defined on a graph $G$. For each vertex $v \in V(G)$, define $w_v\coloneqq\sum_{\gamma\in\mathcal{C}_v}w_\gamma$. Let $\pi_v$ denote the probability distribution on $\{\varnothing\}\cup\mathcal{C}_v$ defined by $\pi_v(\gamma)=\frac{w_\gamma}{2}$ for $\gamma\in\mathcal{C}_v$ and $\pi_v(\varnothing)=1-\frac{w_v}{2}$. Note that for polymer models satisfying the vertex incompatibility condition, we have $w_v<1$, and hence $\pi_v$ defines a valid probability distribution on $\{\varnothing\}\cup\mathcal{C}_v$.

\begin{definition}[Polymer dynamics]
    Let $\mathcal{P}=(\mathcal{C},w,\sim)$ be a graphlet polymer model defined on a graph $G$, and let $\mathcal{G}$ denote the collection of all admissible sets of $\mathcal{P}$. The \emph{polymer dynamics} on $\mathcal{P}$ is the Markov chain $(\Gamma_t)_{t\in\mathbb{N}}$ on $\mathcal{G}$ defined by the following transition from $\Gamma_t$:
    \begin{enumerate}
        \item Choose a vertex $v \in V(G)$ uniformly at random, and let
        \begin{equation}
            \gamma^* \coloneqq
            \begin{cases}
                \gamma, & \text{if there exists $\gamma\in\Gamma_t\cap\mathcal{C}_v$}, \\
                \varnothing, & \text{otherwise}.
            \end{cases}
            \notag
        \end{equation}
        \item Choose one of the following two updates:
        \begin{enumerate}
            \item With probability $\frac{1}{3}$, set
            \begin{equation}
                \Gamma_{t+1} = \Gamma_t{\setminus}\{\gamma^*\}. \notag
            \end{equation}
            \item With probability $\frac{2}{3}$, sample $\gamma\sim\pi_v$, and set
            \begin{equation}
                \Gamma_{t+1} =
                \begin{cases}
                    \Gamma_t\cup\{\gamma\}, & \text{if $\Gamma_t\cup\{\gamma\}\in\mathcal{G}$}, \\
                    \Gamma_t, & \text{otherwise}.
                \end{cases}
                \notag
            \end{equation}
        \end{enumerate}
    \end{enumerate}
\end{definition}

The following lemma shows that the polymer dynamics has unique stationary distribution $\mu_\mathcal{P}$.

\begin{lemma}
    \label{lemma:PolymerDynamicsUniqueStationaryDistribution}
    Let $\mathcal{P}$ be a graphlet polymer model satisfying the vertex incompatibility condition. The polymer dynamics on $\mathcal{P}$ has unique stationary distribution $\mu_\mathcal{P}$.
\end{lemma}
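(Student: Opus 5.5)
We show that the chain is irreducible and aperiodic on $\mathcal{G}$ and satisfies detailed balance with respect to $\mu_\mathcal{P}$; uniqueness then follows from standard Markov chain theory. One caveat: $\mathcal{G}$ may be countably infinite if $G$ is infinite. We work with finite $G$, as the paper does implicitly (a vertex is chosen uniformly at random). Then $\mathcal{C}$ is finite and so is $\mathcal{G}$. The vertex incompatibility condition gives $w_v \le \sum_{\gamma\in\mathcal{C}_v}\abs{\gamma}w_\gamma \le \theta<1$. Hence each $\pi_v$ is a genuine probability distribution, and the transition kernel is well defined.

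\emph{Well-definedness and irreducibility.} Since polymers in an admissible set are pairwise compatible, the graphlets in $\Gamma_t$ are vertex-disjoint. So at most one polymer of $\Gamma_t$ lies in $\mathcal{C}_v$, and $\gamma^*$ is well defined. From any $\Gamma\in\mathcal{G}$, each removal step removes a polymer with probability at least $\frac{1}{3\abs{V(G)}}$ (pick any vertex of that polymer). So the empty set is reachable from every $\Gamma$. Conversely, $\Gamma$ can be built from $\varnothing$ by adding its polymers one at a time. Every intermediate set is admissible, since subsets of admissible sets are admissible. Each addition of $\gamma$ occurs with probability at least $\frac{1}{\abs{V(G)}}\cdot\frac{2}{3}\cdot\frac{w_\gamma}{2}>0$, choosing $v\in V(\gamma)$ and using $w_\gamma>0$. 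For aperiodicity, note that at $\Gamma=\varnothing$ the removal update leaves the state unchanged, so $P(\varnothing,\varnothing)>0$. Alternatively, $\pi_v(\varnothing)=1-\frac{w_v}{2}>0$ gives a self-loop at every state.

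\emph{Detailed balance.} This is the main computation. The only non-trivial transitions are between $\Gamma$ and $\Gamma'=\Gamma\cup\{\gamma\}$ with both admissible. We compute each direction by summing over the vertex $v$ chosen.

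For the forward direction, an addition producing $\gamma$ requires $v\in V(\gamma)$ and sampling $\gamma$ from $\pi_v$. Is the result also consistent when $\Gamma$ already contains a polymer at $v$? No such polymer exists: $\gamma$ is compatible with all of $\Gamma$, hence vertex-disjoint from them. Thus
\[
P(\Gamma,\Gamma')=\frac{\abs{V(\gamma)}}{\abs{V(G)}}\cdot\frac{2}{3}\cdot\frac{w_\gamma}{2}=\frac{\abs{\gamma}w_\gamma}{3\abs{V(G)}}.
\]

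For the reverse direction, removing $\gamma$ from $\Gamma'$ requires $v\in V(\gamma)$, which forces $\gamma^*=\gamma$, followed by the removal update. Thus
\[
P(\Gamma',\Gamma)=\frac{\abs{\gamma}}{\abs{V(G)}}\cdot\frac{1}{3}.
\]

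Since $\mu_\mathcal{P}(\Gamma')=w_\gamma\,\mu_\mathcal{P}(\Gamma)$, we get $\mu_\mathcal{P}(\Gamma)P(\Gamma,\Gamma')=\mu_\mathcal{P}(\Gamma')P(\Gamma',\Gamma)$. The factor $\frac{2}{3}\cdot\frac12$ against $\frac13$ is exactly what makes this balance, which explains the choice of constants in the dynamics.

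We also check that no other transitions occur. A removal step changes $\Gamma$ only by deleting one polymer. An addition step either adds a compatible polymer or leaves the state unchanged; adding a polymer already in $\Gamma$ fails admissibility because it is incompatible with itself.

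Reversibility implies $\mu_\mathcal{P}$ is stationary. Together with irreducibility and aperiodicity on the finite space $\mathcal{G}$, the stationary distribution is unique. The only delicate point is the bookkeeping of which vertices $v$ can trigger each transition, which rests on the vertex-disjointness of polymers in admissible sets.
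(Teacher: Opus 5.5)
Your proof is correct and follows essentially the same route as the paper: irreducibility through the empty configuration, aperiodicity from a self-loop, and detailed balance via the identical computation $P(\Gamma,\Gamma\cup\{\gamma\})=\frac{\abs{\gamma}w_\gamma}{3\abs{V(G)}}$ versus $P(\Gamma\cup\{\gamma\},\Gamma)=\frac{\abs{\gamma}}{3\abs{V(G)}}$. Your extra bookkeeping fills in details the paper leaves implicit: compatible graphlets are vertex-disjoint, so $\gamma^*$ is well defined, and no transitions other than single additions or removals can occur.
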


The proof of Lemma~\ref{lemma:PolymerDynamicsUniqueStationaryDistribution} follows that of Ref.~\cite{chen2021fast}.

\begin{proof}
    The polymer dynamics is irreducible since we can transition between any two admissible sets $\Gamma\in\mathcal{G}$ and $\Gamma^*\in\mathcal{G}$ via the empty set. The polymer dynamics is aperiodic since there is a positive probability that no transition occurs. Since the polymer dynamics is finite, irreducible, and aperiodic, it is ergodic, and thus admits a unique stationary distribution. We show that this distribution is $\mu_\mathcal{P}$ by showing that the polymer dynamics satisfies the detailed balance condition.

    A single update of the dynamics changes a configuration $\Gamma$ by the addition or removal of at most one polymer $\gamma^*$. Let $\Gamma^*=\Gamma\cup\{\gamma^*\}$. Then,
    \begin{equation}
        \frac{\mu_\mathcal{P}(\Gamma^*)}{\mu_\mathcal{P}(\Gamma)} = w_{\gamma^*} = \frac{\frac{2\abs{\gamma^*}}{3\abs{G}}\cdot\frac{w_{\gamma^*}}{2}}{\frac{\abs{\gamma^*}}{3\abs{G}}} = \frac{P(\Gamma,\Gamma^*)}{P(\Gamma^*,\Gamma)}, \notag
    \end{equation}
    where $P$ denotes the transition matrix of the polymer dynamics. Therefore, the detailed balance condition holds with respect to $\mu_\mathcal{P}$, and hence $\mu_\mathcal{P}$ is the unique stationary distribution. This completes the proof.
\end{proof}

The following lemma bounds the mixing time of the polymer dynamics. Recall that for a Markov chain on a state space $\mathcal{G}$ with transition matrix $P$ and unique stationary distribution $\mu$, the \emph{mixing time} is defined by $\tau_\text{mix}(\epsilon)\coloneqq\min\left\{t\in\mathbb{Z}^+\mid\max_{\Gamma\in\mathcal{G}}\norm{P^t(\Gamma,\;\cdot\;)-\mu}_\mathrm{TV}\leq\epsilon\right\}$ for $\epsilon>0$. The lemma shows that the polymer dynamics is rapidly mixing.

\begin{lemma}[{\cite[Theorem~2]{chen2021fast}}]
    \label{lemma:PolymerDynamicsMixingTime}
    Let $\mathcal{P}$ be a graphlet polymer model defined on a graph $G$ satisfying the vertex incompatibility condition. The polymer dynamics on $\mathcal{P}$ has mixing time $\tau_\text{mix}(\epsilon)=O(\abs{G}\log(\abs{G}/\epsilon))$.
\end{lemma}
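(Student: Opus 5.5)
We prove the mixing-time bound by path coupling with the Hamming-type metric on $\mathcal{G}$ given by the size of the symmetric difference, weighted by polymer order: for $\Gamma,\Gamma'\in\mathcal{G}$ differing in a single polymer $\gamma_0$ (say $\Gamma'=\Gamma\cup\{\gamma_0\}$), set $d(\Gamma,\Gamma')=\abs{\gamma_0}$, and extend to a path metric on $\mathcal{G}$ (any two admissible sets are joined through the empty set). The diameter of $\mathcal{G}$ under this metric is at most $2\abs{G}$, since an admissible set has total polymer order at most $\abs{G}$ (compatible polymers are vertex-disjoint). By the path coupling theorem of Bubley and Dyer, it suffices to construct a one-step coupling for adjacent pairs with
\begin{equation}
    \mathbb{E}[d(\Gamma_{t+1},\Gamma'_{t+1})] \leq \left(1-\frac{c}{\abs{G}}\right)d(\Gamma,\Gamma') \notag
\end{equation}
for some constant $c>0$ depending only on $\theta$. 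Iterating this over $t$ steps and using the diameter bound then gives $\tau_\text{mix}(\epsilon)=O(\abs{G}\log(\abs{G}/\epsilon))$.

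We use the identity coupling: both chains choose the same vertex $v$, the same update type, and the same sample $\gamma\sim\pi_v$. We then check the three kinds of moves in turn.

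\textbf{Distance decreases.} If $v\in V(\gamma_0)$ and the removal update is chosen, the copy $\Gamma'$ deletes $\gamma_0$ while $\Gamma$ has $\gamma^*=\varnothing$ at $v$, because $\gamma_0$ is the unique polymer of $\Gamma'$ covering $v$. The chains then coalesce. This happens with probability $\frac{\abs{\gamma_0}}{3\abs{G}}$ and reduces the distance by $\abs{\gamma_0}$. The expected contribution is $-\frac{\abs{\gamma_0}^2}{3\abs{G}}$, which is at least as good as $-\frac{\abs{\gamma_0}}{3\abs{G}}$.

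\textbf{Removals elsewhere.} A removal at a vertex outside $V(\gamma_0)$ removes the same polymer in both chains, so the distance is unchanged.

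\textbf{Distance increases.} The only harmful moves are additions of a polymer $\gamma$ that is compatible with $\Gamma$ but incompatible with $\gamma_0$. Such a $\gamma$ is added to $\Gamma$ but not to $\Gamma'$, which increases the distance by $\abs{\gamma}$. When $v\in V(\gamma_0)$ itself, the sampled $\gamma$ intersects $\gamma_0$ and is rejected in $\Gamma'$; it may or may not be accepted in $\Gamma$, and either way the increase is at most $\abs{\gamma}$.

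To bound the increase, note that incompatibility with $\gamma_0$ means $V(\gamma)$ meets $V(\gamma_0)\cup\mathcal{N}(V(\gamma_0))$. The expected increase is at most
\begin{equation}
    \frac{2}{3\abs{G}}\sum_{u}\sum_{\gamma\in\mathcal{C}_u}\abs{\gamma}\frac{w_\gamma}{2}, \notag
\end{equation}
where $u$ ranges over $V(\gamma_0)\cup\mathcal{N}(V(\gamma_0))$. The sum over the chosen vertex $v\in V(\gamma)$ contributes the factor $\abs{\gamma}$, and summing over the vertex $u$ where $\gamma$ meets the neighbourhood accounts for all such polymers. Applying the vertex incompatibility condition $\sum_{\gamma\in\mathcal{C}_u}\abs{\gamma}w_\gamma\leq\theta$ at each $u$, the increase is at most
\begin{equation}
    \frac{\theta}{3\abs{G}}\abs{V(\gamma_0)\cup\mathcal{N}(V(\gamma_0))}. \notag
\end{equation}

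This neighbourhood counting is the main obstacle. The set $\mathcal{N}(V(\gamma_0))$ can have size up to $\Delta\abs{\gamma_0}$, while the lemma as stated involves only $\theta<1$ and no $\Delta$. We therefore need the stronger observation that the neighbourhood vertices can be charged without a $\Delta$ loss. A polymer touching only $\mathcal{N}(V(\gamma_0))$ blocks $\gamma_0$, and in that case $\gamma$ and $\gamma_0$ are adjacent.

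The approach we would try first is to redefine the distance as $\abs{\gamma_0}$ and count conflicts through the vertices of $\gamma$ rather than of $\gamma_0$. Pairs $(v,\gamma)$ with $\gamma\not\sim\gamma_0$ are then organised by an edge or vertex of $\gamma_0$ which $\gamma$ touches. This is the counting used by Chen et al.~\cite{chen2021fast}, and their condition is exactly the vertex incompatibility condition in the form $\sum_{\gamma\not\sim\gamma_0}\abs{\gamma}w_\gamma\leq\theta\abs{\gamma_0}$.

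If that bound is available, the expected net change in distance is at most $-\frac{(1-\theta)\abs{\gamma_0}}{3\abs{G}}$. This gives contraction with $c=(1-\theta)/3$, and hence the claimed mixing time.
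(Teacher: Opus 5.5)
\textbf{Gap: the vertex incompatibility condition does not give the bound you need, and no charging argument can make it do so.} The paper gives no argument of its own here. It cites \cite[Theorem~2]{chen2021fast}, a path-coupling proof whose hypothesis is the \emph{polymer mixing condition}: $\sum_{\gamma\not\sim\gamma_0}\abs{\gamma}w_\gamma\le\theta'\abs{\gamma_0}$ for every polymer $\gamma_0$, with $\theta'<1$. Your identity coupling and case analysis are the right ones, and your last paragraph is essentially that argument. But the proof breaks at the step you flag and then assert away, namely that this condition ``is exactly'' the vertex incompatibility condition. It is not. In a graphlet model, $\gamma\not\sim\gamma_0$ includes polymers that only touch $\partial\gamma_0$. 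Summing the vertex condition over $V(\gamma_0)\cup\partial\gamma_0$ therefore gives only $\sum_{\gamma\not\sim\gamma_0}\abs{\gamma}w_\gamma\le(\Delta+1)\theta\abs{\gamma_0}$.

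The $\Delta$ cannot be removed, because with an arbitrary $\theta\in(0,1)$ the statement is false. Take $\mathcal{C}$ to be the single-vertex graphlets, each with weight $\lambda<1$. The vertex condition holds with $\theta=\lambda$, $\mu_\mathcal{P}$ is the hardcore measure at fugacity $\lambda$, and the polymer dynamics is a lazy single-site Glauber-type dynamics. That dynamics is known to need exponential time on random bipartite $\Delta$-regular graphs once $\lambda>\lambda_c(\Delta)\approx e/\Delta$ (e.g.\ $\lambda=\frac12$ with $\Delta$ large). What can be proved, and what the citation actually supports, is the lemma under $(\Delta+1)\theta<1$. That suffices for the paper: wherever the lemma is used, $\theta=\frac{1}{(e-1)^2\Delta}$ comes from Lemma~\ref{lemma:PolymerWeightVertexIncompatibilityCondition}, so $(\Delta+1)\theta\le\frac{4}{3(e-1)^2}<1$.

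\textbf{Secondary error: the harmful-move accounting under your weighted metric drops a factor.} With $d(\Gamma,\Gamma\cup\{\gamma_0\})=\abs{\gamma_0}$, a harmful addition of $\gamma$ raises the distance by $\abs{\gamma}$ and occurs with probability $\abs{\gamma}w_\gamma/(3\abs{G})$. The expected increase is therefore $\frac{1}{3\abs{G}}\sum_{\gamma\not\sim\gamma_0}\abs{\gamma}^2w_\gamma$. Your display keeps only one factor of $\abs{\gamma}$, and you would need a bound on this second moment, which the hypothesis does not provide. Use the unweighted distance $\abs{\Gamma\triangle\Gamma'}$ instead; its diameter is still at most $2\abs{G}$. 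Then:
\begin{itemize}
\item the coalescing removal contributes $-\frac{\abs{\gamma_0}}{3\abs{G}}$;
\item harmful additions contribute at most $+\frac{(\Delta+1)\theta\abs{\gamma_0}}{3\abs{G}}$;
\item path coupling yields $\tau_\text{mix}(\epsilon)\le\frac{3\abs{G}}{1-(\Delta+1)\theta}\log(2\abs{G}/\epsilon)$.
\end{itemize}
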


It remains to show that each step of the polymer dynamics can be implemented efficiently.

\subsection{Graphlet Sampler}
\label{section:GraphletSampler}

In this section we introduce the graphlet sampler, an algorithm for sampling from an auxiliary distribution $\hat{\pi}_v$ for a given vertex $v \in V(G)$, which will be used as a subroutine to sample from $\pi_v$. The graphlet sampler is a variant of that of Ref.~\cite{blanca2024fast}, adapted to allow for $\kappa$-computable graphlet polymer models and runtime bounds depending on the order of the largest polymer.

Let $\mathcal{P}=(\mathcal{C},w,\sim)$ be a graphlet polymer model defined on a graph $G$. For each vertex $v \in V(G)$, define $\widehat{Z}_v\coloneqq\sum_{\gamma\in\{\varnothing\}\cup\mathcal{C}_v}w_\gamma$, where $w_\varnothing\coloneqq1$. Let $\hat{\pi}_v$ denote the probability distribution on $\{\varnothing\}\cup\mathcal{C}_v$ defined by $\hat{\pi}_v(\gamma)=\frac{w_\gamma}{\widehat{Z}_v}$. For a graphlet $\gamma$, we denote its \emph{boundary} by $\partial\gamma\coloneqq\mathcal{N}(V(\gamma))$, with the convention $\partial\varnothing\coloneqq\{v\}$.

\begin{definition}[Graphlet sampler]
    Fix $\Delta\in\mathbb{Z}_{\geq3}$ and $p,w_\star\in(0,1)$. Let $\mathcal{P}=(\mathcal{C},w,\sim)$ be a graphlet polymer model defined on a graph $G$ of maximum degree at most $\Delta$, and let $v$ be a vertex of $G$. Let $m=\max_{\gamma\in\mathcal{C}_v}\abs{\gamma}$ denote the order of the largest polymer in $\mathcal{C}_v$. Suppose that, for all $\gamma\in\mathcal{C}_v$, the weight $w_\gamma$ satisfies $w_\gamma \leq w_\star^\abs{\gamma}$. The \emph{graphlet sampler} is the following randomised algorithm:
    \begin{enumerate}
        \item Initialise a queue $Q=\varnothing$ and set $S=\varnothing$.
        \item Choose one of the following:
         \begin{enumerate}
            \item With probability $p$, add $v$ to $Q$ and $S$, and set $v$ as processed.
            \item With probability $1-p$, set $v$ as processed.
        \end{enumerate}
        \item While $Q\neq\varnothing$:
        \begin{enumerate}
            \item Remove a vertex $u$ from $Q$.
            \item For each unprocessed neighbour $w$ of $u$, choose one of the following:
            \begin{enumerate}
                \item With probability $p$, add $w$ to $Q$ and $S$, and set $w$ as processed.
                \item With probability $1-p$, set $w$ as processed.
            \end{enumerate}
            \item If the number of vertices in $S$ is greater than $m$, restart the algorithm.
        \end{enumerate}
        \item Let $\gamma$ be the graphlet with vertex set $S$. Choose one of the following:
        \begin{enumerate}
            \item With probability $w_\gamma w_\star^{-\abs{\gamma}}(1-p)^{(\Delta-2)\abs{\gamma}-\abs{\partial\gamma}+2}$, return $\gamma$.
            \item With probability $1-w_\gamma w_\star^{-\abs{\gamma}}(1-p)^{(\Delta-2)\abs{\gamma}-\abs{\partial\gamma}+2}$, restart the algorithm.
        \end{enumerate}
    \end{enumerate}
\end{definition}

The following lemma establishes that the output distribution of the graphlet sampler is $\hat{\pi}_v$ and bounds the expected runtime.

\begin{lemma}
    \label{lemma:GraphletSampler}
    Fix $\Delta\in\mathbb{Z}_{\geq3}$ and $r,\kappa\geq0$. Let $\mathcal{P}=(\mathcal{C},w,\sim)$ be a graphlet polymer model defined on a graph $G$ of maximum degree at most $\Delta$, and let $v$ be a vertex of $G$. Let $m=\max_{\gamma\in\mathcal{C}_v}\abs{\gamma}$ denote the order of the largest polymer in $\mathcal{C}_v$. Suppose that, for all $\gamma\in\mathcal{C}_v$, the weight $w_\gamma$ satisfies
    \begin{equation}
        w_\gamma \leq \left(\frac{1}{e^{r+2}\Delta}\right)^\abs{\gamma}. \notag
    \end{equation}
    Then the output distribution of the graphlet sampler is $\hat{\pi}_v$. Suppose further that $\mathcal{P}$ is $\kappa$-computable. Then the graphlet sampler has expected runtime $O(1)$ for $r\geq\kappa$ and expected runtime $m^{O(1)}e^{(\kappa-r)m}$ for $r<\kappa$.
\end{lemma}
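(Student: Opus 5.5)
The plan is to view one pass of steps 1--3 as a site-percolation exploration of the cluster of $v$, and to tune $p$ and $w_\star$ so that the acceptance step in step 4 turns the percolation law into a law proportional to $w_\gamma$. Correctness then reduces to an exact computation of the probability that a single trial returns a given $\gamma$; the runtime follows from a tail bound on the cluster size via Lemma~\ref{lemma:GraphletCount}.

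\emph{Parameters.} I will set $p\coloneqq\frac{c}{e^{r+1}\Delta}$ with the constant $c=\tfrac12$ and $w_\star\coloneqq p(1-p)^{\Delta-2}$. Since $(1-p)^{\Delta-2}\geq 1-(\Delta-2)p\geq 1-ce^{-r-1}\geq 1-\tfrac{1}{2e}$, and $c(1-\tfrac{1}{2e})>e^{-1}$, we get $w_\star\geq(e^{r+2}\Delta)^{-1}$. The hypothesis therefore gives $w_\gamma\leq w_\star^{\abs{\gamma}}$, as the sampler requires.

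\emph{Output distribution.} Consider a single trial, ignoring the size cutoff for now. The exploration is a breadth-first search in which each vertex is examined at most once. Hence, for a fixed graphlet $\gamma\ni v$, the trial produces $S=V(\gamma)$ exactly when every vertex of $\gamma$ is opened and every vertex of $\partial\gamma$ is closed. This has probability $p^{\abs{\gamma}}(1-p)^{\abs{\partial\gamma}}$. For $S=\varnothing$ the probability is $1-p=(1-p)^{\abs{\partial\varnothing}}$, which is consistent with the convention $\partial\varnothing=\{v\}$.

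Next, the acceptance probability in step 4 must lie in $[0,1]$. This needs $\abs{\partial\gamma}\leq(\Delta-2)\abs{\gamma}+2$. That bound holds because $\gamma$ is connected: a spanning tree of $\gamma$ uses $2(\abs{\gamma}-1)$ of the at most $\Delta\abs{\gamma}$ edge-endpoint slots inside $\gamma$, and every boundary vertex needs at least one of the remaining slots. Together with $w_\gamma\leq w_\star^{\abs{\gamma}}$, this shows the acceptance probability is valid.

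Multiplying the two factors, the probability that a trial returns $\gamma$ is
\[
w_\gamma\bigl(p(1-p)^{\Delta-2}/w_\star\bigr)^{\abs{\gamma}}(1-p)^2=(1-p)^2w_\gamma ,
\]
and the same formula holds for $\gamma=\varnothing$ with $w_\varnothing=1$. Graphlets with $\abs{S}>m$ have zero weight, so they would be rejected in step 4 anyway. Restarting early on them therefore does not change the conditional law of the output. Since every trial is independent, conditioning on acceptance yields exactly $\hat\pi_v$. Each trial succeeds with probability $(1-p)^2\widehat{Z}_v\geq(1-p)^2\geq\tfrac14$, so the expected number of trials is $O(1)$.

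\emph{Runtime.} A trial that reaches $\abs{S}=k$ costs $O(\Delta k)$ for the exploration, plus the cost of evaluating $w_\gamma$, which is $e^{\kappa k}k^{O(1)}$ by $\kappa$-computability. This evaluation happens only when $k\leq m$. Lemma~\ref{lemma:GraphletCount} gives $\Pr[\abs{S}=k]\leq(e\Delta)^{k-1}p^k\leq(ce^{-r})^k$. The expected cost per trial is therefore at most
\[
\sum_{k\leq m}k^{O(1)}\bigl(ce^{\kappa-r}\bigr)^k .
\]
For $r\geq\kappa$ this is a convergent series, since $c<1$, and hence $O(1)$. For $r<\kappa$ the sum is dominated by its last term and is $m^{O(1)}e^{(\kappa-r)m}$. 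Combined with the $O(1)$ expected number of trials (Wald's identity), this gives the claimed runtimes.

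\emph{Main obstacle.} The delicate point is the boundary case $r=\kappa$. Taking $p=(e^{r+1}\Delta)^{-1}$ would leave the ratio $e^{\kappa-r}=1$ with no decay, producing an extra $m^{O(1)}$ factor. The fix is the slack constant $c<1$ in $p$, and I must check that it does not break $w_\star\geq(e^{r+2}\Delta)^{-1}$; the computation above indicates that $c=\tfrac12$ suffices. A secondary point to verify carefully is that BFS exploration with the processed flags realises exactly the event "$\gamma$ open, $\partial\gamma$ closed", with each vertex examined only once.
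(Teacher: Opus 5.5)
Your proof is correct. The correctness half matches the paper's argument: per-trial output probability $(1-p)^2w_\gamma$, harmlessness of the early restart, and conditioning on acceptance. You also check that the step-4 acceptance probability is at most $1$ via $\abs{\partial\gamma}\leq(\Delta-2)\abs{\gamma}+2$, which the paper leaves implicit.

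The runtime half takes a genuinely different route. The paper fixes $w_\star=(e^{r+2}\Delta)^{-1}$ and solves $p(1-p)^{\Delta-2}=w_\star$. It then dominates $\abs{S}$ by the total progeny of a $\operatorname{Poi}((\Delta+1)p)$ Galton--Watson process, with tail $e^{r+1}e^{-(r+e^{-(r+2)})k}$; the extra factor $e^{-e^{-(r+2)}k}$ is what gives decay at $r=\kappa$. You instead bound $\Pr[\abs{S}=k]\leq(e\Delta)^{k-1}p^k$ by a union bound over graphlets using Lemma~\ref{lemma:GraphletCount}. This is more elementary: it needs no Binomial-to-Poisson domination or branching-process tail, only a lemma the paper already cites. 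It also gives a geometric ratio bounded away from $1$ uniformly in $r\geq\kappa$. By contrast, the paper's weak decay rate $e^{-(r+2)}$ gives a constant that can grow like $e^{O(r)}$ near $r=\kappa$.

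The obstacle you flag at $r=\kappa$ is also milder than you suggest. Under the paper's parametrization, $p\leq ew_\star$ gives $(1-p)^{\Delta-2}\geq1-(\Delta-2)p\geq1-e^{-1}$, and hence $e\Delta p\leq e^{-r}/(e-1)$. So your union bound already decays geometrically at $r=\kappa$ without the constant $c$; the slack is only needed relative to the naive choice $p=(e^{r+1}\Delta)^{-1}$. Since the lemma does not fix $p$ and $w_\star$, your parametrization is equally admissible.
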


The proof of Lemma~\ref{lemma:GraphletSampler} follows that of Ref.~\cite{blanca2024fast}, with modifications to allow for $\kappa$-computable graphlet polymer models and runtime bounds depending on the order of the largest polymer.

\begin{proof}
    Let $w_\star\coloneqq\frac{1}{e^{r+2}\Delta}$, and let $p\in\left[0,\frac{1}{\Delta-1}\right]$ be the unique solution to $f(p)=w_\star$, where $f(x) \coloneqq x(1-x)^{\Delta-2}$, which exists by monotonicity of $f$ on $\left[0,\frac{1}{\Delta-1}\right]$.
    
    We first show that the output distribution of the graphlet sampler is $\hat{\pi}_v$. The probability of outputting $\gamma\in\{\varnothing\}\cup\mathcal{C}_v$ in an iteration of the algorithm is
    \begin{align}
        p^\abs{\gamma}(1-p)^\abs{\partial\gamma}w_\gamma w_\star^{-\abs{\gamma}}(1-p)^{(\Delta-2)\abs{\gamma}-\abs{\partial\gamma}+2} &= \left(p(1-p)^{\Delta-2}\right)^\abs{\gamma}w_\star^{-\abs{\gamma}}(1-p)^2w_\gamma \notag \\
        &= (1-p)^2w_\gamma, \notag
    \end{align}
    where we have used $p(1-p)^{\Delta-2}=w_\star$. The probability of acceptance per iteration is then
    \begin{equation}
        (1-p)^2\sum_{\gamma\in\{\varnothing\}\cup\mathcal{C}_v}w_\gamma = (1-p)^2\widehat{Z}_v. \notag
    \end{equation}
    Therefore, the probability that $\gamma\in\{\varnothing\}\cup\mathcal{C}_v$ is the output of the graphlet sampler is
    \begin{equation}
        \sum_{t=1}^\infty\left(1-(1-p)^2\widehat{Z}_v\right)^{t-1}(1-p)^2w_\gamma = \frac{w_\gamma}{\widehat{Z}_v} = \hat{\pi}_v(\gamma). \notag
    \end{equation}
    Hence, the graphlet sampler outputs a sample from $\hat{\pi}_v$.

    We next bound the expected runtime of the graphlet sampler. In each iteration, the breadth-first search percolation produces $\gamma$ in time $O(\abs{\gamma})$. Since $\mathcal{P}$ is $\kappa$-computable, each iteration takes time $e^{\kappa\abs{\gamma}}\abs{\gamma}^{O(1)}$ when $\abs{\gamma} \leq m$, and time $\abs{\gamma}^{O(1)}$ when $\abs{\gamma} > m$, as the polymer is rejected without computing its weight. Let $\nu_v$ denote the output probability distribution of the percolation step. The expected runtime per iteration is therefore
    \begin{equation}
        \mathbb{E}_{\nu_v}\left[\left(e^{\kappa\abs{\gamma}}\mathds{1}_{\abs{\gamma} \leq m}+\mathds{1}_{\abs{\gamma} > m}\right)\abs{\gamma}^{O(1)}\right]. \notag
    \end{equation}

    We now show that the order $\abs{\gamma}$ of $\gamma$ under $\nu_v$ is stochastically dominated by the total progeny of a Galton--Watson process with offspring distribution $\operatorname{Poi}(\lambda)$, where $\lambda\coloneqq(\Delta+1)p$. Recall that a Galton--Watson process is a branching process initiated from a single particle, where each particle independently produces a random number of offspring according to a fixed distribution; the total progeny $Z$ is the total number of particles. Let $\nu_\lambda$ denote the probability distribution of $Z$ under this process.

    The breadth-first search percolation includes each of at most $\Delta$ neighbours independently with probability $p$. Therefore, $\abs{\gamma}$ is stochastically dominated by the total progeny of a Galton--Watson process with offspring distribution $\operatorname{Bin}(\Delta,p)$. Since $\operatorname{Bin}(\Delta,p)$ is stochastically dominated by $\operatorname{Poi}(\lambda)$, $\abs{\gamma}$ is stochastically dominated by $Z\sim\nu_\lambda$.

    By monotonicity of $f(x)=x(1-x)^{\Delta-2}$ on $\left[0,\frac{1}{\Delta-1}\right]$, we have that $e^{-(r+2)} \leq \lambda \leq e^{-(r+1)}$, and hence the Galton--Watson process is subcritical, and $Z$ is almost surely finite. For a subcritical Galton--Watson process with offspring distribution $\operatorname{Poi}(\lambda)$, the total progeny satisfies (see for example Ref.~\cite{roch2023modern})
    \begin{equation}
        \Pr_{\nu_\lambda}\left[Z \geq k\right] \leq \lambda^{k-1}e^{(1-\lambda)k} \leq e^{r+1}e^{-\left(r+e^{-(r+2)}\right)k}. \notag
    \end{equation}
    Therefore, the expected runtime per iteration satisfies
    \begin{align}
        \mathbb{E}_{\nu_v}\left[\left(e^{\kappa\abs{\gamma}}\mathds{1}_{\abs{\gamma} \leq m}+\mathds{1}_{\abs{\gamma} > m}\right)\abs{\gamma}^{O(1)}\right] &\leq \mathbb{E}_{\nu_v}\left[e^{\kappa\min(\abs{\gamma},m)}\abs{\gamma}^{O(1)}\right] \notag \\
        &\leq \mathbb{E}_{\nu_\lambda}\left[e^{\kappa\min(Z,m)}Z^{O(1)}\right] \notag \\
        &\leq \sum_{k=1}^\infty e^{\kappa\min(k,m)}k^{O(1)}\Pr_{\nu_\lambda}\left[Z \geq k\right] \notag \\
        &\leq e^{r+1}\sum_{k=1}^\infty e^{\kappa\min(k,m)}k^{O(1)}e^{-\left(r+e^{-(r+2)}\right)k} \notag \\
        &= 
        \begin{cases}
            O(1), & r\geq\kappa, \\
            m^{O(1)}e^{(\kappa-r)m}, & r<\kappa.
        \end{cases}
        \notag
    \end{align}
    The total expected runtime of the graphlet sampler is then
    \begin{equation}
        \frac{\mathbb{E}_{\nu_v}\left[\left(e^{\kappa\abs{\gamma}}\mathds{1}_{\abs{\gamma} \leq m}+\mathds{1}_{\abs{\gamma} > m}\right)\abs{\gamma}^{O(1)}\right]}{(1-p)^2\widehat{Z}_v} = 
        \begin{cases}
            O(1), & r\geq\kappa, \\
            m^{O(1)}e^{(\kappa-r)m}, & r<\kappa.
        \end{cases}
        \notag
    \end{equation}
    This completes the proof.
\end{proof}

We now introduce probabilistic polymer models and show that it suffices to compute polymer weights via an unbiased estimator.

Let $\mathcal{P}=(\mathcal{C},w,\sim)$ be a polymer model. A \emph{probabilistic polymer model} of $\mathcal{P}$ is a polymer model $\mathcal{P}_\circ=(\mathcal{C},\tilde{w},\sim)$ where, for each $\gamma\in\mathcal{C}$, the weight $\tilde{w}_\gamma$ is a non-negative random variable satisfying $\mathbb{E}\left[\tilde{w}_\gamma\right]=w_\gamma$. A probabilistic polymer model $\mathcal{P}_\circ$ is $\kappa$-computable if, for all polymers $\gamma\in\mathcal{C}$, a sample of $\tilde{w}_\gamma$ can be obtained in expected time $e^{\kappa\abs{\gamma}}\abs{\gamma}^{O(1)}$.

The following lemma shows that the graphlet sampler extends to probabilistic polymer models with output distribution $\hat{\pi}_v$, provided the estimator is bounded by the weight decay condition.

\begin{lemma}
    \label{lemma:ProbabilisticGraphletSampler}
    Fix $\Delta\in\mathbb{Z}_{\geq3}$ and $r,\kappa\geq0$. Let $\mathcal{P}_\circ=(\mathcal{C},\tilde{w},\sim)$ be a probabilistic polymer model of a graphlet polymer model $\mathcal{P}=(\mathcal{C},w,\sim)$ defined on a graph $G$ of maximum degree at most $\Delta$, and let $v$ be a vertex of $G$. Let $m=\max_{\gamma\in\mathcal{C}_v}\abs{\gamma}$ denote the order of the largest polymer in $\mathcal{C}_v$. Suppose that, for all $\gamma\in\mathcal{C}_v$,
    \begin{equation}
        \tilde{w}_\gamma \leq \left(\frac{1}{e^{r+2}\Delta}\right)^\abs{\gamma}. \notag
    \end{equation}
    Then the output distribution of the graphlet sampler using weights $\tilde{w}_\gamma$ is $\hat{\pi}_v$. Suppose further that $\mathcal{P}_\circ$ is $\kappa$-computable. Then the graphlet sampler has expected runtime $O(1)$ for $r\geq\kappa$ and expected runtime $m^{O(1)}e^{(\kappa-r)m}$ for $r<\kappa$.
\end{lemma}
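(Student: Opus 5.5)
We follow the proof of Lemma~\ref{lemma:GraphletSampler} essentially verbatim, with $w_\gamma$ replaced by a fresh independent sample of $\tilde{w}_\gamma$ in the acceptance step. As before, set $w_\star\coloneqq\frac{1}{e^{r+2}\Delta}$ and take $p\in\left[0,\frac{1}{\Delta-1}\right]$ with $p(1-p)^{\Delta-2}=w_\star$.

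The first task is to check that the acceptance step is well defined. Since $\tilde{w}_\gamma\leq w_\star^{\abs{\gamma}}$ almost surely, and $(1-p)^{(\Delta-2)\abs{\gamma}-\abs{\partial\gamma}+2}\leq1$ whenever the exponent is non-negative (we use $\abs{\partial\gamma}\leq(\Delta-2)\abs{\gamma}+2$, exactly as in the deterministic case), the quantity $\tilde{w}_\gamma w_\star^{-\abs{\gamma}}(1-p)^{(\Delta-2)\abs{\gamma}-\abs{\partial\gamma}+2}$ lies in $[0,1]$. For $\gamma=\varnothing$ the weight is $1$ deterministically. Hence this quantity is a genuine probability.

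Next we compute the output distribution. Condition on the percolation producing $\gamma$; this happens with probability $p^{\abs{\gamma}}(1-p)^{\abs{\partial\gamma}}$. Averaging the acceptance probability over the independent sample of $\tilde{w}_\gamma$ and using linearity, which is valid because the acceptance probability is linear in $\tilde{w}_\gamma$, gives $\mathbb{E}[\tilde{w}_\gamma]w_\star^{-\abs{\gamma}}(1-p)^{\cdots}=w_\gamma w_\star^{-\abs{\gamma}}(1-p)^{\cdots}$. So the per-iteration probability of outputting $\gamma$ is $(1-p)^2w_\gamma$, exactly as before. Iterations are independent, since each restart draws fresh randomness, including a fresh estimator sample. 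Summing the geometric series therefore yields output distribution $w_\gamma/\widehat{Z}_v=\hat{\pi}_v(\gamma)$. This linearity step is the crux of the argument, and it is the reason we need both independence of the estimator samples across iterations and the almost-sure bound. If $\tilde{w}_\gamma$ could exceed $w_\star^{\abs{\gamma}}$, clipping to $1$ would break linearity and bias the output.

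Finally, the runtime analysis is unchanged. The percolation law $\nu_v$ does not depend on the weights, so the Galton--Watson domination by $\operatorname{Poi}(\lambda)$ with $\lambda=(\Delta+1)p\leq e^{-(r+1)}$ carries over. The cost of an iteration is $O(\abs{\gamma})$ for the percolation, plus the expected time $e^{\kappa\abs{\gamma}}\abs{\gamma}^{O(1)}$ to draw $\tilde{w}_\gamma$, which is incurred only when $\abs{\gamma}\leq m$. Taking the expectation over the estimator's runtime inside the expectation over $\nu_v$ (Wald/tower property) gives the same expression as before, namely $O(1)$ for $r\geq\kappa$ and $m^{O(1)}e^{(\kappa-r)m}$ for $r<\kappa$. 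The number of iterations is geometric with success probability $(1-p)^2\widehat{Z}_v\geq(1-p)^2\geq(1-\tfrac{1}{2})^2$. Because every iteration is independent with identical expected cost, Wald's identity shows that the total expected runtime is the per-iteration expected cost divided by this constant, which gives the claimed bounds.
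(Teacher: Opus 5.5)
Your proposal is correct and takes the same approach as the paper, whose proof is just the observation that the probability of outputting $\gamma$ in an iteration is $\mathbb{E}\left[(1-p)^2\tilde{w}_\gamma\right]=(1-p)^2w_\gamma$, after which the rest of the proof of Lemma~\ref{lemma:GraphletSampler} carries over unchanged. Your extra checks fill in details the paper leaves implicit: the acceptance probability lies in $[0,1]$ via $\abs{\partial\gamma}\leq(\Delta-2)\abs{\gamma}+2$, fresh estimator samples make the iterations independent, and Wald's identity accounts for the random estimator cost.
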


\begin{proof}
    The proof follows that of Lemma~\ref{lemma:GraphletSampler}, with the observation that the probability of outputting $\gamma\in\{\varnothing\}\cup\mathcal{C}_v$ in an iteration of the algorithm is $\mathbb{E}\left[(1-p)^2\tilde{w}_\gamma\right]=(1-p)^2w_\gamma$.
\end{proof}

\subsection{Single Polymer Sampler}
\label{section:SinglePolymerSampler}

In this section we introduce the single polymer sampler, a procedure for sampling from $\pi_v$ given a vertex $v \in V(G)$, using the graphlet sampler as a subroutine. The single polymer sampler is a variant of that of Ref.~\cite{blanca2024fast}.

\begin{definition}[Single polymer sampler]
    Let $\mathcal{P}=(\mathcal{C},w,\sim)$ be a graphlet polymer model defined on a graph $G$, and let $v$ be a vertex of $G$. The \emph{single polymer sampler} is the following randomised algorithm:
    \begin{enumerate}
        \item Run the graphlet sampler to obtain a sample $\gamma\sim\hat{\pi}_v$.
        \item Using an independent iteration of the graphlet sampler, obtain a sample $X\sim\operatorname{Ber}\left(\frac{\widehat{Z}_v}{2}\right)$.
        \item Choose one of the following:
        \begin{enumerate}
            \item If $X=1$, return $\gamma$.
            \item If $X=0$, return $\varnothing$.
        \end{enumerate}
    \end{enumerate}
\end{definition}

The following lemma establishes that the output distribution of the single polymer sampler is $\pi_v$ and bounds the expected runtime.

\begin{lemma}
    \label{lemma:SinglePolymerSampler}
    Fix $\Delta\in\mathbb{Z}_{\geq3}$ and $r,\kappa\geq0$. Let $\mathcal{P}=(\mathcal{C},w,\sim)$ be a graphlet polymer model defined on a graph $G$ of maximum degree at most $\Delta$, and let $v$ be a vertex of $G$. Let $m=\max_{\gamma\in\mathcal{C}_v}\abs{\gamma}$ denote the order of the largest polymer in $\mathcal{C}_v$. Suppose that, for all $\gamma\in\mathcal{C}_v$, the weight $w_\gamma$ satisfies
    \begin{equation}
        w_\gamma \leq \left(\frac{1}{e^{r+2}\Delta}\right)^\abs{\gamma}. \notag
    \end{equation}
    Then the output distribution of the single polymer sampler is $\pi_v$. Suppose further that $\mathcal{P}$ is $\kappa$-computable. Then the single polymer sampler has expected runtime $O(1)$ for $r\geq\kappa$ and expected runtime $m^{O(1)}e^{(\kappa-r)m}$ for $r<\kappa$.
\end{lemma}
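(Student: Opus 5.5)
The argument has two parts. First I show that the output law is $\pi_v$, assuming step 2 really produces a $\operatorname{Ber}(\widehat{Z}_v/2)$ variable independent of $\gamma$. Then I explain how step 2 does this using one independent run of the graphlet sampler, and bound the runtime with Lemma~\ref{lemma:GraphletSampler}.

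\emph{Correctness given step 2.} By Lemma~\ref{lemma:GraphletSampler}, step 1 returns $\gamma\sim\hat{\pi}_v$. If $X\sim\operatorname{Ber}(\widehat{Z}_v/2)$ is independent of $\gamma$, then for each $\gamma\in\mathcal{C}_v$ the output equals $\gamma$ with probability
\begin{equation}
\frac{\widehat{Z}_v}{2}\cdot\frac{w_\gamma}{\widehat{Z}_v}=\frac{w_\gamma}{2}=\pi_v(\gamma). \notag
\end{equation}
The output is $\varnothing$ with the remaining probability $1-\frac{w_v}{2}=\pi_v(\varnothing)$. Note that $\widehat{Z}_v=1+w_v<2$, because the weight bound together with Lemma~\ref{lemma:PolymerWeightVertexIncompatibilityCondition} gives $w_v\le\theta<1$. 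So the Bernoulli parameter is a valid probability.

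\emph{Producing $X$ from one iteration of the graphlet sampler.} From the proof of Lemma~\ref{lemma:GraphletSampler}, a single iteration (percolation followed by the acceptance test, with no restart) accepts with probability exactly $(1-p)^2\widehat{Z}_v$. This probability is computed by summing over all outcomes, including the event that $\abs{S}$ exceeds $m$, which counts as a rejection. Let $A$ be the indicator that this iteration accepts, and independently let $B\sim\operatorname{Ber}\bigl(\frac{1}{2(1-p)^2}\bigr)$. Set $X=A\cdot B$. Then $\Pr[X=1]=\widehat{Z}_v/2$, as required. This needs $2(1-p)^2\ge1$, which holds because $p\le\frac{1}{\Delta-1}\le\frac12$ gives $(1-p)^2\ge\frac14$. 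That is not enough by itself, so I will use the stronger bound $\lambda=(\Delta+1)p\le e^{-(r+1)}$ from the proof of Lemma~\ref{lemma:GraphletSampler}. With $\Delta\ge3$ this gives $p\le e^{-1}/4<0.1$, so $(1-p)^2>0.81>\frac12$.

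I expect this step to be the main obstacle. The point is to read ``an independent iteration'' as a single percolation-plus-acceptance trial, not the full restarting sampler, and then check the numerical constraint on $p$. If that constraint failed, I would fall back to a Bernoulli factory. The preferred route is the direct thinning above.

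\emph{Runtime.} Step 1 is one run of the graphlet sampler. Its expected cost is $O(1)$ when $r\ge\kappa$ and $m^{O(1)}e^{(\kappa-r)m}$ when $r<\kappa$, by Lemma~\ref{lemma:GraphletSampler}. Step 2 is a single iteration. Its expected cost is the per-iteration expectation already bounded in that proof, which falls in the same two cases. Step 3 costs $O(1)$. Adding these gives the stated bounds.
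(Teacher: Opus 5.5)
Your proposal is correct and follows essentially the same route as the paper. The paper likewise checks $\widehat{Z}_v<2$ via the vertex incompatibility condition. It then obtains $\operatorname{Ber}(\widehat{Z}_v/2)$ by thinning the acceptance event of a single graphlet-sampler iteration, which has probability $(1-p)^2\widehat{Z}_v$, with an extra coin of bias $\frac{1}{2(1-p)^2}$, justified by $p\le\frac{1}{e(\Delta+1)}$. Finally, it takes the runtime bounds directly from Lemma~\ref{lemma:GraphletSampler}.
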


The proof of Lemma~\ref{lemma:SinglePolymerSampler} follows that of Ref.~\cite{blanca2024fast}.

\begin{proof}
    We first show that the output distribution of the single polymer sampler is $\pi_v$. Since $\mathcal{P}$ satisfies the vertex incompatibility condition by Lemma~\ref{lemma:PolymerWeightVertexIncompatibilityCondition}, we have 
    \begin{equation}
        \widehat{Z}_v = 1+\sum_{\gamma\in\mathcal{C}_v}w_\gamma \leq 1+\sum_{\gamma\in\mathcal{C}_v}\abs{\gamma}w_\gamma < 2, \notag
    \end{equation}
    and hence $\frac{\widehat{Z}_v}{2}$ is a valid probability. By Lemma~\ref{lemma:GraphletSampler}, the graphlet sampler outputs a sample from $\hat{\pi}_v$. The probability of outputting $\gamma\in\mathcal{C}_v$ is therefore
    \begin{equation}
        \hat{\pi}_v(\gamma)\frac{\widehat{Z}_v}{2} = \frac{w_\gamma}{\widehat{Z}_v}\cdot\frac{\widehat{Z}_v}{2} = \frac{w_\gamma}{2} = \pi_v(\gamma), \notag
    \end{equation}
    and the probability of outputting $\varnothing$ is
    \begin{equation}
        1-\sum_{\gamma\in\mathcal{C}_v}\frac{w_\gamma}{2} = 1-\frac{w_v}{2} = \pi_v(\varnothing). \notag
    \end{equation}
    Hence, the single polymer sampler outputs a sample from $\pi_v$.

    We next show that a sample from $\operatorname{Ber}\left(\frac{\widehat{Z}_v}{2}\right)$ can be obtained from a single iteration of the graphlet sampler. Recall from the proof of Lemma~\ref{lemma:GraphletSampler} that an iteration of the graphlet sampler accepts with probability $(1-p)^2\widehat{Z}_v$, and that $p\leq\frac{1}{e(\Delta+1)}$. Observe that we can sample from $\operatorname{Ber}\left(\frac{\widehat{Z}_v}{2}\right)$ by running a single iteration of the graphlet sampler and outputting $1$ with probability $\frac{1}{2(1-p)^2}$ if the iteration accepts and $0$ otherwise. Indeed, $\frac{1}{2(1-p)^2}$ is a valid probability, and the probability of outputting $1$ is $\frac{\widehat{Z}_v}{2}$.

    We now bound the expected runtime of the single polymer sampler. By Lemma~\ref{lemma:GraphletSampler}, the graphlet sampler has expected runtime $O(1)$ for $r\geq\kappa$ and $m^{O(1)}e^{(\kappa-r)m}$ for $r<\kappa$. Therefore, the total expected runtime of the single polymer sampler is $O(1)$ for $r\geq\kappa$ and $m^{O(1)}e^{(\kappa-r)m}$ for $r<\kappa$, completing the proof.
\end{proof}

\begin{remark}
    The single polymer sampler outputs a perfect sample from $\pi_v$, and so can be used as a subroutine in the polymer dynamics bounding chain of Ref.~\cite{blanca2024fast} to obtain a perfect sample from $\mu_\mathcal{P}$ via coupling from the past.
\end{remark}

\subsection{Polymer Sampler}
\label{section:PolymerSampler}

We now combine the polymer dynamics with the single polymer sampler to obtain our main sampling result.

\begin{theorem}
    \label{theorem:PolymerSamplingAlgorithm}
    Fix $\Delta\in\mathbb{Z}_{\geq3}$ and $r,\kappa\geq0$. Let $\mathcal{P}=(\mathcal{C},w,\sim)$ be a graphlet polymer model defined on a graph $G$ of maximum degree at most $\Delta$. Let $m=\max_{\gamma\in\mathcal{C}}\abs{\gamma}$ denote the order of the largest polymer in $\mathcal{C}$. Suppose that, for all $\gamma\in\mathcal{C}$, the weight $w_\gamma$ satisfies
    \begin{equation}
        w_\gamma \leq \left(\frac{1}{e^{r+2}\Delta}\right)^\abs{\gamma}. \notag
    \end{equation}
    Suppose further that $\mathcal{P}$ is $\kappa$-computable. Then, for any $\epsilon>0$, there is an $\epsilon$-approximate sampling algorithm for $\mu_\mathcal{P}$ with expected runtime $O(\abs{G}\log(\abs{G}/\epsilon))$ for $r\geq\kappa$ and expected runtime $\abs{G}\log(\abs{G}/\epsilon)m^{O(1)}e^{(\kappa-r)m}$ for $r<\kappa$.
\end{theorem}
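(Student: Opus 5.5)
The algorithm runs the polymer dynamics from the empty configuration $\Gamma_0=\varnothing$ for $T=\tau_\text{mix}(\epsilon)$ steps and outputs $\Gamma_T$. The plan is to verify the hypotheses of the earlier lemmas, bound the total expected runtime, and check the output guarantee.

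First, since $r\geq0$, the weight bound implies $w_\gamma\leq(e^2\Delta)^{-\abs{\gamma}}$. Lemma~\ref{lemma:PolymerWeightVertexIncompatibilityCondition} then gives the vertex incompatibility condition with $\theta=\frac{1}{(e-1)^2\Delta}<1$. So each $\pi_v$ is a valid distribution, and by Lemmas~\ref{lemma:PolymerDynamicsUniqueStationaryDistribution} and~\ref{lemma:PolymerDynamicsMixingTime} the dynamics has stationary distribution $\mu_\mathcal{P}$ and mixing time $T=O(\abs{G}\log(\abs{G}/\epsilon))$. By the definition of mixing time, $\norm{P^T(\varnothing,\cdot)-\mu_\mathcal{P}}_\mathrm{TV}\leq\epsilon$, so the output is $\epsilon$-approximate, provided each step is simulated exactly.

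Second, we implement a single step. We choose $v$ uniformly and locate $\gamma^*$, the unique polymer of $\Gamma_t$ containing $v$, if one exists. For this we maintain an array mapping each vertex to the polymer covering it. With probability $\frac13$ we remove $\gamma^*$ and update the array in time $O(\abs{\gamma^*})$. Otherwise we draw $\gamma\sim\pi_v$ using the single polymer sampler. Its hypotheses hold for every $v$, because $\max_{\gamma\in\mathcal{C}_v}\abs{\gamma}\leq m$. By Lemma~\ref{lemma:SinglePolymerSampler}, the sample is exact and has expected cost $O(1)$ when $r\geq\kappa$ and $m^{O(1)}e^{(\kappa-r)m}$ when $r<\kappa$. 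Note that this bound is monotone in the largest-polymer order, so substituting $m$ is legitimate.

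Admissibility of $\Gamma_t\cup\{\gamma\}$ means that $G[V(\gamma)\cup V(\gamma')]$ is disconnected for all $\gamma'\in\Gamma_t$. This is equivalent to requiring that no vertex of $V(\gamma)\cup\partial\gamma$ lies in $V(\Gamma_t)$, since graphlets are connected. It can therefore be checked through the array in time $O(\Delta\abs{\gamma})$, followed by an insertion of the same cost.

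Third, we sum the costs. Lemma~\ref{lemma:SinglePolymerSampler} bounds the expected cost of a step by $O(1)$, respectively $m^{O(1)}e^{(\kappa-r)m}$, regardless of the current state. The bookkeeping costs are $O(\abs{\gamma})$ for a freshly sampled $\gamma$, which is already dominated by the sampler's own cost of producing $\gamma$. They are also $O(\abs{\gamma^*})$ for removals. I expect the removal costs to be the main point needing care: a removed polymer could be large, up to size $m$. I would handle them by amortisation. Every polymer removed was previously inserted, and its insertion was paid for by a sampler call, so the total removal cost is at most the total insertion cost.

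Linearity of expectation over the $T$ steps then gives expected runtime $O(T)$ for $r\geq\kappa$ and $T\,m^{O(1)}e^{(\kappa-r)m}$ for $r<\kappa$. With $T=O(\abs{G}\log(\abs{G}/\epsilon))$ these are the claimed bounds. Randomness in the runtime does not bias the output, because each step's sample is exact.
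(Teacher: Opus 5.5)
Your proposal is correct and follows the same route as the paper. You obtain the vertex incompatibility condition from Lemma~\ref{lemma:PolymerWeightVertexIncompatibilityCondition}, take stationarity and the mixing time from Lemmas~\ref{lemma:PolymerDynamicsUniqueStationaryDistribution} and~\ref{lemma:PolymerDynamicsMixingTime}, and sample $\pi_v$ exactly at each step with the single polymer sampler of Lemma~\ref{lemma:SinglePolymerSampler}; the paper leaves the bookkeeping implicit, and your vertex-to-polymer array, admissibility check via $V(\gamma)\cup\partial\gamma$, and amortisation of removal costs against insertion costs correctly fill in what its ``the claimed runtime then follows'' glosses over.
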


\begin{proof}
    Since $\mathcal{P}$ satisfies the vertex incompatibility condition by Lemma~\ref{lemma:PolymerWeightVertexIncompatibilityCondition}, the polymer dynamics on $\mathcal{P}$ has unique stationary distribution $\mu_\mathcal{P}$ by Lemma~\ref{lemma:PolymerDynamicsUniqueStationaryDistribution} and mixing time $\tau_\text{mix}(\epsilon)=O(\abs{G}\log(\abs{G}/\epsilon))$ by Lemma~\ref{lemma:PolymerDynamicsMixingTime}. Running the polymer dynamics for $\tau_\text{mix}(\epsilon)$ steps therefore produces a sample from a distribution that is within $\epsilon$ of $\mu_\mathcal{P}$ in total variation distance. Each step of the polymer dynamics requires a sample from $\pi_v$ for a uniformly random vertex $v \in V(G)$, which is provided by the single polymer sampler in expected time $O(1)$ for $r\geq\kappa$ and $m^{O(1)}e^{(\kappa-r)m}$ for $r<\kappa$ by Lemma~\ref{lemma:SinglePolymerSampler}. The claimed runtime then follows, completing the proof.
\end{proof}

\begin{remark}
    Theorem~\ref{theorem:PolymerSamplingAlgorithm} holds when the polymer weights are given by an unbiased estimator satisfying the conditions of Lemma~\ref{lemma:ProbabilisticGraphletSampler}.
\end{remark}

\section{Counting Algorithms}
\label{section:CountingAlgorithms}

In this section we establish an efficient algorithm for approximating the partition function $Z_\mathcal{P}$ based on the simulated annealing approach of Ref.~\cite{chen2021fast} and the polymer sampling algorithm. Our algorithm allows for $\kappa$-computable graphlet polymer models and runtime bounds depending on the order of the largest polymer. Our main counting result is as follows.

\begin{theorem}
    \label{theorem:PolymerCountingAlgorithm}
    Fix $\Delta\in\mathbb{Z}_{\geq3}$ and $r,\kappa\geq0$. Let $\mathcal{P}=(\mathcal{C},w,\sim)$ be a graphlet polymer model defined on a graph $G$ of maximum degree at most $\Delta$. Let $m=\max_{\gamma\in\mathcal{C}}\abs{\gamma}$ denote the order of the largest polymer in $\mathcal{C}$. Suppose that, for all $\gamma\in\mathcal{C}$, the weight $w_\gamma$ satisfies
    \begin{equation}
        w_\gamma \leq \left(\frac{1}{e^{r+2}\Delta}\right)^\abs{\gamma}. \notag
    \end{equation}
    Suppose further that $\mathcal{P}$ is $\kappa$-computable. Then, for any $\epsilon>0$, there is an $\epsilon$-approximate counting algorithm for $Z_\mathcal{P}$ with expected runtime $O(\abs{G}^2\epsilon^{-2}\log(\abs{G}/\epsilon)^2)$ for $r\geq\kappa$ and expected runtime $\abs{G}^2\epsilon^{-2}\log(\abs{G}/\epsilon)^2m^{O(1)}e^{(\kappa-r)m}$ for $r<\kappa$.
\end{theorem}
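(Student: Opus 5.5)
We would follow the simulated annealing scheme of Ref.~\cite{chen2021fast}, with the sampling algorithm of Theorem~\ref{theorem:PolymerSamplingAlgorithm} as the sampling oracle. We order the vertices of $G$ as $v_1,\dots,v_n$ with $n=\abs{G}$. For $0\leq i\leq n$, let $\mathcal{P}_i$ be the graphlet polymer model whose polymers are those $\gamma\in\mathcal{C}$ with $V(\gamma)\subseteq\{v_1,\dots,v_i\}$, so that $\mathcal{P}_i$ is a graphlet polymer model on $G[\{v_1,\dots,v_i\}]$. Then $Z_{\mathcal{P}_0}=1$, $Z_{\mathcal{P}_n}=Z_\mathcal{P}$, and
\begin{equation}
    Z_\mathcal{P}=\prod_{i=1}^{n}\frac{Z_{\mathcal{P}_i}}{Z_{\mathcal{P}_{i-1}}}. \notag
\end{equation}
Each $\mathcal{P}_i$ inherits the weight bound $w_\gamma\leq(e^{r+2}\Delta)^{-\abs{\gamma}}$, $\kappa$-computability, and the bound $m$ on polymer order. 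Consequently Lemma~\ref{lemma:PolymerWeightVertexIncompatibilityCondition} and Theorem~\ref{theorem:PolymerSamplingAlgorithm} apply to every $\mathcal{P}_i$ (restricting the graph only lowers the maximum degree, and the $\Delta$-based weight bound remains valid).

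For each ratio we use the estimator $Z_{\mathcal{P}_{i-1}}/Z_{\mathcal{P}_i}=\mu_{\mathcal{P}_i}(\Gamma\cap\mathcal{C}_{v_i}=\varnothing)$. This holds because admissible sets of $\mathcal{P}_i$ that contain no polymer through $v_i$ are exactly the admissible sets of $\mathcal{P}_{i-1}$. We then lower bound this probability by a constant. By a union bound over the polymer covering $v_i$ (at most one, since polymers through $v_i$ are pairwise incompatible), combined with the standard fact that $\mu_{\mathcal{P}_i}(\gamma\in\Gamma)\leq w_\gamma$, we get $\mu(\Gamma\cap\mathcal{C}_{v_i}\neq\varnothing)\leq w_{v_i}\leq\theta<1/2$ by Lemma~\ref{lemma:PolymerWeightVertexIncompatibilityCondition}. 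So each ratio $q_i\coloneqq Z_{\mathcal{P}_{i-1}}/Z_{\mathcal{P}_i}$ lies in $[1/2,1]$, and the indicator $X_i$ of the event $\{\Gamma\cap\mathcal{C}_{v_i}=\varnothing\}$ has relative variance $\mathrm{Var}(X_i)/\mathbb{E}[X_i]^2=O(1)$.

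Next we run Theorem~\ref{theorem:PolymerSamplingAlgorithm} with accuracy $\epsilon'=\epsilon/(cn)$ to produce $s=O(n\epsilon^{-2})$ independent approximate samples from each $\mu_{\mathcal{P}_i}$. We average the indicators to obtain $\hat q_i$ and output $\widehat{Z}=\prod_i\hat q_i^{-1}$. The analysis follows the standard Dyer--Frieze / Jerrum--Valiant--Vazirani argument. The sampling bias shifts each mean by at most $\epsilon'$, which multiplies the product by a factor $(1\pm O(\epsilon/n))^n=1\pm O(\epsilon)$. For the fluctuations, the product of independent estimators satisfies $\mathrm{Var}(\prod\hat q_i)/\prod\mathbb{E}[\hat q_i]^2\leq(1+O(1)/s)^n-1=O(\epsilon^2)$, so Chebyshev gives relative error $\epsilon$ with probability at least $2/3$; the inversion is harmless since every $q_i\geq1/2$. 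This requires $ns=O(n^2\epsilon^{-2})$ samples in total. By Theorem~\ref{theorem:PolymerSamplingAlgorithm}, each sample costs expected time $O(n\log(n/\epsilon))$ for $r\geq\kappa$, or $n\log(n/\epsilon)m^{O(1)}e^{(\kappa-r)m}$ for $r<\kappa$. Checking the indicator is free given the configuration.

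The naive total is $n^3\epsilon^{-2}\log(n/\epsilon)$ times the per-step factor, which exceeds the claimed $\abs{G}^2\epsilon^{-2}\log(\abs{G}/\epsilon)^2$. This is the main obstacle. To remove the extra factor of $n$, we would follow Ref.~\cite{chen2021fast} and group the vertices into $O(\log n)$-free batches rather than removing one vertex at a time. Alternatively, we would reuse a single long run of the chain at each stage, reading off $\Theta(\epsilon^{-2}n/\ell)$ samples spaced by the mixing time. The intended accounting is $O(n\epsilon^{-2})$ total samples across all stages, each obtained with one mixing period of $O(n\log(n/\epsilon))$ steps, plus an extra $\log(n/\epsilon)$ factor from the confidence boost or from the mixing accuracy. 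This yields $O(n^2\epsilon^{-2}\log(n/\epsilon)^2)$ single-polymer-sampler calls. The key device is to make each ratio close to $1$ rather than merely bounded below, using the stronger bound $1-q_i\leq w_{v_i}$, so that the relative variance per stage is $O(w_{v_i})$. If the constant-variance version is not enough, we would group consecutive vertices so that each stage's ratio is bounded away from $0$ while the number of stages shrinks accordingly. Multiplying the call count by the per-call cost from Lemma~\ref{lemma:SinglePolymerSampler}, which is $O(1)$ or $m^{O(1)}e^{(\kappa-r)m}$, gives the two stated runtimes.
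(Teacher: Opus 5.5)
Although you announce the simulated annealing of Chen et al., what you actually set up is a vertex-deletion self-reduction. Its correctness analysis is sound: the identity $Z_{\mathcal{P}_{i-1}}/Z_{\mathcal{P}_i}=\mu_{\mathcal{P}_i}(\Gamma\cap\mathcal{C}_{v_i}=\varnothing)$, the bound $q_i\geq1-\theta$, and the bias and Chebyshev accounting all hold. But, as you notice, it only gives expected runtime of order $\abs{G}^3\epsilon^{-2}\log(\abs{G}/\epsilon)$ times the per-call factor, and none of your repairs reaches the stated bound.

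With $\ell$ stages of relative variances $v_i$, the analysis needs about $\ell\sum_i v_i/\epsilon^2$ samples. For indicator estimators $v_i=1/q_i-1\geq\log(1/q_i)$, so $\sum_i v_i\geq\log Z_\mathcal{P}$. Under the hypotheses $\log Z_\mathcal{P}$ can be of order $\abs{G}$, e.g.\ for single-vertex polymers of the maximal allowed weight, which is a hard-core model. So the bound $1-q_i\leq w_{v_i}$ does not make the ratios close to $1$ in any useful sense: $w_{v_i}$ is a constant, not something decaying in $\abs{G}$. With $\ell=\abs{G}$ you need $\Omega(\abs{G}^2\epsilon^{-2})$ samples, each costing a full mixing time. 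Grouping vertices does not help either: by convexity, $\ell\sum_i v_i\geq\ell^2\bigl(e^{\log Z_\mathcal{P}/\ell}-1\bigr)=\Omega\bigl((\log Z_\mathcal{P})^2\bigr)$ for every $\ell$. Reading samples off one long run spaced by the mixing time costs exactly as much as restarting. So your asserted $O(\abs{G}\epsilon^{-2})$ total samples does not follow from anything in the proposal.

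The missing ingredient is a global annealing in which all weights are deformed at once. This is what the paper's one-line proof imports from Chen et al.~\cite{chen2021fast} (their Theorem~6), with Theorem~\ref{theorem:PolymerSamplingAlgorithm} as the sampling oracle. For instance, let $Z(s)$ and $\mu_s$ be the partition function and polymer distribution for the weights $w_\gamma e^{-s\abs{\gamma}}$; these models inherit the weight bound, $m$ and $\kappa$-computability. Set $N(\Gamma)\coloneqq\sum_{\gamma\in\Gamma}\abs{\gamma}$, and estimate $Z(s+\delta)/Z(s)$ by $e^{-\delta N}$ under $\mu_s$. The relative variance $v$ of this stage satisfies $\log(1+v)=f(s)-2f(s+\delta)+f(s+2\delta)$ with $f\coloneqq\log Z$, a second difference rather than the first-order quantity $\log(1/q_i)$ you get from indicators. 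Since $f$ is convex with $f'(s)=-\mathbb{E}_sN$, this is at most $\delta(\mathbb{E}_sN-\mathbb{E}_{s+2\delta}N)$, which telescopes along the schedule. Hence $\sum_i v_i=O(\delta\,\mathbb{E}_0N)=O(\delta\abs{G}\theta)$, provided $\delta\abs{G}\theta\leq1$. Because $\log Z(s)\leq e^{-s}\abs{G}\theta$, one can stop after $\ell=O(\delta^{-1}\log(\abs{G}/\epsilon))$ stages. This gives $O(\abs{G}\epsilon^{-2}\log(\abs{G}/\epsilon))$ samples in total, each costing $O(\abs{G}\log(\abs{G}/\epsilon))$ times the single-polymer-sampler factor, which is exactly the claimed runtime.
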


\begin{proof}
    The proof follows by the simulated annealing approach of Ref.~\cite[Theorem~6]{chen2021fast} with the sampling oracle provided by Theorem~\ref{theorem:PolymerSamplingAlgorithm}.
\end{proof}

\section{Truncated Polymer Models}
\label{section:TruncatedPolymerModels}

In this section we introduce truncated polymer models and show that truncating polymers to a bounded order allows us to apply the polymer sampling and counting algorithms with runtime depending on the truncation order rather than the order of the largest polymer. This comes at the cost of an approximation error in the total variation distance and the partition function.

Let $\mathcal{P}=(\mathcal{C},w,\sim)$ be a graphlet polymer model. For $m\in\mathbb{Z}^+$, the \emph{truncated polymer model} $\mathcal{P}_m$ is defined by $\mathcal{P}_m\coloneqq(\mathcal{C}_m,w,\sim)$, where $\mathcal{C}_m\coloneqq\{\gamma\in\mathcal{C} \mid \abs{\gamma} \leq m\}$. Let $\mathcal{G}_m$ denote the collection of all admissible sets of $\mathcal{P}_m$.

The following lemma shows that, for sufficiently large $m$, $Z_{\mathcal{P}_m}$ is a multiplicative approximation to $Z_\mathcal{P}$ and that $\mu_{\mathcal{P}_m}$ is close to $\mu_\mathcal{P}$ in total variation distance. 

\begin{lemma}
    \label{lemma:TruncatedPolymerModelApproximation}
    Fix $\Delta\in\mathbb{Z}_{\geq3}$ and $\epsilon>0$. Let $\mathcal{P}=(\mathcal{C},w,\sim)$ be a graphlet polymer model defined on a graph $G$ of maximum degree at most $\Delta$. Suppose that, for all $\gamma\in\mathcal{C}$, the weight $w_\gamma$ satisfies
    \begin{equation}
        w_\gamma \leq \left(\frac{1}{e^2\Delta}\right)^\abs{\gamma}. \notag
    \end{equation}
    Then, for $m=O(\log(\abs{G}/\epsilon))$, the truncated polymer model $\mathcal{P}_m$ satisfies $\abs{Z_\mathcal{P}-Z_{\mathcal{P}_m}} \leq \epsilon Z_\mathcal{P}$ and $\norm{\mu_\mathcal{P}-\mu_{\mathcal{P}_m}}_\mathrm{TV}\leq\epsilon$.
\end{lemma}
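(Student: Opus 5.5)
Since every polymer has non-negative weight, $\mathcal{G}_m\subseteq\mathcal{G}$ and the truncated partition function is a sub-sum: $Z_{\mathcal{P}_m}\leq Z_\mathcal{P}$. The ratio is a probability under $\mu_\mathcal{P}$:
\begin{equation}
    \frac{Z_\mathcal{P}-Z_{\mathcal{P}_m}}{Z_\mathcal{P}} = \mu_\mathcal{P}\left(\Gamma\notin\mathcal{G}_m\right) = \Pr_{\Gamma\sim\mu_\mathcal{P}}\left[\exists\gamma\in\Gamma : \abs{\gamma}>m\right]. \notag
\end{equation}
Moreover, $\mu_{\mathcal{P}_m}$ is exactly $\mu_\mathcal{P}$ conditioned on the event $\mathcal{G}_m$. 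For any event $A$ with $\mu(A)=1-\delta$, conditioning on $A$ moves the distribution by exactly $\delta$ in total variation. Hence both claims reduce to showing
\begin{equation}
    \delta\coloneqq\Pr_{\mu_\mathcal{P}}\left[\exists\gamma\in\Gamma:\abs{\gamma}>m\right]\leq\epsilon. \notag
\end{equation}

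To bound $\delta$, I would use a union bound over polymers, $\delta\leq\sum_{\gamma:\abs{\gamma}>m}\Pr_{\mu_\mathcal{P}}[\gamma\in\Gamma]$, together with the standard estimate $\Pr_{\mu_\mathcal{P}}[\gamma\in\Gamma]\leq w_\gamma$. This estimate follows from the injection $\Gamma\mapsto\Gamma\setminus\{\gamma\}$ from admissible sets containing $\gamma$ into admissible sets. That injection gives
\begin{equation}
    \sum_{\Gamma\ni\gamma}\prod_{\gamma'\in\Gamma}w_{\gamma'}\leq w_\gamma Z_\mathcal{P}. \notag
\end{equation}
The same monotonicity is implicit in the detailed-balance ratio $w_{\gamma^*}$ in Lemma~\ref{lemma:PolymerDynamicsUniqueStationaryDistribution}.

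Next I would group polymers by a contained vertex, overcounting harmlessly, and apply Lemma~\ref{lemma:GraphletCount}:
\begin{equation}
    \delta\leq\sum_{v\in V(G)}\sum_{k>m}(e\Delta)^{k-1}\left(\frac{1}{e^2\Delta}\right)^k = \frac{\abs{G}}{e\Delta}\sum_{k>m}e^{-k}\leq\frac{\abs{G}e^{-m}}{e\Delta(e-1)}. \notag
\end{equation}
Choosing $m=\lceil\log(\abs{G}/\epsilon)\rceil$ then makes $\delta\leq\epsilon$. This is $O(\log(\abs{G}/\epsilon))$ as required, and it gives both $\abs{Z_\mathcal{P}-Z_{\mathcal{P}_m}}\leq\epsilon Z_\mathcal{P}$ and $\norm{\mu_\mathcal{P}-\mu_{\mathcal{P}_m}}_\mathrm{TV}=\delta\leq\epsilon$.

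The only step needing real care is the marginal bound $\Pr[\gamma\in\Gamma]\leq w_\gamma$. It holds because removing a polymer from an admissible set preserves admissibility, and all weights are non-negative. The total variation identity for conditioning is routine: the sum $\sum_\Gamma\abs{\mu(\Gamma)-\mu(\Gamma\mid A)}$ splits into $\delta$ from $A^c$ and $(1/(1-\delta)-1)(1-\delta)=\delta$ from $A$, and halving gives $\delta$. No cluster-expansion convergence is needed, which is why the weaker weight condition without $e^r$ suffices.
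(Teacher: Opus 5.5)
Your proof is correct and follows the paper's route. It uses the same union bound over polymers of order exceeding $m$ (via $\Pr[\gamma\in\Gamma]\leq w_\gamma$), the same count from Lemma~\ref{lemma:GraphletCount} giving $\frac{\abs{G}}{e\Delta}\sum_{k>m}e^{-k}$, and the same identity $\norm{\mu_\mathcal{P}-\mu_{\mathcal{P}_m}}_\mathrm{TV}=(Z_\mathcal{P}-Z_{\mathcal{P}_m})/Z_\mathcal{P}$, which you phrase as conditioning. Your injection argument makes explicit the inequality $\sum_{\Gamma\in\mathcal{G}\setminus\mathcal{G}_m}\prod_{\gamma\in\Gamma}w_\gamma\leq Z_\mathcal{P}\sum_{\abs{\gamma}>m}w_\gamma$, which the paper uses without justification.
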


\begin{proof}
    We first show that $Z_{\mathcal{P}_m}$ is a multiplicative approximation to $Z_\mathcal{P}$. We have
    \begin{equation}
        \abs{Z_\mathcal{P}-Z_{\mathcal{P}_m}} = \sum_{\Gamma\in\mathcal{G}{\setminus}\mathcal{G}_m}\prod_{\gamma\in\Gamma}w_\gamma \leq Z_\mathcal{P}\sum_{k=m+1}^\infty\sum_{\substack{\gamma\in\mathcal{C} \\ \abs{\gamma}=k}}w_\gamma. \notag
    \end{equation}
    The number of polymers of order $k$ is at most $\abs{G}(e\Delta)^{k-1}$ by Lemma~\ref{lemma:GraphletCount}. Therefore,
    \begin{equation}
        \abs{Z_\mathcal{P}-Z_{\mathcal{P}_m}} \leq Z_\mathcal{P}\abs{G}\sum_{k=m+1}^\infty(e\Delta)^{k-1}\left(\frac{1}{e^2\Delta}\right)^k =  Z_\mathcal{P}\frac{\abs{G}}{e\Delta}\sum_{k=m+1}^\infty e^{-k} \leq \epsilon Z_\mathcal{P}, \notag
    \end{equation}
    for $m=O(\log(\abs{G}/\epsilon))$. We now bound the total variation distance between $\mu_\mathcal{P}$ and $\mu_{\mathcal{P}_m}$. We have
    \begin{align}
        \norm{\mu_\mathcal{P}-\mu_{\mathcal{P}_m}}_\mathrm{TV} &= \frac{1}{2}\sum_{\Gamma\in\mathcal{G}_m}\left(\mu_{\mathcal{P}_m}(\Gamma)-\mu_\mathcal{P}(\Gamma)\right)+\frac{1}{2}\sum_{\Gamma\in\mathcal{G}{\setminus}\mathcal{G}_m}\mu_\mathcal{P}(\Gamma) \notag \\
        &= \frac{1}{2}\sum_{\Gamma\in\mathcal{G}_m}\mu_{\mathcal{P}_m}(\Gamma)\left(\frac{Z_\mathcal{P}-Z_{\mathcal{P}_m}}{Z_\mathcal{P}}\right)+\frac{1}{2}\left(\frac{Z_\mathcal{P}-Z_{\mathcal{P}_m}}{Z_\mathcal{P}}\right) \notag \\
        &= \frac{Z_\mathcal{P}-Z_{\mathcal{P}_m}}{Z_\mathcal{P}} \notag \\
        &\leq \epsilon, \notag
    \end{align}
    completing the proof.
\end{proof}

The following corollary is a consequence of Theorem~\ref{theorem:PolymerSamplingAlgorithm}, Theorem~\ref{theorem:PolymerCountingAlgorithm}, and Lemma~\ref{lemma:TruncatedPolymerModelApproximation}.

\begin{corollary}
    \label{corollary:TruncatedPolymerAlgorithm}
    Fix $\Delta\in\mathbb{Z}_{\geq3}$ and $r,\kappa\geq0$. Let $\mathcal{P}=(\mathcal{C},w,\sim)$ be a graphlet polymer model defined on a graph $G$ of maximum degree at most $\Delta$. Suppose that, for all $\gamma\in\mathcal{C}$, the weight $w_\gamma$ satisfies
    \begin{equation}
        w_\gamma \leq \left(\frac{1}{e^{r+2}\Delta}\right)^\abs{\gamma}. \notag
    \end{equation}
    Suppose further that $\mathcal{P}$ is $\kappa$-computable. Then, for any $\epsilon>0$, there is an $\epsilon$-approximate sampling algorithm for $\mu_\mathcal{P}$ and an $\epsilon$-approximate counting algorithm for $Z_\mathcal{P}$. The sampling algorithm has expected runtime $O(\abs{G}\log(\abs{G}/\epsilon))$ for $r\geq\kappa$ and expected runtime $\abs{G}^{\kappa-r+1}\epsilon^{-(\kappa-r)}\log(\abs{G}/\epsilon)^{O(1)}$ for $r<\kappa$. The counting algorithm has expected runtime $O(\abs{G}^2\epsilon^{-2}\log(\abs{G}/\epsilon)^2)$ for $r\geq\kappa$ and expected runtime $\abs{G}^{\kappa-r+2}\epsilon^{-(\kappa-r+2)}\log(\abs{G}/\epsilon)^{O(1)}$ for $r<\kappa$.
\end{corollary}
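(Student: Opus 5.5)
The plan is to run the algorithms of Theorem~\ref{theorem:PolymerSamplingAlgorithm} and Theorem~\ref{theorem:PolymerCountingAlgorithm} on the truncated model $\mathcal{P}_m$ rather than on $\mathcal{P}$. The truncation order $m$ will be chosen via Lemma~\ref{lemma:TruncatedPolymerModelApproximation} with error parameter $\epsilon/2$, and the remaining $\epsilon/2$ is given to the algorithm. The truncated model $\mathcal{P}_m=(\mathcal{C}_m,w,\sim)$ is again a graphlet polymer model on $G$. Its weights satisfy the same bound $w_\gamma\le(e^{r+2}\Delta)^{-\abs{\gamma}}$, and this bound is at most $(e^{2}\Delta)^{-\abs{\gamma}}$ since $r\ge0$, so Lemma~\ref{lemma:TruncatedPolymerModelApproximation} applies. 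The model $\mathcal{P}_m$ is still $\kappa$-computable, and its largest polymer has order at most $m$.

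\textbf{Choice of $m$.} I will make the $O(\log(\abs{G}/\epsilon))$ of Lemma~\ref{lemma:TruncatedPolymerModelApproximation} explicit, because the constant in the exponent matters. The tail bound in that proof is $\frac{\abs{G}}{e\Delta}\sum_{k>m}e^{-k}$, which is at most $\frac{\abs{G}}{e\Delta(e-1)}e^{-m}$. Using the stronger weight bound $(e^{r+2}\Delta)^{-\abs{\gamma}}$ in the same computation improves this to a tail of order $\abs{G}e^{-(r+1)m}$. It therefore suffices to take $m=\lceil \log(\abs{G}/\epsilon)/(r+1)\rceil+O(1)$. This is the key quantitative step: it gives $e^{(\kappa-r)m}\le (\abs{G}/\epsilon)^{(\kappa-r)/(r+1)}\cdot O(1)$, which is at most $(\abs{G}/\epsilon)^{\kappa-r}$ up to constants. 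So the plain choice $m=\log(\abs{G}/\epsilon)+O(1)$ already suffices for the stated exponent.

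\textbf{Sampling.} Apply Theorem~\ref{theorem:PolymerSamplingAlgorithm} to $\mathcal{P}_m$ with accuracy $\epsilon/2$. By the triangle inequality together with $\norm{\mu_\mathcal{P}-\mu_{\mathcal{P}_m}}_\mathrm{TV}\le\epsilon/2$, the output is within $\epsilon$ of $\mu_\mathcal{P}$. For $r\ge\kappa$ the runtime is $O(\abs{G}\log(\abs{G}/\epsilon))$ unchanged. For $r<\kappa$ it is $\abs{G}\log(\abs{G}/\epsilon)\,m^{O(1)}e^{(\kappa-r)m}$. Substituting $m=\log(\abs{G}/\epsilon)+O(1)$ gives $\abs{G}^{\kappa-r+1}\epsilon^{-(\kappa-r)}\log(\abs{G}/\epsilon)^{O(1)}$.

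\textbf{Counting.} Apply Theorem~\ref{theorem:PolymerCountingAlgorithm} to $\mathcal{P}_m$ with accuracy $\epsilon/3$. With probability at least $2/3$ this gives $\widehat Z$ with $\abs{\widehat Z-Z_{\mathcal{P}_m}}\le\frac{\epsilon}{3}Z_{\mathcal{P}_m}$. Combining this with $\abs{Z_\mathcal{P}-Z_{\mathcal{P}_m}}\le\frac{\epsilon}{3}Z_\mathcal{P}$ and $Z_{\mathcal{P}_m}\le Z_\mathcal{P}$ yields $\abs{\widehat Z-Z_\mathcal{P}}\le\epsilon Z_\mathcal{P}$. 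The runtime is $\abs{G}^2\epsilon^{-2}\log(\abs{G}/\epsilon)^2 m^{O(1)}e^{(\kappa-r)m}$, which becomes $\abs{G}^{\kappa-r+2}\epsilon^{-(\kappa-r+2)}\log(\abs{G}/\epsilon)^{O(1)}$.

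The only delicate point is the constants hidden in the choice of $m$. The additive $O(1)$ in $m$ only costs a constant factor $e^{(\kappa-r)O(1)}$, which is fine since $\kappa$ and $r$ are fixed. The leading coefficient of $\log(\abs{G}/\epsilon)$ in $m$ must be at most $1$, and this is why I will recompute the tail sum explicitly rather than quote the $O(\cdot)$.
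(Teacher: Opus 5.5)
Your proposal is correct and follows the paper's proof. Like the paper, you run Theorem~\ref{theorem:PolymerSamplingAlgorithm} and Theorem~\ref{theorem:PolymerCountingAlgorithm} on the truncated model $\mathcal{P}_m$ from Lemma~\ref{lemma:TruncatedPolymerModelApproximation}, use the inherited weight decay and $\kappa$-computability, and combine errors by the triangle inequality. Your explicit tail computation showing that $m=\log(\abs{G}/\epsilon)+O(1)$ suffices (coefficient of the logarithm at most $1$, or even $1/(r+1)$ under the stronger weight bound) makes precise a point the paper hides in its $O(\log(\abs{G}/\epsilon))$ but needs for the stated exponent $\kappa-r$.
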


\begin{proof}
    We apply Theorem~\ref{theorem:PolymerSamplingAlgorithm} and Theorem~\ref{theorem:PolymerCountingAlgorithm} to the truncated polymer model $\mathcal{P}_m$ with $m=O(\log(\abs{G}/\epsilon))$. The weight decay condition and $\kappa$-computability are inherited from $\mathcal{P}$. By Lemma~\ref{lemma:TruncatedPolymerModelApproximation}, we have $\norm{\mu_\mathcal{P}-\mu_{\mathcal{P}_m}}_\mathrm{TV}\leq\frac{\epsilon}{2}$ and $\abs{Z_\mathcal{P}-Z_{\mathcal{P}_m}}\leq\frac{\epsilon}{2}Z_\mathcal{P}$, and so by the triangle inequality there is an $\epsilon$-approximate sampling algorithm for $\mu_\mathcal{P}$ and an $\epsilon$-approximate counting algorithm for $Z_\mathcal{P}$, completing the proof.
\end{proof}

\section{Applications}
\label{section:Applications}

In this section we apply the polymer sampling and counting algorithms to establish efficient sampling and counting algorithms for (1) general stoquastic spin systems, (2) ferromagnetic Heisenberg models, and (3) antiferromagnetic Heisenberg models on bipartite graphs.

\subsection{Stoquastic Models}
\label{section:StoquasticModels}

We first consider general stoquastic Hamiltonians. Let $\mathcal{S}=(\Phi,d,\beta)$ denote a stoquastic spin system on a graph $G$ with interaction $\Phi$, local dimension $d<\infty$, and inverse temperature $\beta\geq0$. Our first result is as follows.

\begin{theorem}
    \label{theorem:StoquasticModelAlgorithm}
    Fix $\Delta\in\mathbb{Z}_{\geq3}$. Let $\mathcal{S}=(\Phi,d,\beta)$ be a stoquastic spin system on a graph $G$ of maximum degree at most $\Delta$. Let $\beta$ be such that
    \begin{equation}
        \beta \leq \frac{1}{2e^3d^3\Delta}. \notag
    \end{equation}
    Then, for any $\epsilon>0$, there is an $\epsilon$-approximate sampling algorithm for $\mu_{\rho_\mathcal{S}}$ with expected runtime $O(\abs{G}\log(\abs{G}/\epsilon))$ and an $\epsilon$-approximate counting algorithm for $Z_\mathcal{S}$ with expected runtime $O(\abs{G}^2\epsilon^{-2}\log(\abs{G}/\epsilon)^2)$.
\end{theorem}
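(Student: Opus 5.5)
We will write $Z_\mathcal{S}$ as the partition function of a probabilistic subgraph polymer model on $G$, equivalently a graphlet polymer model on $L(G)$. Each edge term is handled separately, with $\Phi(e)$ satisfying $0<\norm{\Phi(e)}\leq1$ and all matrix elements non-positive. Write $e^{-\beta\Phi(e)}=\mathbb{I}+(e^{-\beta\Phi(e)}-\mathbb{I})$, or work directly with the Taylor/Dyson expansion of $e^{-\beta H_\mathcal{S}}=\sum_n\frac{\beta^n}{n!}(-H_\mathcal{S})^n$. Since $-H_\mathcal{S}$ is entrywise non-negative in the spin basis, every term in the expansion is non-negative.

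\textbf{Step 1: the polymer model.} Expand $\Tr[e^{-\beta H_\mathcal{S}}]$ as a sum over words $(e_1,\dots,e_n)$ of edges. Group each word by the set of edges appearing in it. The normalised trace factorises over the connected components of the edge subgraph, because on vertices not covered the trace of the identity is $1$. The mixed-product counting $\frac{\beta^n}{n!}$ splits into independent shuffles of the components' subwords. This gives $Z_\mathcal{S}=\sum_{\Gamma}\prod_{\gamma\in\Gamma}w_\gamma$ over vertex-disjoint sets of connected subgraphs $\gamma$ with at least one edge. The weight is
\begin{equation}
w_\gamma=\sum_{n}\frac{\beta^n}{n!}\sum_{\substack{(e_1,\dots,e_n)\\ \{e_i\}=E(\gamma)}}\Tr\left[(-\Phi(e_1))\cdots(-\Phi(e_n))\right]\geq0, \notag
\end{equation}
with the trace normalised on $\bigotimes_{v\in V(\gamma)}\mathcal{H}_v$. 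The polymers of $L(G)$ are edge sets, and vertex-disjointness in $G$ is exactly incompatibility in $L(G)$, which has maximum degree at most $2(\Delta-1)\le 2\Delta$.

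\textbf{Step 2: weight bound.} Bound each trace by $1$, since every operator norm is at most $1$ and the trace is normalised. The number of words of length $n$ that use exactly the edge set $E(\gamma)$ is at most $\norm{\gamma}^n$, restricted to $n\geq\norm{\gamma}$. This gives $w_\gamma\le\sum_{n\ge k}(\beta k)^n/n!$ with $k=\norm{\gamma}$. The naive bound $(e\beta)^k e^{\beta k}$ via $k^n/n!\le e^k$-type estimates is too crude, so I would instead use a surjection count: the number of words is at most $k!\,S(n,k)$, giving $w_\gamma\le (e^\beta-1)^k\le(2\beta)^k$.

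We need $w_\gamma\le (e^{r+2}\cdot 2\Delta)^{-k}$ with $r\ge\kappa$, where $\kappa$ is the cost exponent of producing an estimator. The factor $d^3$ in the hypothesis should come from this cost. I expect the estimator to work in the computational basis, so we need the cancellation of $d^{|V(\gamma)|}\le d^{k+1}$ against the normalisation. The remaining $d$ factors enter because the unbiased estimator has to be bounded by the decay condition, as Lemma~\ref{lemma:ProbabilisticGraphletSampler} requires.

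\textbf{Step 3: estimator and computability (main obstacle).} Computing $w_\gamma$ exactly costs $d^{O(k)}$, so instead I would build an unbiased non-negative estimator $\tilde w_\gamma$ that is sampled in time $k^{O(1)}$, which makes it $\kappa$-computable with $\kappa=0$:
\begin{itemize}
\item sample $n$ and a word from a proposal distribution;
\item sample a basis path, starting from a uniform basis state $\sigma$ on $V(\gamma)$ and at each step moving by a uniformly random matrix element in the row;
\item importance-weight the path, each weight factor being at most $d$ times a matrix element of size at most $1$.
\end{itemize}
The main difficulty is ensuring the estimator is \emph{almost surely} bounded by $(e^{r+2}\cdot2\Delta)^{-k}$, not just in expectation. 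I would choose the proposal over words proportional to $\beta^n/n!$ times surjection counts, truncated at large $n$ using Lemma~\ref{lemma:TruncatedPolymerModelApproximation}-type reasoning. Each factor $d$ in the importance weights then gets absorbed through $\beta\le 1/(2e^3d^3\Delta)$, so that per edge we pay $2\beta d^{3}\le 1/(e^3\Delta)$, which is at most $1/(e^{2}\cdot 2\Delta)$. This yields $r\ge0=\kappa$.

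\textbf{Step 4: apply the framework.} Apply Theorems~\ref{theorem:PolymerSamplingAlgorithm} and \ref{theorem:PolymerCountingAlgorithm}, together with the remark extending them to unbiased estimators, to the graphlet model on $L(G)$ with $\abs{L(G)}=O(\abs{G})$. This gives the counting runtime. For sampling, use the coloured polymer distribution to reproduce $\mu_{\rho_\mathcal{S}}$:
\begin{itemize}
\item take $\mu_v$ uniform on $[d]$;
\item take $\mu_\gamma$ to be the diagonal distribution $\sigma\mapsto\langle\sigma|(\cdot)|\sigma\rangle/w_\gamma$, sampled by the same path estimator (the starting basis state of an accepted path, using the closing condition of the trace).
\end{itemize}
This adds only $O(1)$ expected cost per polymer.
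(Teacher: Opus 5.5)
Your Steps 1 and 2 agree with the paper: the same weights $w_\gamma$, the bound $w_\gamma\le(e^\beta-1)^{\norm{\gamma}}$, and the passage to $L(G)$ of maximum degree $2(\Delta-1)$. The outline of Step 4 is also sound. However, Step 3, which you correctly flag as the crux, fails as written.

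In your path estimator the importance factors accrue once per \emph{letter} of the word, not once per edge of $\gamma$. Choosing a uniform entry in a row of $-\Phi(e_i)\otimes\mathbb{I}$ means choosing among $d^2$ local states, since the operator acts on two sites. So each step contributes a factor up to $d^2$, not $d$. With the word drawn proportionally to $\beta^n/n!$ among words of length $n$ with support $E(\gamma)$, the estimator equals $(e^\beta-1)^{\norm{\gamma}}d^{2n}$ times a product of matrix elements. For $d\geq2$ this value is attained with positive probability for arbitrarily large $n$: take $\Phi(e)=-\mathbb{I}$ and a path that never moves. So the almost-sure bound demanded by Lemma~\ref{lemma:ProbabilisticGraphletSampler} fails, and ``per edge we pay $2\beta d^3$'' does not describe any bound your estimator satisfies.

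Truncation does not repair this. Lemma~\ref{lemma:TruncatedPolymerModelApproximation} truncates polymer orders, not word lengths. A cutoff $n=O(\norm{\gamma})$ discards a fraction of every single-edge weight that does not depend on $\epsilon$ or $\abs{G}$, giving a multiplicative error in $Z_\mathcal{S}$ that grows with $\norm{G}$. A cutoff growing like $\log(\abs{G}/\epsilon)$ instead puts a factor $d^{\Theta(\log(\abs{G}/\epsilon))}$ into the almost-sure bound and breaks the decay condition for small polymers.

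The missing idea is to let the proposal absorb the per-letter factor. Draw the word with probability proportional to $(d^2\beta)^n/n!$ among words of length $n$ with support $E(\gamma)$; equivalently, give each edge an independent zero-truncated $\operatorname{Poi}(d^2\beta)$ multiplicity and shuffle uniformly. Then $\beta^n/(d^2\beta)^n$ cancels $d^{2n}$ exactly, and the uniform start cancels the normalised trace. This gives $\tilde w_\gamma=(e^{d^2\beta}-1)^{\norm{\gamma}}\prod_i\langle\sigma_i|{-\Phi(e_i)}|\sigma_{i-1}\rangle\mathds{1}_{\sigma_n=\sigma_0}\le(e^{d^2\beta}-1)^{\norm{\gamma}}$ almost surely, at expected cost $\norm{\gamma}^{O(1)}$, i.e.\ $\kappa=0$. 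This is the same device as the rate $2\beta$ used to absorb $\norm{P_e}=2$ in the antiferromagnetic section.

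Under your hypothesis, $e^{d^2\beta}-1\le\frac{1}{2e^2(\Delta-1)}$. So Theorems~\ref{theorem:PolymerSamplingAlgorithm} and~\ref{theorem:PolymerCountingAlgorithm} apply with $r=\kappa=0$. Your colouring by the start state of the accepted path is then correct, since $\mathbb{E}[\tilde w_\gamma\mathds{1}_{\sigma_0=\sigma}]=w_\gamma\mu_\gamma(\sigma)$. This route would in fact need only $\beta\le\frac{1}{e^3d^2\Delta}$.

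The paper uses no estimator for this theorem. It computes $w_\gamma$ and $\mu_\gamma$ exactly by inclusion--exclusion over the $2^{\norm{\gamma}}$ subsets $T\subseteq E(\gamma)$, each requiring a $d^{3\norm{\gamma}}$-time diagonalisation, so $\kappa=\log(2d^3)$ (Lemma~\ref{lemma:StoquasticKappaComputable}). It then takes $r=\kappa$ in Lemma~\ref{lemma:StoquasticWeightBound}. With $r=\log(2d^3)$, the threshold $\beta\le\frac{1}{e^{r+3}\Delta}$ is exactly the hypothesis $\beta\le\frac{1}{2e^3d^3\Delta}$, as you suspected when you attributed the $d^3$ to computational cost.
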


We prove Theorem~\ref{theorem:StoquasticModelAlgorithm} by showing that the conditions required to apply Theorem~\ref{theorem:PolymerSamplingAlgorithm} and Theorem~\ref{theorem:PolymerCountingAlgorithm} are satisfied. In particular, we show that (1) the stoquastic spin system admits a suitable subgraph polymer model representation, (2) the polymer weights satisfy the weight decay condition, and (3) the stoquastic polymer model is $\kappa$-computable.

Let $\mathcal{S}=(\Phi,d,\beta)$ be a stoquastic spin system on a graph $G$. We define the \emph{stoquastic polymer model} of $\mathcal{S}$ to be the subgraph polymer model $\mathcal{P}=(\mathcal{C},w,\sim)$ on $G$, where $\mathcal{C}$ is the set of all connected subgraphs of $G$ with at least one edge, and $w:\mathcal{C}\to\mathbb{R}_{\geq0}$ is defined by
\begin{equation}
    w_\gamma \coloneqq (-1)^\norm{\gamma}\sum_{T \subseteq E(\gamma)}(-1)^\abs{T}\Tr\left[e^{-\beta\sum_{e \in T}\Phi(e)}\right]. \notag
\end{equation}
For each $\gamma\in\mathcal{C}$ with $w_\gamma>0$, let $\mu_\gamma$ be the probability distribution on $[d]^{V(\gamma)}$ defined by
\begin{equation}
    \mu_\gamma(\sigma) \coloneqq \frac{(-1)^\norm{\gamma}}{w_\gamma}\sum_{T \subseteq E(\gamma)}(-1)^\abs{T}\Tr\left[\ketbra{\sigma}{\sigma}e^{-\beta\sum_{e \in T}\Phi(e)}\right]. \notag
\end{equation}
For each $v \in V(G)$, let $\mu_v$ be the probability distribution on $[d]$ defined by $\mu_v(\sigma)=\frac{1}{d}$ for all $\sigma\in[d]$.

\begin{lemma}
    \label{lemma:StoquasticPolymerRepresentation}
    Let $\mathcal{S}=(\Phi,d,\beta)$ be a stoquastic spin system on a graph $G$, and let $\mathcal{P}=(\mathcal{C},w,\sim)$ be the stoquastic polymer model of $\mathcal{S}$. Then the partition function of $\mathcal{S}$ satisfies $Z_\mathcal{S}=Z_\mathcal{P}$ and the thermal distribution of $\mathcal{S}$ satisfies $\mu_{\rho_\mathcal{S}}=\hat{\mu}_\mathcal{P}$.
\end{lemma}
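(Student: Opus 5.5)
The plan is a Möbius inversion (inclusion--exclusion) over edge subsets, followed by a factorisation over connected components using the normalised trace. For $T\subseteq E(G)$ I define $f(T)\coloneqq\Tr\bigl[e^{-\beta\sum_{e\in T}\Phi(e)}\bigr]$, with $f(\varnothing)=\Tr[\mathbb{I}]=1$. Set
\begin{equation}
    g(A)\coloneqq\sum_{T\subseteq A}(-1)^{\abs{A}-\abs{T}}f(T). \notag
\end{equation}
By Möbius inversion on the Boolean lattice, $f(E(G))=\sum_{A\subseteq E(G)}g(A)$. For a connected subgraph $\gamma$ with edge set $A$, the quantity $g(A)$ is exactly $w_\gamma$, since $(-1)^{\norm{\gamma}}(-1)^{\abs{T}}=(-1)^{\abs{A}-\abs{T}}$.

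The key step is to show that $g$ factorises over connected components. Let $A$ decompose, as the edge set of the subgraph it spans, into components $A_1,\dots,A_k$ with pairwise disjoint vertex sets. The terms $\Phi(e)$ for $e$ in different components act on disjoint tensor factors, so they commute and the exponential factorises as a tensor product. With the normalised trace, the trace of a tensor product equals the product of the traces on the factors, and on the vertices not covered by the subgraph the identity has trace $1$. Hence $f(T)=\prod_i f(T\cap A_i)$ for every $T\subseteq A$. Since the sign $(-1)^{\abs{A}-\abs{T}}$ also splits into a product over components, the sum defining $g(A)$ factorises as $g(A)=\prod_i g(A_i)=\prod_i w_{\gamma_i}$. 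The collections $\{\gamma_1,\dots,\gamma_k\}$ of vertex-disjoint connected subgraphs with at least one edge are in bijection with edge sets $A\subseteq E(G)$, with the empty collection corresponding to $A=\varnothing$ and $g(\varnothing)=1$. These collections are precisely the pairwise compatible families. This gives $Z_\mathcal{S}=f(E(G))=Z_\mathcal{P}$.

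This identification requires each $w_\gamma\geq0$, which is needed for $\mathcal{P}$ to be a valid polymer model. To show it, I would expand the trace on $\bigotimes_{v\in V(\gamma)}\mathcal{H}_v$ (with the other factors contributing $1$) via the Duhamel/Dyson series, or equivalently expand $e^{-\beta\sum_{e\in T}\Phi(e)}=\sum_n\frac{\beta^n}{n!}\bigl(\sum_{e\in T}(-\Phi(e))\bigr)^n$. Each word is a product of entrywise nonnegative matrices. Inclusion--exclusion over $T$ keeps exactly those words that use every edge of $E(\gamma)$ at least once. Each such word has nonnegative diagonal entries, so $w_\gamma\geq0$. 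The same argument applied to $\Tr[\ketbra{\sigma}{\sigma}\,\cdot\,]$ shows that the numerator of $\mu_\gamma(\sigma)$ is nonnegative. Summing over $\sigma$ gives total mass $w_\gamma$, so $\mu_\gamma$ is a valid probability distribution. I expect this positivity step to be the main point needing care. The factorisation itself is routine.

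For the thermal distribution, I repeat the computation with $f_\sigma(T)\coloneqq\Tr\bigl[\ketbra{\sigma}{\sigma}e^{-\beta\sum_{e\in T}\Phi(e)}\bigr]$. Since $\ketbra{\sigma}{\sigma}=\bigotimes_v\ketbra{\sigma_v}{\sigma_v}$ is a product operator, $f_\sigma$ factorises over components in the same way. An uncovered vertex $v$ now contributes $\Tr[\ketbra{\sigma_v}{\sigma_v}]=1/d=\mu_v(\sigma_v)$ under the normalised trace. A component $\gamma_i$ contributes $w_{\gamma_i}\mu_{\gamma_i}(\sigma_{V(\gamma_i)})$. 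Summing over $A$ and dividing by $Z_\mathcal{S}=Z_\mathcal{P}$ then gives
\begin{equation}
    \mu_{\rho_\mathcal{S}}(\sigma)=\sum_{\Gamma}\mu_\mathcal{P}(\Gamma)\prod_{\gamma\in\Gamma}\mu_\gamma(\sigma_{V(\gamma)})\prod_{v\notin V(\Gamma)}\mu_v(\sigma_v)=\hat{\mu}_\mathcal{P}(\sigma). \notag
\end{equation}
Configurations containing a polymer with $w_\gamma=0$ contribute nothing, because their numerator also vanishes. So restricting the sum to admissible sets is harmless.
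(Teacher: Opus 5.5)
Your proposal is correct and follows essentially the same route as the paper. Both arguments use inclusion--exclusion (Möbius inversion) over edge subsets and factorise over the connected components of each subset via the multiplicativity of the normalised trace. Both establish non-negativity of $w_\gamma$ and of the numerator of $\mu_\gamma$ through the Taylor series, where inclusion--exclusion keeps only words whose support is all of $E(\gamma)$. Your per-vertex factor $\Tr[\ketbra{\sigma_v}{\sigma_v}]=1/d$ and your remark that zero-weight polymers contribute nothing are simply more explicit versions of the paper's $d^{-\abs{G}}\prod_\gamma d^{\abs{\gamma}}$ bookkeeping and its implicit restriction to admissible sets.
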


\begin{proof}
    We first show that $Z_\mathcal{S}=Z_\mathcal{P}$, following the proof of Ref.~\cite{mann2024algorithmic}. By the principle of inclusion-exclusion (see for example Ref.~\cite[Theorem~12.1]{graham1995handbook}),
    \begin{align}
        Z_\mathcal{S} &= \Tr\left[e^{-\beta H_\mathcal{S}}\right] \notag \\
        &= \sum_{S \subseteq E(G)}(-1)^\abs{S}\sum_{T \subseteq S}(-1)^\abs{T}\Tr\left[e^{-\beta\sum_{e \in T}\Phi(e)}\right]. \notag
    \end{align}
    For a subset $S \subseteq E(G)$, let $\Gamma_S$ denote the maximally connected components of $S$. By factorising over these components,
    \begin{align}
        Z_\mathcal{S} &= \sum_{S \subseteq E(G)}\prod_{\gamma\in\Gamma_S}(-1)^\norm{\gamma}\sum_{T \subseteq E(\gamma)}(-1)^\abs{T}\Tr\left[e^{-\beta\sum_{e \in T}\Phi(e)}\right] \notag \\
        &= \sum_{S \subseteq E(G)}\prod_{\gamma\in\Gamma_S}w_\gamma \notag \\
        &= \sum_{\Gamma\in\mathcal{G}}\prod_{\gamma\in\Gamma}w_\gamma \notag \\
        &= Z_\mathcal{P}. \notag
    \end{align}
    We now show that $\mu_{\rho_\mathcal{S}}=\hat{\mu}_\mathcal{P}$. For any spin configuration $\sigma\in[d]^{V(G)}$, we have
    \begin{align}
        \mu_{\rho_\mathcal{S}}(\sigma) &= \frac{1}{Z_\mathcal{S}}\Tr\left[\ketbra{\sigma}{\sigma}e^{-\beta H_\mathcal{S}}\right] \notag \\
        &= \frac{1}{Z_\mathcal{P}}\sum_{\Gamma\in\mathcal{G}}\frac{1}{d^\abs{G}}\prod_{\gamma\in\Gamma}d^\abs{\gamma}(-1)^\norm{\gamma}\sum_{T \subseteq E(\gamma)}(-1)^\abs{T}\Tr\left[\ketbra{\sigma_{V(\gamma)}}{\sigma_{V(\gamma)}}e^{-\beta\sum_{e \in T}\Phi(e)}\right] \notag \\
        &= \frac{1}{Z_\mathcal{P}}\sum_{\Gamma\in\mathcal{G}}\frac{1}{d^\abs{G}}\prod_{\gamma\in\Gamma}d^\abs{\gamma}w_\gamma\mu_\gamma(\sigma_{V(\gamma)}) \notag \\
        &= \sum_{\Gamma\in\mathcal{G}}\mu_\mathcal{P}(\Gamma)\prod_{\gamma\in\Gamma}\mu_\gamma(\sigma_{V(\gamma)})\prod_{v \in V(G){\setminus}V(\Gamma)}\mu_v(\sigma_v) \notag \\
        &= \hat{\mu}_\mathcal{P}(\sigma). \notag
    \end{align}
    Finally, we show that $w_\gamma\geq0$ for all $\gamma\in\mathcal{C}$, and that $\mu_\gamma$ is a valid probability distribution on $[d]^{V(\gamma)}$ for all $\gamma\in\mathcal{C}$ with $w_\gamma>0$. Since $\mathcal{S}$ is stoquastic, all matrix elements of $\Phi(e)$ are non-positive in the spin basis for all $e \in E(G)$. Therefore, by the Taylor series and the principle of inclusion-exclusion,
    \begin{equation}
         w_\gamma = \sum_{n=\norm{\gamma}}^\infty\frac{(-\beta)^n}{n!}\sum_{\substack{\rho \in E(\gamma)^n \\ \operatorname{supp}(\rho)=E(\gamma)}}\Tr\left[\prod_{e\in\rho}\Phi(e)\right] \geq 0. \notag
    \end{equation}
    Further, for $\gamma\in\mathcal{C}$ with $w_\gamma>0$ and $\sigma\in[d]^{V(\gamma)}$,
    \begin{equation}
        \mu_\gamma(\sigma) = \frac{1}{w_\gamma}\sum_{n=\norm{\gamma}}^\infty\frac{(-\beta)^n}{n!}\sum_{\substack{\rho \in E(\gamma)^n \\ \operatorname{supp}(\rho)=E(\gamma)}}\Tr\left[\ketbra{\sigma}{\sigma}\prod_{e\in\rho}\Phi(e)\right]
        \geq 0. \notag
    \end{equation}
    By linearity of the trace, $\sum_{\sigma\in[d]^{V(\gamma)}}\mu_\gamma(\sigma)=1$, and hence $\mu_\gamma$ defines a valid probability distribution on $[d]^{V(\gamma)}$. This completes the proof.
\end{proof}

\begin{lemma}
    \label{lemma:StoquasticWeightBound}
    Fix $\Delta\in\mathbb{Z}_{\geq3}$ and $r\geq0$. Let $\mathcal{S}=(\Phi,d,\beta)$ be a stoquastic spin system on a graph $G$ of maximum degree at most $\Delta$, and let $\mathcal{P}=(\mathcal{C},w,\sim)$ be the stoquastic polymer model of $\mathcal{S}$. Let $\beta$ be such that
    \begin{equation}
        \beta \leq \frac{1}{e^{r+3}\Delta}. \notag
    \end{equation}
    Then, for all $\gamma\in\mathcal{C}$, the weight $w_\gamma$ satisfies
    \begin{equation}
        w_\gamma \leq \left(\frac{1}{2e^{r+2}(\Delta-1)}\right)^\norm{\gamma}. \notag
    \end{equation}
\end{lemma}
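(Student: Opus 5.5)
We start from the Taylor-series representation of $w_\gamma$ obtained in the proof of Lemma~\ref{lemma:StoquasticPolymerRepresentation}:
\begin{equation}
    w_\gamma = \sum_{n=\norm{\gamma}}^\infty\frac{(-\beta)^n}{n!}\sum_{\substack{\rho \in E(\gamma)^n \\ \operatorname{supp}(\rho)=E(\gamma)}}\Tr\left[\prod_{e\in\rho}\Phi(e)\right]. \notag
\end{equation}
Since $\Tr$ is normalised so that $\Tr[\mathbb{I}]=1$, and each $\norm{\Phi(e)}\leq1$, every trace term has absolute value at most $1$. Write $k=\norm{\gamma}$. By the triangle inequality,
\begin{equation}
    w_\gamma \leq \sum_{n\geq k}\frac{\beta^n}{n!}N(n,k),
\end{equation}
where $N(n,k)$ is the number of surjective words $\rho\in E(\gamma)^n$ onto the $k$ edges. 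This step removes the quantum structure entirely and reduces the problem to combinatorics.

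We then bound the sum of surjection counts via an exponential generating function:
\begin{equation}
    \sum_{n}\frac{\beta^n}{n!}N(n,k)=(e^\beta-1)^k.
\end{equation}
This is the standard identity, since a surjection is the product of $k$ nonempty blocks and each edge contributes $\sum_{j\geq1}\beta^j/j!$. Hence $w_\gamma\leq(e^\beta-1)^{\norm{\gamma}}$ exactly, with no further loss.

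It remains to check the numerics: we need $e^\beta-1\leq\frac{1}{2e^{r+2}(\Delta-1)}$ when $\beta\leq\frac{1}{e^{r+3}\Delta}$. Since $\beta\leq 1/(27)$ is tiny, we use $e^\beta-1\leq\beta e^\beta\leq \beta(1+\delta)$, so it suffices that
\begin{equation}
    \frac{e^{\beta}}{e^{r+3}\Delta}\leq \frac{1}{2e^{r+2}(\Delta-1)},
\end{equation}
which is equivalent to $2e^{\beta}(\Delta-1)\leq e\Delta$. Because $2(\Delta-1)/\Delta<2$ and $e^{\beta}\leq e^{1/(3e^3)}\approx1.017$, we get $2e^\beta<e$, so this holds for all $\Delta\geq3$ and $r\geq0$.

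The main point requiring care is the first step: justifying that the norm bound yields $\abs{\Tr[\prod\Phi(e)]}\leq1$ under the normalised trace, and that the Taylor/inclusion–exclusion representation from the previous lemma is valid, with the series converging absolutely as the bound itself shows. The final inequality is routine, though slack should be checked at $\Delta=3$, where the ratio $2e^\beta\cdot 2/3\approx1.36\leq e$ comfortably holds.
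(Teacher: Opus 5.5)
Your proposal is correct and follows the paper's proof essentially step for step. You use the Taylor/inclusion--exclusion expansion, bound each trace by $\abs{\Tr[\prod_{e\in\rho}\Phi(e)]}\leq1$, count surjective words as $\genfrac{\{}{\}}{0pt}{}{n}{k}k!$, and sum to $(e^\beta-1)^{\norm{\gamma}}$ via the exponential generating function. You then do the final numerical comparison more explicitly than the paper, reducing it to $2e^\beta(\Delta-1)\leq e\Delta$.

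One cosmetic slip: the hypothesis gives $\beta\leq\frac{1}{3e^3}\approx\frac{1}{60}$, not $\frac{1}{27}$. You use the correct value in $e^\beta\leq e^{1/(3e^3)}$, so nothing breaks.
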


The proof of Lemma~\ref{lemma:StoquasticWeightBound} follows that of Ref.~\cite{mann2024algorithmic}.

\begin{proof}
    Fix a polymer $\gamma$. By the Taylor series and the principle of inclusion-exclusion,
    \begin{equation}
         w_\gamma = \sum_{n=\norm{\gamma}}^\infty\frac{(-\beta)^n}{n!}\sum_{\substack{\rho \in E(\gamma)^n \\ \operatorname{supp}(\rho)=E(\gamma)}}\Tr\left[\prod_{e\in\rho}\Phi(e)\right] \leq \sum_{n=\norm{\gamma}}^\infty\frac{\beta^n}{n!}\sum_{\substack{\rho \in E(\gamma)^n \\ \operatorname{supp}(\rho)=E(\gamma)}}\prod_{e\in\rho}\norm{\Phi(e)}. \notag
    \end{equation}
    There are precisely $\genfrac{\{}{\}}{0pt}{}{n}{\norm{\gamma}}\norm{\gamma}!$ sequences $\rho$ of length $n$ whose support is $E(\gamma)$, where $\genfrac{\{}{\}}{0pt}{}{n}{\norm{\gamma}}$ denotes the Stirling number of the second kind. Hence,
    \begin{equation}
         w_\gamma \leq \sum_{n=\norm{\gamma}}^\infty\genfrac{\{}{\}}{0pt}{}{n}{\norm{\gamma}}\frac{\norm{\gamma}!}{n!}\beta^n = (e^\beta-1)^\norm{\gamma}, \notag
    \end{equation}
    where we have used the identity $\sum_{n=k}^\infty\genfrac{\{}{\}}{0pt}{}{n}{k}\frac{x^n}{n!}=\frac{(e^x-1)^k}{k!}$. By taking $\beta\leq\frac{1}{e^{r+3}\Delta}$,
    \begin{equation}
        w_\gamma \leq \left(\frac{1}{2e^{r+2}(\Delta-1)}\right)^\norm{\gamma}, \notag
    \end{equation}
    completing the proof.
\end{proof}

\begin{lemma}
    \label{lemma:StoquasticKappaComputable}
    Let $\mathcal{S}=(\Phi,d,\beta)$ be a stoquastic spin system on a graph $G$, and let $\mathcal{P}=(\mathcal{C},w,\sim)$ be the stoquastic polymer model of $\mathcal{S}$. Then $\mathcal{P}$ is $\kappa$-computable with $\kappa=\log(2d^3)$.
\end{lemma}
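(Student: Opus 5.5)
We evaluate the defining inclusion--exclusion formula
\begin{equation}
    w_\gamma = (-1)^\norm{\gamma}\sum_{T \subseteq E(\gamma)}(-1)^\abs{T}\Tr\left[e^{-\beta\sum_{e \in T}\Phi(e)}\right] \notag
\end{equation}
directly by brute force and count the cost in terms of $\norm{\gamma}$. The sum has $2^{\norm{\gamma}}$ terms. Each term is the normalised trace of the exponential of an operator supported on $V(T) \subseteq V(\gamma)$. Since the trace is normalised, each term equals the normalised trace of $e^{-\beta\sum_{e \in T}\Phi(e)}$ computed on $\bigotimes_{v \in V(\gamma)}\mathcal{H}_v$, a space of dimension $d^{\abs{\gamma}}$. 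Because $\gamma$ is connected, $\abs{\gamma} \leq \norm{\gamma}+1$, so the dimension is at most $d^{\norm{\gamma}+1}$.

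\textbf{Cost per term.} Building the matrix $\sum_{e \in T}\Phi(e)$ takes $\norm{\gamma}\cdot d^{2\abs{\gamma}}$ operations, since each $\Phi(e)$ is tensored with identities. Diagonalising this Hermitian matrix to the required precision and forming its exponential takes $O(D^3)$ arithmetic operations with $D = d^{\abs{\gamma}}$, giving a cost of $d^{3(\norm{\gamma}+1)} = d^3\,(d^3)^{\norm{\gamma}}$. Reading off the trace is free.

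\textbf{Total cost.} Summing over the $2^{\norm{\gamma}}$ subsets $T$ gives
\begin{equation}
    2^{\norm{\gamma}}\,d^{3\norm{\gamma}}\,\norm{\gamma}^{O(1)} = e^{\log(2d^3)\norm{\gamma}}\norm{\gamma}^{O(1)}, \notag
\end{equation}
where the constant factor $d^3$ is absorbed because $d$ is fixed. This is exactly $\kappa$-computability with $\kappa=\log(2d^3)$.

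\textbf{Sampling $\mu_\gamma$.} For the coloured polymer distribution we also need samples from $\mu_\gamma$. The same computation yields all diagonal entries $\Tr[\ketbra{\sigma}{\sigma}e^{-\beta\sum_{e\in T}\Phi(e)}]$ at once. Accumulating them over $T$ gives the full vector $\mu_\gamma(\sigma)$ for all $\sigma \in [d]^{V(\gamma)}$ within the same time bound. Sampling from this explicit table costs $d^{\abs{\gamma}}$ more, which is dominated by the main cost.

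\textbf{Main obstacle.} The delicate step is justifying the $O(D^3)$ cost of the matrix exponential in a precise model of computation. Exact exponentials are not computable in finite arithmetic. I would handle this by computing to polynomially many bits of precision, for instance by truncating the Taylor series after $\mathrm{poly}(\norm{\gamma})$ terms using $\norm{\beta\sum_{e \in T}\Phi(e)} \leq \beta\norm{\gamma}$. The error can be chosen small relative to the lower bounds needed, so the per-term cost becomes $D^3\norm{\gamma}^{O(1)}$. This polynomial overhead is absorbed into the $\norm{\gamma}^{O(1)}$ factor, consistent with how the paper treats precision elsewhere.
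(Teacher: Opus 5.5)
Your proposal is correct and follows the paper's proof: brute-force evaluation of the $2^{\norm{\gamma}}$ inclusion--exclusion terms, each needing a cubic-time diagonalisation, with the diagonal entries from the same computation giving samples from $\mu_\gamma$. Your two additions are extra care rather than a different route. You use $\abs{\gamma}\leq\norm{\gamma}+1$ to absorb the constant $d^3$, where the paper simply asserts a $d^{3\norm{\gamma}}\norm{\gamma}^{O(1)}$ diagonalisation cost. You also discuss finite precision, whereas the paper implicitly works in exact arithmetic.
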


\begin{proof}
    The weight $w_\gamma$ is a sum over all subsets $T$ of $E(\gamma)$, of which there are $2^\norm{\gamma}$. For each of these subsets $T$, the sum of interactions can be diagonalised in time $d^{3\norm{\gamma}}\norm{\gamma}^{O(1)}$. Hence, $w_\gamma$ can be computed and $\mu_\gamma$ can be sampled in time $e^{\log(2d^3)\norm{\gamma}}\norm{\gamma}^{O(1)}$, and so $\mathcal{P}$ is $\kappa$-computable with $\kappa=\log(2d^3)$.
\end{proof}

We now prove Theorem~\ref{theorem:StoquasticModelAlgorithm}.

\begin{proof}[Proof of Theorem~\ref*{theorem:StoquasticModelAlgorithm}]
    Let $\mathcal{P}=(\mathcal{C},w,\sim)$ be the stoquastic polymer model of $\mathcal{S}$. By Lemma~\ref{lemma:StoquasticPolymerRepresentation}, $\mu_{\rho_\mathcal{S}}=\hat{\mu}_\mathcal{P}$ and $Z_\mathcal{S}=Z_\mathcal{P}$, so it suffices to sample from $\hat{\mu}_\mathcal{P}$ and approximate $Z_\mathcal{P}$. A sample from $\hat{\mu}_\mathcal{P}$ is obtained by sampling $\Gamma\sim\mu_\mathcal{P}$ and then sampling spin configurations $\sigma_{V(\gamma)}\sim\mu_\gamma$ for each $\gamma\in\Gamma$ and $\sigma_v\sim\mu_v$ for each $v \in V(G){\setminus}V(\Gamma)$. By Lemma~\ref{lemma:StoquasticWeightBound} with $r=\log(2d^3)$, for all $\gamma\in\mathcal{C}$,
    \begin{equation}
        w_\gamma \leq \left(\frac{1}{4e^2d^3(\Delta-1)}\right)^\norm{\gamma}. \notag
    \end{equation}
    Since $\mathcal{P}$ is a subgraph polymer model on $G$ of maximum degree at most $\Delta$, it is equivalent to a graphlet polymer model on $L(G)$, which has maximum degree at most $2(\Delta-1)$. By Lemma~\ref{lemma:StoquasticKappaComputable}, $\mathcal{P}$ is $\kappa$-computable with $\kappa=\log(2d^3)$. By applying Theorem~\ref{theorem:PolymerSamplingAlgorithm} and Theorem~\ref{theorem:PolymerCountingAlgorithm} with $r=\kappa$, there is an $\epsilon$-approximate sampling algorithm for $\hat{\mu}_\mathcal{P}$ with expected runtime $O(\abs{G}\log(\abs{G}/\epsilon))$ and an $\epsilon$-approximate counting algorithm for $Z_\mathcal{P}$ with expected runtime $O(\abs{G}^2\epsilon^{-2}\log(\abs{G}/\epsilon)^2)$. This completes the proof.
\end{proof}

We now improve the inverse temperature condition of Theorem~\ref{theorem:StoquasticModelAlgorithm} by considering a probabilistic stoquastic polymer model representation. Our second result is as follows.

\begin{theorem}
    \label{theorem:ProbabilisticStoquasticModelAlgorithm}
    Fix $\Delta\in\mathbb{Z}_{\geq3}$. Let $\mathcal{S}=(\Phi,d,\beta)$ be a stoquastic spin system on a graph $G$ of maximum degree at most $\Delta$. Let $\beta$ be such that
    \begin{equation}
        \beta \leq \frac{1}{e^3d^2\Delta}. \notag
    \end{equation}
    Then, for any $\epsilon>0$, there is an $\epsilon$-approximate sampling algorithm for $\mu_{\rho_\mathcal{S}}$ with expected runtime $O(\abs{G}\log(\abs{G}/\epsilon))$ and an $\epsilon$-approximate counting algorithm for $Z_\mathcal{S}$ with expected runtime $O(\abs{G}^2\epsilon^{-2}\log(\abs{G}/\epsilon)^2)$.
\end{theorem}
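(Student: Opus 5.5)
We will follow the structure of the proof of Theorem~\ref{theorem:StoquasticModelAlgorithm}, replacing the exact weights $w_\gamma$ with an unbiased estimator. The aim is a probabilistic polymer model $\mathcal{P}_\circ=(\mathcal{C},\tilde{w},\sim)$ of the stoquastic polymer model $\mathcal{P}$ that is cheaper to sample, so that a smaller $\kappa$ is needed and the constraint on $\beta$ relaxes from $d^3$ to $d^2$. We then invoke the remark after Theorem~\ref{theorem:PolymerSamplingAlgorithm}, which says Theorem~\ref{theorem:PolymerSamplingAlgorithm} holds for estimators obeying the hypotheses of Lemma~\ref{lemma:ProbabilisticGraphletSampler}. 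The counting algorithm then follows through Theorem~\ref{theorem:PolymerCountingAlgorithm}, whose proof only uses the sampling oracle.

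\textbf{Constructing the estimator.} Since $\Tr[\mathbb{I}]=1$, the trace is an average over basis states, and we have
\begin{equation}
    w_\gamma=\frac{1}{d^{\abs{\gamma}}}\sum_{\sigma\in[d]^{V(\gamma)}}(-1)^{\norm{\gamma}}\sum_{T\subseteq E(\gamma)}(-1)^{\abs{T}}\bra{\sigma}e^{-\beta\sum_{e\in T}\Phi(e)}\ket{\sigma}. \notag
\end{equation}
By the Taylor-series argument in the proof of Lemma~\ref{lemma:StoquasticPolymerRepresentation}, each inner diagonal term $w_\gamma(\sigma)$ is non-negative, because all matrix elements of $\Phi(e)$ are non-positive. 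The estimator draws $\sigma$ uniformly at random and sets $\tilde{w}_\gamma\coloneqq w_\gamma(\sigma)$, which is unbiased and non-negative. Computing one diagonal entry of $e^{-\beta\sum_{e\in T}\Phi(e)}$ requires applying the exponential to a single vector rather than a full diagonalisation. We plan to do this with cost $d^{2\norm{\gamma}}\norm{\gamma}^{O(1)}$ per subset $T$, for example by diagonalising once or by using a Krylov or truncated-Taylor matrix–vector scheme. Summed over the $2^{\norm{\gamma}}$ subsets, this gives $\kappa=\log(2d^2)$. Since $\abs{\gamma}\le\norm{\gamma}+1$, the vertex-count normalisation only affects constants. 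Sampling from $\mu_\gamma$ is handled the same way: run rejection sampling proportional to $w_\gamma(\sigma)$, or draw $\sigma$ jointly with acceptance inside the graphlet sampler, so that the returned $\sigma$ is distributed according to $\mu_\gamma$.

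\textbf{Bounding the estimator.} Lemma~\ref{lemma:ProbabilisticGraphletSampler} requires a pointwise bound on $\tilde{w}_\gamma$, not only on its mean, and we expect this to be the main obstacle. We will repeat the proof of Lemma~\ref{lemma:StoquasticWeightBound} for the single diagonal element. Expanding in sequences $\rho$ whose support is $E(\gamma)$, we use
\begin{equation}
    \abs{\bra{\sigma}\textstyle\prod_{e\in\rho}\Phi(e)\ket{\sigma}}\le\prod_{e\in\rho}\norm{\Phi(e)}\le1. \notag
\end{equation}
The Stirling-number identity then gives $\tilde{w}_\gamma\le(e^\beta-1)^{\norm{\gamma}}$. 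Note that a diagonal entry is bounded by the operator norm and does not pick up a factor $d^{\abs{\gamma}}$, which is essential here.

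\textbf{Checking the parameters.} Take $r=\kappa=\log(2d^2)$, so that $e^{r+3}\Delta=2e^3d^2\Delta$. The line graph $L(G)$ has maximum degree at most $2(\Delta-1)$, so the required bound is $\tilde{w}_\gamma\le\bigl(e^{r+2}\cdot2(\Delta-1)\bigr)^{-\norm{\gamma}}$. We will use $e^\beta-1\le\beta e^\beta$ and check that $\beta\le 1/(e^3d^2\Delta)$ gives $e^\beta-1\le 1/(4e^2d^2(\Delta-1))$ after constants are absorbed. The factor $\Delta/(\Delta-1)\le 3/2$ together with $e^\beta$ close to $1$ must fit into the factor $e$ of slack. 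We expect this arithmetic to be tight, and we may need to keep $\kappa$ slightly below $\log(2d^2)$ or exploit $\abs{\gamma}\le\norm{\gamma}+1$ to make it close. Once it does, Theorems~\ref{theorem:PolymerSamplingAlgorithm} and~\ref{theorem:PolymerCountingAlgorithm} with $r\ge\kappa$ give the stated runtimes. Lemma~\ref{lemma:StoquasticPolymerRepresentation} transfers these to $\mu_{\rho_\mathcal{S}}$ and $Z_\mathcal{S}$.
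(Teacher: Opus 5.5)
\textbf{There is a genuine gap: it is the parameter check you flagged, and it does not close.}

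Your estimator still evaluates the alternating sum over all $2^{\norm{\gamma}}$ subsets $T\subseteq E(\gamma)$. Your cost bound therefore gives $\kappa=\log(2d^2)$. To get the $O(1)$ runtime of Lemma~\ref{lemma:ProbabilisticGraphletSampler} you need $r\geq\kappa$, i.e.\ the pointwise bound $\tilde{w}_\gamma\leq\bigl(4e^2d^2(\Delta-1)\bigr)^{-\norm{\gamma}}$ on $L(G)$.

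Your bound $(e^\beta-1)^{\norm{\gamma}}$ is attained. For $\Phi(e)=-\mathbb{I}$ one has $w_\gamma(\sigma)=(e^\beta-1)^{\norm{\gamma}}$ exactly. So at $\beta=1/(e^3d^2\Delta)$ you would need $4(\Delta-1)(e^\beta-1)/\beta\leq e\Delta$. Since $(e^\beta-1)/\beta\geq1$ and $4(\Delta-1)>e\Delta$ for all $\Delta\geq4$, the hypothesis of Lemma~\ref{lemma:ProbabilisticGraphletSampler} genuinely fails, not just your estimate of it.

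The repairs you mention do not help:
\begin{itemize}
\item The factor $2^{\norm{\gamma}}$ comes from the subset sum itself, and $\abs{\gamma}\leq\norm{\gamma}+1$ does not touch it.
\item Taking $r<\kappa$ costs $e^{(\kappa-r)m}$ with $m$ linear in $\abs{G}$.
\end{itemize}
Even granting your cost bound, your route proves the theorem only under $\beta\leq 1/(2e^3d^2\Delta)$, a factor $2$ short.

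The cost bound is also shaky as stated. Diagonalising costs $d^{3\abs{\gamma}}$ per subset, because the matrix changes with $T$. Krylov or truncated-Taylor schemes only approximate a sum of order-one terms that cancels down to size about $\beta^{\norm{\gamma}}$, so exact unbiasedness is no longer guaranteed.

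\textbf{The missing idea} is to estimate $w_\gamma$ by sampling a single term of its series expansion, rather than computing the alternating sum. The paper does this as follows.
\begin{itemize}
\item Draw $n\sim\operatorname{Poi}(\beta\norm{\gamma})$ and a uniform surjection $\rho$ of length $\norm{\gamma}+n$ onto $E(\gamma)$.
\item Reweight $(-1)^{\abs{\rho}}\Tr\bigl[\prod_{e\in\rho}\Phi(e)\bigr]$ by an explicit combinatorial factor. The result is unbiased for $w_\gamma$ and bounded pointwise by $(\beta e^\beta)^{\norm{\gamma}}$ (Lemma~\ref{lemma:StoquasticWeightEstimator}).
\item No subset sum and no matrix exponential appear. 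One only applies the two-site operators $\Phi(e)$ in sequence to basis vectors, which gives $\kappa=2\log(d)$ (Lemma~\ref{lemma:ProbabilisticStoquasticKappaComputable}).
\end{itemize}
With $r=2\log(d)$ one has $e^{r+3}\Delta=e^3d^2\Delta$ exactly. The factor $e$ of slack then only has to absorb $2(\Delta-1)e^\beta/\Delta<e$.

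Your device of attaching the spin configuration to the accepting iteration of the graphlet sampler is exactly how the paper handles $\mu_\gamma$. The paper uses the sequence $\rho_\gamma$ of the accepting iteration in place of your $\sigma$.
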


We prove Theorem~\ref{theorem:ProbabilisticStoquasticModelAlgorithm} by showing that the conditions required to apply Theorem~\ref{theorem:PolymerSamplingAlgorithm} and Theorem~\ref{theorem:PolymerCountingAlgorithm} are satisfied. In particular, we show that (1) the stoquastic polymer model admits a suitable probabilistic representation, and (2) the probabilistic stoquastic polymer model is $\kappa$-computable.

Let $\mathcal{S}=(\Phi,d,\beta)$ be a stoquastic spin system on a graph $G$, and let $\mathcal{P}=(\mathcal{C},w,\sim)$ be the stoquastic polymer model of $\mathcal{S}$. For each $\gamma\in\mathcal{C}$, let $\nu_\gamma$ denote the probability distribution over sequences $\rho$ of edges of $\gamma$ with $\operatorname{supp}(\rho)=E(\gamma)$ defined by sampling $n\sim\operatorname{Poi}(\beta\norm{\gamma})$ and then $\rho$ uniformly at random from sequences of length $\norm{\gamma}+n$ with support $E(\gamma)$. We define the \emph{probabilistic stoquastic polymer model} of $\mathcal{P}$ to be the probabilistic polymer model $\mathcal{P}_\circ=(\mathcal{C},\tilde{w},\sim)$ where, for each $\gamma\in\mathcal{C}$, the weight $\tilde{w}_\gamma$ is defined for $\rho\sim\nu_\gamma$ by
\begin{equation}
    \tilde{w}_\gamma(\rho) \coloneqq \frac{(\beta e^\beta)^\norm{\gamma}(-1)^\abs{\rho}\genfrac{\{}{\}}{0pt}{}{\abs{\rho}}{\norm{\gamma}}\norm{\gamma}^\norm{\gamma}}{\binom{\abs{\rho}}{\norm{\gamma}}\norm{\gamma}^\abs{\rho}}\Tr\left[\prod_{e\in\rho}\Phi(e)\right]. \notag
\end{equation}
For each $\gamma\in\mathcal{C}$ and $\rho\in\operatorname{supp}(\nu_\gamma)$ with $\tilde{w}_\gamma(\rho)>0$, let $\tilde{\mu}_{\gamma,\rho}$ be the probability distribution on $[d]^{V(\gamma)}$ defined by
\begin{equation}
    \tilde{\mu}_{\gamma,\rho}(\sigma) \coloneqq \frac{\Tr\left[\ketbra{\sigma}{\sigma}\prod_{e\in\rho}\Phi(e)\right]}{\Tr\left[\prod_{e\in\rho}\Phi(e)\right]}. \notag
\end{equation}

\begin{lemma}
    \label{lemma:StoquasticWeightEstimator}
    Fix $\Delta\in\mathbb{Z}_{\geq3}$ and $r\geq0$. Let $\mathcal{S}=(\Phi,d,\beta)$ be a stoquastic spin system on a graph $G$ of maximum degree at most $\Delta$, let $\mathcal{P}=(\mathcal{C},w,\sim)$ be the stoquastic polymer model of $\mathcal{S}$, and let $\mathcal{P}_\circ=(\mathcal{C},\tilde{w},\sim)$ be the probabilistic stoquastic polymer model of $\mathcal{P}$. Let $\beta$ be such that
    \begin{equation}
        \beta \leq \frac{1}{e^{r+3}\Delta}. \notag
    \end{equation}
    Then, for all $\gamma\in\mathcal{C}$ and $\rho\in\operatorname{supp}(\nu_\gamma)$,
    \begin{equation}
        \mathbb{E}_{\nu_\gamma}\left[\tilde{w}_\gamma\right]=w_\gamma \quad\text{and}\quad \tilde{w}_\gamma(\rho) \leq \left(\frac{1}{2e^{r+2}(\Delta-1)}\right)^\norm{\gamma}, \notag
    \end{equation}
    and, for all $\gamma\in\mathcal{C}$ with $w_\gamma>0$ and $\sigma\in[d]^{V(\gamma)}$,
    \begin{equation}
        \mathbb{E}_{\nu_\gamma}\left[\tilde{w}_\gamma\tilde{\mu}_{\gamma,\rho}(\sigma)\right] = w_\gamma\mu_\gamma(\sigma). \notag
    \end{equation}
\end{lemma}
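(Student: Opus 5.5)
The plan is to compute $\mathbb{E}_{\nu_\gamma}$ explicitly by writing out the probability mass of $\nu_\gamma$, and to bound $\tilde{w}_\gamma(\rho)$ pointwise using a counting bound on Stirling numbers. Fix $\gamma\in\mathcal{C}$, write $k\coloneqq\norm{\gamma}$, and let $N\coloneqq\abs{\rho}=k+n$. A sequence $\rho$ of length $N$ with support $E(\gamma)$ is an ordered surjection onto $E(\gamma)$. There are $\genfrac{\{}{\}}{0pt}{}{N}{k}k!$ such sequences, as already used in the proof of Lemma~\ref{lemma:StoquasticWeightBound}. Hence
\begin{equation}
    \nu_\gamma(\rho)=e^{-\beta k}\frac{(\beta k)^{N-k}}{(N-k)!}\cdot\frac{1}{\genfrac{\{}{\}}{0pt}{}{N}{k}k!}. \notag
\end{equation}

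\textbf{Unbiasedness.} Multiply $\nu_\gamma(\rho)$ by $\tilde{w}_\gamma(\rho)$. Three cancellations occur.
\begin{itemize}
    \item The Stirling numbers cancel.
    \item The factor $e^{-\beta k}$ cancels the $e^{\beta k}$ coming from $(\beta e^\beta)^k$.
    \item The powers combine as $\beta^{N-k}k^{N-k}\beta^k k^k/k^N=\beta^N$, and $(N-k)!\,k!\binom{N}{k}=N!$.
\end{itemize}
This gives $\nu_\gamma(\rho)\tilde{w}_\gamma(\rho)=\frac{(-\beta)^N}{N!}\Tr[\prod_{e\in\rho}\Phi(e)]$. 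Summing over all $\rho$ with support $E(\gamma)$ recovers exactly the Taylor/inclusion--exclusion series for $w_\gamma$ from the proof of Lemma~\ref{lemma:StoquasticPolymerRepresentation}, so $\mathbb{E}_{\nu_\gamma}[\tilde{w}_\gamma]=w_\gamma$.

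Nonnegativity of $\tilde{w}_\gamma(\rho)$, which the definition of a probabilistic polymer model requires, follows from stoquasticity. Every matrix element of every $\Phi(e)$ is non-positive, so $(-1)^N\Tr[\prod_{e\in\rho}\Phi(e)]\geq0$. The same argument shows that $\tilde{\mu}_{\gamma,\rho}$ is a probability distribution whenever $\tilde{w}_\gamma(\rho)>0$.

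\textbf{Coloured identity.} The third identity uses the identical computation. When $\tilde{w}_\gamma(\rho)>0$, the product $\tilde{w}_\gamma(\rho)\tilde{\mu}_{\gamma,\rho}(\sigma)$ equals the expression for $\tilde{w}_\gamma(\rho)$ with $\Tr[\prod\Phi(e)]$ replaced by $\Tr[\ketbra{\sigma}{\sigma}\prod\Phi(e)]$. When $\tilde{w}_\gamma(\rho)=0$, we interpret the product as $0$. This is consistent, because the positive diagonal terms then sum to zero, so each term $\Tr[\ketbra{\sigma}{\sigma}\prod\Phi(e)]$ vanishes. Summing against $\nu_\gamma$ then gives the series for $w_\gamma\mu_\gamma(\sigma)$ in Lemma~\ref{lemma:StoquasticPolymerRepresentation}.

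\textbf{Pointwise bound.} Because the trace is normalised and $\norm{\Phi(e)}\leq1$, we have $\abs{\Tr[\prod_{e\in\rho}\Phi(e)]}\leq1$. So it suffices to show
\begin{equation}
    \genfrac{\{}{\}}{0pt}{}{N}{k}k^k\leq\binom{N}{k}k^N, \quad\text{that is,}\quad \genfrac{\{}{\}}{0pt}{}{N}{k}k!\leq\binom{N}{k}k!\,k^{N-k}. \notag
\end{equation}
This is the step I expect to need the most care. I would prove it by an injection: map each ordered surjection $[N]\to E(\gamma)$ to the triple consisting of
\begin{itemize}
    \item the set of positions where each label occurs for the first time ($\binom{N}{k}$ choices at most),
    \item the bijection assigning labels to those positions ($k!$ choices),
    \item the labels at the remaining $N-k$ positions ($k^{N-k}$ choices).
\end{itemize}
This gives $\tilde{w}_\gamma(\rho)\leq(\beta e^\beta)^k$. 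Finally, for $\beta\leq\frac{1}{e^{r+3}\Delta}$ we have $e^\beta\leq e^{1/(3e^3)}<1.02$. Then
\begin{equation}
    \beta e^\beta\leq\frac{1.02}{e\cdot e^{r+2}\Delta}\leq\frac{1}{2e^{r+2}(\Delta-1)}, \notag
\end{equation}
where the last inequality holds because $2.04(\Delta-1)\leq e\Delta$. This yields the claimed bound $\tilde{w}_\gamma(\rho)\leq\left(\frac{1}{2e^{r+2}(\Delta-1)}\right)^{\norm{\gamma}}$.
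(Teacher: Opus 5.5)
Your proposal is correct and follows the paper's proof essentially step for step. It uses the same cancellation of the Poisson and uniform-surjection probabilities against $\tilde{w}_\gamma(\rho)$ to recover the Taylor/inclusion--exclusion series (for both identities), and the same pointwise bound via $\abs{\Tr[\prod_{e\in\rho}\Phi(e)]}\leq1$ and $\genfrac{\{}{\}}{0pt}{}{\abs{\rho}}{\norm{\gamma}}\norm{\gamma}^{\norm{\gamma}}\leq\binom{\abs{\rho}}{\norm{\gamma}}\norm{\gamma}^{\abs{\rho}}$. You also soundly supply details the paper merely asserts: the first-occurrence injection proving the Stirling inequality, the explicit check that $\beta e^\beta\leq\frac{1}{2e^{r+2}(\Delta-1)}$, and the convention for $\tilde{w}_\gamma\tilde{\mu}_{\gamma,\rho}$ when $\tilde{w}_\gamma(\rho)=0$.
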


\begin{proof}
    Fix a polymer $\gamma\in\mathcal{C}$. We first show that $\mathbb{E}_{\nu_\gamma}\left[\tilde{w}_\gamma\right]=w_\gamma$. Since there are $\genfrac{\{}{\}}{0pt}{}{\norm{\gamma}+n}{\norm{\gamma}}\norm{\gamma}!$ sequences $\rho$ of length $\norm{\gamma}+n$ whose support is $E(\gamma)$,
    \begin{align}
        \mathbb{E}_{\nu_\gamma}[\tilde{w}_\gamma] &= \sum_{n=0}^\infty\frac{(\beta\norm{\gamma})^ne^{-\beta\norm{\gamma}}}{n!}\frac{1}{\genfrac{\{}{\}}{0pt}{}{\norm{\gamma}+n}{\norm{\gamma}}\norm{\gamma}!}\sum_{\substack{\rho \in E(\gamma)^{\norm{\gamma}+n} \\ \operatorname{supp}(\rho)=E(\gamma)}}\tilde{w}_\gamma(\rho) \notag \\
        &= \sum_{n=0}^\infty\frac{(-\beta)^{\norm{\gamma}+n}}{(\norm{\gamma}+n)!}\sum_{\substack{\rho \in E(\gamma)^{\norm{\gamma}+n} \\ \operatorname{supp}(\rho)=E(\gamma)}}\Tr\left[\prod_{e\in\rho}\Phi(e)\right] \notag \\
        &= \sum_{n=\norm{\gamma}}^\infty\frac{(-\beta)^n}{n!}\sum_{\substack{\rho \in E(\gamma)^n \\ \operatorname{supp}(\rho)=E(\gamma)}}\Tr\left[\prod_{e\in\rho}\Phi(e)\right] \notag \\
        &= w_\gamma, \notag
    \end{align}
    where the last equality follows by the Taylor series and the principle of inclusion-exclusion. We now show that $\tilde{w}_\gamma(\rho)\leq\left(\frac{1}{2e^{r+2}(\Delta-1)}\right)^\norm{\gamma}$ for all $\rho\in\operatorname{supp}(\nu_\gamma)$. Since $\mathcal{S}$ is stoquastic and $\norm{\Phi(e)}\leq1$ for all $e\in E(G)$,
    \begin{align}
        \tilde{w}_\gamma(\rho) &= \frac{(\beta e^\beta)^\norm{\gamma}(-1)^\abs{\rho}\genfrac{\{}{\}}{0pt}{}{\abs{\rho}}{\norm{\gamma}}\norm{\gamma}^\norm{\gamma}}{\binom{\abs{\rho}}{\norm{\gamma}}\norm{\gamma}^\abs{\rho}}\Tr\left[\prod_{e\in\rho}\Phi(e)\right] \notag \\
        &\leq \frac{(\beta e^\beta)^\norm{\gamma}\genfrac{\{}{\}}{0pt}{}{\abs{\rho}}{\norm{\gamma}}\norm{\gamma}^\norm{\gamma}}{\binom{\abs{\rho}}{\norm{\gamma}}\norm{\gamma}^\abs{\rho}} \notag \\
        &\leq (\beta e^\beta)^\norm{\gamma}, \notag
    \end{align}
    where we have used $\genfrac{\{}{\}}{0pt}{}{\abs{\rho}}{\norm{\gamma}}\norm{\gamma}^\norm{\gamma}\leq\binom{\abs{\rho}}{\norm{\gamma}}\norm{\gamma}^\abs{\rho}$. By taking $\beta\leq\frac{1}{e^{r+3}\Delta}$,
    \begin{equation}
        \tilde{w}_\gamma(\rho) \leq \left(\frac{1}{2e^{r+2}(\Delta-1)}\right)^\norm{\gamma}. \notag
    \end{equation}
    Finally, we show that $\mathbb{E}_{\nu_\gamma}\left[\tilde{w}_\gamma\tilde{\mu}_{\gamma,\rho}(\sigma)\right]=w_\gamma\mu_\gamma(\sigma)$ for all $\gamma\in\mathcal{C}$ with $w_\gamma>0$ and $\sigma\in[d]^{V(\gamma)}$. By a similar argument as for $\mathbb{E}_{\nu_\gamma}[\tilde{w}_\gamma]=w_\gamma$,
    \begin{align}
        \mathbb{E}_{\nu_\gamma}\left[\tilde{w}_\gamma\tilde{\mu}_{\gamma,\rho}(\sigma)\right] &= \sum_{n=\norm{\gamma}}^\infty\frac{(-\beta)^n}{n!}\sum_{\substack{\rho \in E(\gamma)^n \\ \operatorname{supp}(\rho)=E(\gamma)}}\Tr\left[\ketbra{\sigma}{\sigma}\prod_{e\in\rho}\Phi(e)\right] \notag \\
        &= w_\gamma\mu_\gamma(\sigma). \notag
    \end{align}
    This completes the proof.
\end{proof}

\begin{lemma}
    \label{lemma:ProbabilisticStoquasticKappaComputable}
    Let $\mathcal{S}=(\Phi,d,\beta)$ be a stoquastic spin system on a graph $G$, let $\mathcal{P}=(\mathcal{C},w,\sim)$ be the stoquastic polymer model of $\mathcal{S}$, and let $\mathcal{P}_\circ=(\mathcal{C},\tilde{w},\sim)$ be the probabilistic stoquastic polymer model of $\mathcal{P}$. Then $\mathcal{P}_\circ$ is $\kappa$-computable with $\kappa=2\log(d)$.
\end{lemma}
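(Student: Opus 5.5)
To prove Lemma~\ref{lemma:ProbabilisticStoquasticKappaComputable}, we exhibit a procedure that produces a sample of $\tilde{w}_\gamma$ (together with a sample from $\tilde{\mu}_{\gamma,\rho}$) in expected time $d^{2\norm{\gamma}}\norm{\gamma}^{O(1)}$. The procedure has three stages: draw $\rho\sim\nu_\gamma$, evaluate the scalar prefactor, and evaluate $\Tr\bigl[\prod_{e\in\rho}\Phi(e)\bigr]$. We bound the cost of each in turn.

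\emph{Sampling $\rho$.} First draw $n\sim\operatorname{Poi}(\beta\norm{\gamma})$; this takes expected time $O(1+\beta\norm{\gamma})$. Next we need a uniform sequence of length $\ell=\norm{\gamma}+n$ whose support is exactly $E(\gamma)$. The simplest route is rejection sampling: draw a uniform sequence in $E(\gamma)^\ell$ and accept when it covers every edge. This can have poor acceptance probability, so instead we sample a uniform surjection directly. We use the standard recursion for Stirling numbers of the second kind, $\genfrac{\{}{\}}{0pt}{}{\ell}{k}=k\genfrac{\{}{\}}{0pt}{}{\ell-1}{k}+\genfrac{\{}{\}}{0pt}{}{\ell-1}{k-1}$. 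This lets us sample a uniform set partition of $[\ell]$ into $\norm{\gamma}$ blocks, after which we assign the blocks to edges by a uniform bijection. The cost is polynomial in $\ell$ and $\norm{\gamma}$. Because $n$ is Poisson with mean $\beta\norm{\gamma}\le\norm{\gamma}$, every moment $\mathbb{E}[\ell^{O(1)}]$ is $\norm{\gamma}^{O(1)}$. The same computation gives the prefactor, which involves $\genfrac{\{}{\}}{0pt}{}{\ell}{\norm{\gamma}}$, a binomial coefficient, and powers; it is computed in time $\ell^{O(1)}$.

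\emph{Computing the trace.} This is the dominant cost. The naive approach forms $\prod_{e\in\rho}\Phi(e)$ as a $d^{\abs{\gamma}}\times d^{\abs{\gamma}}$ matrix, which costs $d^{3\abs{\gamma}}$ per multiplication. We avoid this by never forming full matrix–matrix products. Instead, write
\[
\Tr\Bigl[\prod_{e\in\rho}\Phi(e)\Bigr]=\sum_{\sigma}\bra{\sigma}\prod_{e\in\rho}\Phi(e)\ket{\sigma}.
\]
Each $\Phi(e)$ acts on only two sites, so applying it to a vector of dimension $d^{\abs{\gamma}}$ costs $O(d^{\abs{\gamma}}d^2)$. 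Hence we can compute the full matrix $\prod_{e\in\rho}\Phi(e)$ by applying the $\ell$ two-site factors in turn, at a total cost of $\ell\,d^{2\abs{\gamma}}d^{2}$; equivalently, this is $d^{\abs{\gamma}}$ basis vectors, each pushed through $\ell$ local applications. Since $\gamma$ is connected, $\abs{\gamma}\le\norm{\gamma}+1$, so this is $d^{2\norm{\gamma}}\norm{\gamma}^{O(1)}$ for fixed $d$. The diagonal entries obtained along the way also give $\tilde{\mu}_{\gamma,\rho}$ directly, so sampling from it costs nothing extra.

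Combining the three stages, the expected time is $e^{2\log(d)\norm{\gamma}}\norm{\gamma}^{O(1)}$, which is what we need. The main obstacle is bookkeeping rather than any single hard step. We must check that the local-application bound gives exponent $2$ rather than $3$. We must also check that the $d^{2}$ overhead and the shift from $\abs{\gamma}$ to $\norm{\gamma}+1$ contribute only constant factors, which are absorbed for fixed $d$.
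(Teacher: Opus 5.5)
Your proposal is correct and follows the same route as the paper: sample $n\sim\operatorname{Poi}(\beta\norm{\gamma})$ and a uniform surjection onto $E(\gamma)$, then push each of the $d^{\abs{\gamma}}$ basis vectors through the $\abs{\rho}$ two-site factors at total cost $d^{2\abs{\gamma}}\norm{\gamma}^{O(1)}$, and read off both $\tilde{w}_\gamma(\rho)$ and $\tilde{\mu}_{\gamma,\rho}$ from the resulting diagonal entries. You fill in details the paper leaves implicit, namely the Stirling-recursion surjection sampler and the bound $\abs{\gamma}\le\norm{\gamma}+1$; the only minor slip is that $\beta\norm{\gamma}\le\norm{\gamma}$ tacitly assumes $\beta\le 1$, which is harmless because $\beta$ is a fixed constant (and $\beta<1$ in every application), so $\mathbb{E}[\abs{\rho}^{O(1)}]=\norm{\gamma}^{O(1)}$ regardless.
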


\begin{proof}
    A sample $\rho\sim\nu_\gamma$ can be obtained in expected time $\norm{\gamma}^{O(1)}$ by sampling $n\sim\operatorname{Poi}(\beta\norm{\gamma})$ and then sampling a uniform surjection of length $\norm{\gamma}+n$ onto $E(\gamma)$. For such a sample $\rho$, the product of interactions can be applied to each of the $d^\abs{\gamma}$ basis vectors of dimension $d^\abs{\gamma}$ in expected time $d^{2\norm{\gamma}}\norm{\gamma}^{O(1)}$. Hence, $\tilde{w}_\gamma(\rho)$ can be computed and $\tilde{\mu}_{\gamma,\rho}$ can be sampled in expected time $e^{2\log(d)\norm{\gamma}}\norm{\gamma}^{O(1)}$, and so $\mathcal{P}_\circ$ is $\kappa$-computable with $\kappa=2\log(d)$.
\end{proof}

We now prove Theorem~\ref{theorem:ProbabilisticStoquasticModelAlgorithm}.

\begin{proof}[Proof of Theorem~\ref*{theorem:ProbabilisticStoquasticModelAlgorithm}]
    Let $\mathcal{P}=(\mathcal{C},w,\sim)$ be the stoquastic polymer model of $\mathcal{S}$ and let $\mathcal{P}_\circ=(\mathcal{C},\tilde{w},\sim)$ be the probabilistic stoquastic polymer model of $\mathcal{P}$. By Lemma~\ref{lemma:StoquasticPolymerRepresentation}, $\mu_{\rho_\mathcal{S}}=\hat{\mu}_\mathcal{P}$ and $Z_\mathcal{S}=Z_\mathcal{P}$, so it suffices to sample from $\hat{\mu}_\mathcal{P}$ and approximate $Z_\mathcal{P}$. A sample from $\hat{\mu}_\mathcal{P}$ is obtained by sampling $\Gamma\sim\mu_\mathcal{P}$ and then sampling spin configurations $\sigma_{V(\gamma)}\sim\tilde{\mu}_{\gamma,\rho_\gamma}$ for each $\gamma\in\Gamma$ and $\sigma_v\sim\mu_v$ for each $v \in V(G){\setminus}V(\Gamma)$, where $\rho_\gamma$ denotes the sequence sampled by the accepting iteration of the graphlet sampler that produced $\gamma$. By Lemma~\ref{lemma:StoquasticWeightEstimator} with $r=2\log(d)$, for all $\gamma\in\mathcal{C}$ and $\rho\in\operatorname{supp}(\nu_\gamma)$,
    \begin{equation}
        \mathbb{E}_{\nu_\gamma}\left[\tilde{w}_\gamma\right]=w_\gamma \quad\text{and}\quad \tilde{w}_\gamma(\rho) \leq \left(\frac{1}{2e^2d^2(\Delta-1)}\right)^\norm{\gamma}, \notag
    \end{equation}
    and, for all $\gamma\in\mathcal{C}$ with $w_\gamma>0$ and $\sigma\in[d]^{V(\gamma)}$,
    \begin{equation}
        \mathbb{E}_{\nu_\gamma}\left[\tilde{w}_\gamma\tilde{\mu}_{\gamma,\rho}(\sigma)\right] = w_\gamma\mu_\gamma(\sigma). \notag
    \end{equation}
    In particular, $\mathcal{P}_\circ$ satisfies the conditions of Lemma~\ref{lemma:ProbabilisticGraphletSampler}. Since $\mathcal{P}$ is a subgraph polymer model on $G$ of maximum degree at most $\Delta$, it is equivalent to a graphlet polymer model on $L(G)$, which has maximum degree at most $2(\Delta-1)$. By Lemma~\ref{lemma:ProbabilisticStoquasticKappaComputable}, $\mathcal{P}_\circ$ is $\kappa$-computable with $\kappa=2\log(d)$. By applying Theorem~\ref{theorem:PolymerSamplingAlgorithm} and Theorem~\ref{theorem:PolymerCountingAlgorithm} with $r=\kappa$, there is an $\epsilon$-approximate sampling algorithm for $\hat{\mu}_\mathcal{P}$ with expected runtime $O(\abs{G}\log(\abs{G}/\epsilon))$ and an $\epsilon$-approximate counting algorithm for $Z_\mathcal{P}$ with expected runtime $O(\abs{G}^2\epsilon^{-2}\log(\abs{G}/\epsilon)^2)$. This completes the proof.
\end{proof}

We now improve the inverse temperature condition of Theorem~\ref{theorem:ProbabilisticStoquasticModelAlgorithm} by considering a truncated polymer model approach, at the cost of increased runtime.

\begin{corollary}
    \label{corollary:TruncatedStoquasticModelAlgorithm}
    Fix $\Delta\in\mathbb{Z}_{\geq3}$ and $r\geq0$. Let $\mathcal{S}=(\Phi,d,\beta)$ be a stoquastic spin system on a graph $G$ of maximum degree at most $\Delta$. Let $\beta$ be such that
    \begin{equation}
        \beta \leq \frac{1}{e^{r+3}\Delta}. \notag
    \end{equation}
    Then, for any $\epsilon>0$, there is an $\epsilon$-approximate sampling algorithm for $\mu_{\rho_\mathcal{S}}$ and an $\epsilon$-approximate counting algorithm for $Z_\mathcal{S}$. The sampling algorithm has expected runtime $O(\abs{G}\log(\abs{G}/\epsilon))$ for $r\geq2\log(d)$ and expected runtime $\abs{G}^{2\log(d)-r+1}\epsilon^{-(2\log(d)-r)}\log(\abs{G}/\epsilon)^{O(1)}$ for $r<2\log(d)$. The counting algorithm has expected runtime $O(\abs{G}^2\epsilon^{-2}\log(\abs{G}/\epsilon)^2)$ for $r\geq2\log(d)$ and expected runtime $\abs{G}^{2\log(d)-r+2}\epsilon^{-(2\log(d)-r+2)}\log(\abs{G}/\epsilon)^{O(1)}$ for $r<2\log(d)$.
\end{corollary}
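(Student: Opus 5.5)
The plan is to combine the probabilistic stoquastic polymer model with the truncation framework of Corollary~\ref{corollary:TruncatedPolymerAlgorithm}, in the same way that Theorem~\ref{theorem:ProbabilisticStoquasticModelAlgorithm} combined it with Theorems~\ref{theorem:PolymerSamplingAlgorithm} and~\ref{theorem:PolymerCountingAlgorithm}.

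\textbf{Setting up the model.} Let $\mathcal{P}$ be the stoquastic polymer model of $\mathcal{S}$ and $\mathcal{P}_\circ$ its probabilistic stoquastic polymer model. By Lemma~\ref{lemma:StoquasticPolymerRepresentation}, it suffices to sample from $\hat{\mu}_\mathcal{P}$ and to approximate $Z_\mathcal{P}$. Lemma~\ref{lemma:StoquasticWeightEstimator}, with the given $r$ and $\beta\leq\frac{1}{e^{r+3}\Delta}$, gives three facts:
\begin{itemize}
\item the estimator is unbiased;
\item $\tilde{w}_\gamma(\rho)\leq\left(\frac{1}{e^{r+2}\cdot 2(\Delta-1)}\right)^{\norm{\gamma}}$ pointwise;
\item the colouring identity $\mathbb{E}[\tilde{w}_\gamma\tilde{\mu}_{\gamma,\rho}(\sigma)]=w_\gamma\mu_\gamma(\sigma)$ holds.
\end{itemize}
Since $\mathbb{E}[\tilde{w}_\gamma]=w_\gamma$, the deterministic weights obey the same bound. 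Passing to the graphlet polymer model on $L(G)$, whose maximum degree is at most $2(\Delta-1)$, this is exactly the weight decay hypothesis $w_\gamma\leq(e^{r+2}\Delta')^{-\abs{\gamma}}$ with $\Delta'=2(\Delta-1)\geq 4$. By Lemma~\ref{lemma:ProbabilisticStoquasticKappaComputable}, $\mathcal{P}_\circ$ is $\kappa$-computable with $\kappa=2\log(d)$.

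\textbf{Applying the truncated framework.} I would apply Corollary~\ref{corollary:TruncatedPolymerAlgorithm} with $\kappa=2\log d$, $\Delta'$ in place of $\Delta$, and $\abs{L(G)}=\norm{G}\leq\Delta\abs{G}/2=O(\abs{G})$. I would use the remark after Theorem~\ref{theorem:PolymerSamplingAlgorithm} to justify running the sampler with probabilistic weights.

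Concretely, truncate to $\mathcal{P}_m$ with $m=O(\log(\abs{G}/\epsilon))$. Lemma~\ref{lemma:TruncatedPolymerModelApproximation} needs only $r\geq0$ decay, so it gives TV and multiplicative error $\epsilon/2$. The truncated probabilistic model inherits the pointwise bound and $\kappa$-computability. Lemma~\ref{lemma:ProbabilisticGraphletSampler} then gives an exact sampler for $\hat{\pi}_v$, with expected cost $O(1)$ if $r\geq\kappa$ and $m^{O(1)}e^{(\kappa-r)m}$ otherwise. Substituting $e^{(\kappa-r)m}=(\abs{G}/\epsilon)^{O(\kappa-r)}$ into the mixing-time bound of Lemma~\ref{lemma:PolymerDynamicsMixingTime} and the annealing schedule of Theorem~\ref{theorem:PolymerCountingAlgorithm} yields the stated exponents, with constants absorbed as in the corollary.

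\textbf{Recovering spin configurations.} A spin sample is produced as in the proof of Theorem~\ref{theorem:ProbabilisticStoquasticModelAlgorithm}. For each polymer in the final configuration, draw $\sigma_{V(\gamma)}\sim\tilde{\mu}_{\gamma,\rho_\gamma}$, using the $\rho_\gamma$ from the accepting iteration. Free vertices are coloured uniformly. The colouring identity in Lemma~\ref{lemma:StoquasticWeightEstimator} guarantees that the joint output law of $(\gamma,\sigma_{V(\gamma)})$ from the graphlet sampler is proportional to $w_\gamma\mu_\gamma$. Hence the coloured output of the dynamics converges to $\hat{\mu}_{\mathcal{P}_m}$. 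The TV error from truncation carries over to the coloured distribution, since colouring is a post-processing channel and cannot increase TV distance.

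\textbf{Main obstacle.} The step needing the most care is this justification that the probabilistic weights and the conditional colourings compose correctly inside the Markov chain. In particular, each step must resample a fresh $\rho$, and the retained $\rho_\gamma$ must be the one from acceptance, so that $\mu_{\mathcal{P}_m}$ remains stationary with the correct polymer colourings. I would verify that the single polymer sampler's Bernoulli trick uses an independent iteration, so the colouring attached to the returned $\gamma$ is unbiased. The runtime bookkeeping, converting $e^{(\kappa-r)m}$ into powers of $\abs{G}/\epsilon$, is routine given Corollary~\ref{corollary:TruncatedPolymerAlgorithm}.
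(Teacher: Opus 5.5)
Your proposal is correct and follows the paper's proof, which simply reruns the proof of Theorem~\ref{theorem:ProbabilisticStoquasticModelAlgorithm} with Corollary~\ref{corollary:TruncatedPolymerAlgorithm} in place of Theorems~\ref{theorem:PolymerSamplingAlgorithm} and~\ref{theorem:PolymerCountingAlgorithm}. Your extra discussion of why the retained $\rho_\gamma$ gives correctly coloured polymers, and why the colouring kernel cannot increase TV distance, fills in details the paper leaves implicit. One small point: getting the exact exponent $\kappa-r$ rather than $O(\kappa-r)$ needs $m=\log(\abs{G}/\epsilon)+O(1)$, which Lemma~\ref{lemma:TruncatedPolymerModelApproximation} does provide.
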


\begin{proof}
    The proof follows that of Theorem~\ref{theorem:ProbabilisticStoquasticModelAlgorithm} with Corollary~\ref{corollary:TruncatedPolymerAlgorithm} replacing Theorem~\ref{theorem:PolymerSamplingAlgorithm} and Theorem~\ref{theorem:PolymerCountingAlgorithm}.
\end{proof}

\subsection{Ferromagnetic Heisenberg Models}
\label{section:FerromagneticHeisenbergModels}

We now consider ferromagnetic Heisenberg models. Let $\mathcal{S}=(\beta,h)$ denote a ferromagnetic Heisenberg model on a graph $G$ with inverse temperature $\beta\geq0$ and field strength $h\in\mathbb{R}$, defined by the Hamiltonian
\begin{equation}
    H_\mathcal{S} \coloneqq -\sum_{e \in E(G)}T_e-h\sum_{v \in V(G)}Z_v, \notag
\end{equation}
where $T_e$ denotes the operator that transposes the spins at the endpoints of $e$, and $Z_v$ denotes the Pauli $Z$ operator on the spin at vertex $v$. Our main result for the ferromagnetic Heisenberg model is as follows.

\begin{theorem}
    \label{theorem:FerromagneticHeisenbergModelAlgorithm}
    Fix $\Delta\in\mathbb{Z}_{\geq3}$. Let $\mathcal{S}=(\beta,h)$ be a ferromagnetic Heisenberg model on a graph $G$ of maximum degree at most $\Delta$. Let $\beta$ be such that
    \begin{equation}
        \beta \leq \frac{1}{e^3\Delta}. \notag
    \end{equation}
    Then, for any $\epsilon>0$, there is an $\epsilon$-approximate sampling algorithm for $\mu_{\rho_\mathcal{S}}$ with expected runtime $O(\abs{G}\log(\abs{G}/\epsilon))$ and an $\epsilon$-approximate counting algorithm for $Z_\mathcal{S}$ with expected runtime $O(\abs{G}^2\epsilon^{-2}\log(\abs{G}/\epsilon)^2)$.
\end{theorem}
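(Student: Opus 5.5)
The plan is to pass to Tóth's cycle representation, which gives a probabilistic subgraph polymer model whose weight estimators need only polynomial time per polymer, so $\kappa=0$. I would then invoke Lemma~\ref{lemma:ProbabilisticGraphletSampler} together with Theorem~\ref{theorem:PolymerSamplingAlgorithm} and Theorem~\ref{theorem:PolymerCountingAlgorithm} with $r=\kappa=0$ on the line graph $L(G)$, which has maximum degree at most $2(\Delta-1)$. The gain over Theorem~\ref{theorem:ProbabilisticStoquasticModelAlgorithm} is that the exponential-in-$\norm{\gamma}$ cost of trace evaluation disappears, so no factor $d^2$ is needed in the temperature bound.

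\textbf{Step 1: cycle representation.} Since $\sum_v Z_v$ commutes with every $T_e$, we have $e^{-\beta H_\mathcal{S}}=e^{\beta h\sum_v Z_v}e^{\beta\sum_e T_e}$. Writing $e^{\beta\sum_e T_e}=e^{\beta\norm{G}}\,\mathbb{E}_\omega\bigl[\prod T\bigr]$, where $\omega$ is a rate-one Poisson process of transpositions on $E(G)\times[0,\beta]$ and the product is time-ordered, the normalised trace factorises over the cycles $C$ of the resulting permutation. This gives
\begin{equation}
Z_\mathcal{S}=e^{\beta\norm{G}}\cosh(\beta h)^{\abs{G}}\,\mathbb{E}_\omega\Bigl[\prod_C \tfrac{2\cosh(\beta h\abs{C})}{(2\cosh\beta h)^{\abs{C}}}\Bigr].\notag
\end{equation}
Similarly, conditional on $\omega$, the diagonal distribution assigns each cycle an independent constant spin $\pm1$ with probabilities proportional to $e^{\pm\beta h\abs{C}}$. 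I would check this last claim carefully by taking $\Tr[\ketbra{\sigma}{\sigma}\,\cdot\,]$ of the same expansion.

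\textbf{Step 2: polymers.} Let $S$ be the set of edges that ring at least once; the connected components of $S$ are vertex-disjoint connected subgraphs. We have $\Pr[S]=(1-e^{-\beta})^{\abs{S}}e^{-\beta(\norm{G}-\abs{S})}$, the cycle functional factorises over components, and isolated vertices contribute factor $1$. Dividing out the prefactors gives $Z_\mathcal{S}\propto Z_\mathcal{P}$ with
\begin{equation}
w_\gamma=(e^\beta-1)^{\norm{\gamma}}\,\mathbb{E}\bigl[f_\gamma(\omega)\bigr],\qquad f_\gamma(\omega)=\prod_C \tfrac{2\cosh(\beta h\abs{C})}{(2\cosh\beta h)^{\abs{C}}},\notag
\end{equation}
where the expectation is over Poisson processes on $E(\gamma)\times[0,\beta]$ conditioned on every edge ringing. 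The estimator is $\tilde{w}_\gamma=(e^\beta-1)^{\norm{\gamma}}f_\gamma(\omega)$, which is unbiased. Since $2\cosh(ka)\le(2\cosh a)^k$, we have $0\le f_\gamma\le1$ and hence $\tilde{w}_\gamma\le(e^\beta-1)^{\norm{\gamma}}$. The accompanying distribution $\tilde{\mu}_{\gamma,\omega}$ colours the cycles as in Step 1, and unmatched vertices get spins $\pm1$ with probabilities $e^{\pm\beta h}/(2\cosh\beta h)$. The prefactor $e^{\beta\norm{G}}\cosh(\beta h)^{\abs{G}}$ is computed exactly. Sampling $\omega$ (truncated Poisson counts, uniform times) and computing its cycles takes expected time $\norm{\gamma}^{O(1)}$, so the model is $0$-computable.

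\textbf{Step 3: the bound.} Take $r=0$ and $\Delta'=2(\Delta-1)$; we need $e^\beta-1\le\frac{1}{2e^2(\Delta-1)}$. For $\beta\le\frac{1}{e^3\Delta}$ we have $e^\beta-1\le\beta e^\beta\le\frac{1.02}{e^3\Delta}$, and $\frac{2\cdot1.02(\Delta-1)}{e\Delta}<1$, so the condition holds. The expected runtimes $O(\abs{G}\log(\abs{G}/\epsilon))$ and $O(\abs{G}^2\epsilon^{-2}\log(\abs{G}/\epsilon)^2)$ then follow exactly as in the proof of Theorem~\ref{theorem:ProbabilisticStoquasticModelAlgorithm}, with spins sampled from $\tilde{\mu}_{\gamma,\omega_\gamma}$ using the $\omega_\gamma$ from the accepting iteration.

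The main obstacle is Step 1 combined with the field. The factorisation of the cycle weights and spin distributions in the presence of $h$, and the claim that $\mathbb{E}[\tilde{w}_\gamma\tilde{\mu}_{\gamma,\omega}(\sigma)]$ equals the exact coloured weight, require the commutation argument and a careful diagonal-trace computation. I would first verify these for a single edge.
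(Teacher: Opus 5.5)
Your proposal is correct and follows essentially the paper's route. It uses the same T\'oth cycle polymer model: your $w_\gamma=(e^\beta-1)^{\norm{\gamma}}\mathbb{E}[f_\gamma]$ coincides with the paper's $\sum_{\rho\in P_\gamma}\frac{\beta^{\abs{\rho}}}{\abs{\rho}!}\prod_{c\in C_\rho}\frac{2\cosh(\beta h\abs{c})}{(2\cosh(\beta h))^{\abs{c}}}$, which the paper derives by inclusion--exclusion and Taylor expansion rather than by your Poisson-process conditioning. It also has the same $\kappa=0$ and applies the framework with $r=0$ on $L(G)$.

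The one real difference is the estimator. You sample $\rho$ exactly from the law proportional to $\beta^{\abs{\rho}}/\abs{\rho}!$ on $P_\gamma$, so the importance factor is the constant $(e^\beta-1)^{\norm{\gamma}}$. The paper instead draws $n\sim\operatorname{Poi}(\beta\norm{\gamma})$ and a uniform surjection of length $\norm{\gamma}+n$, then corrects with a Stirling-number ratio bounded by $(\beta e^\beta)^{\norm{\gamma}}$. Yours is marginally tighter and simpler to sample, and both suffice for $\beta\le\frac{1}{e^3\Delta}$.
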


We prove Theorem~\ref{theorem:FerromagneticHeisenbergModelAlgorithm} by showing that the conditions required to apply Theorem~\ref{theorem:PolymerSamplingAlgorithm} and Theorem~\ref{theorem:PolymerCountingAlgorithm} are satisfied. In particular, we show that (1) the ferromagnetic Heisenberg model admits a suitable subgraph polymer model representation, (2) the Heisenberg polymer model admits a suitable probabilistic representation, and (3) the probabilistic Heisenberg polymer model is $\kappa$-computable.

We now introduce a polymer model representation of the Heisenberg model based on the cycle representation of T\'oth~\cite{toth1993improved, goldschmidt2011quantum}. Let $\mathcal{S}=(\beta,h)$ be a ferromagnetic Heisenberg model on a graph $G$. We define the \emph{Heisenberg polymer model} of $\mathcal{S}$ to be the subgraph polymer model $\mathcal{P}=(\mathcal{C},w,\sim)$ on $G$, where $\mathcal{C}$ is the set of all connected subgraphs of $G$ with at least one edge, and $w:\mathcal{C}\to\mathbb{R}_{\geq0}$ is defined by
\begin{equation}
    w_\gamma \coloneqq \sum_{\rho \in P_\gamma}\frac{\beta^\abs{\rho}}{\abs{\rho}!}\prod_{c \in C_\rho}\frac{2\cosh(\beta h\abs{c})}{\left(2\cosh(\beta h)\right)^\abs{c}}, \notag
\end{equation}
where $P_\gamma$ denotes the set of all sequences $\rho$ of edges of $\gamma$ with $\operatorname{supp}(\rho)=E(\gamma)$, and $C_\rho$ denotes the cycles of the permutation on $V(\gamma)$ induced by $\prod_{e\in\rho}T_e$. For each $\gamma\in\mathcal{C}$ with $w_\gamma>0$, let $\mu_\gamma$ be the probability distribution on $\{-1,+1\}^{V(\gamma)}$ defined by
\begin{equation}
    \mu_\gamma(\sigma) \coloneqq \frac{1}{w_\gamma}\sum_{\rho \in P_\gamma}\frac{\beta^\abs{\rho}}{\abs{\rho}!}\prod_{c \in C_\rho}\frac{e^{\beta h\sum_{v \in c}\sigma_v}\mathds{1}_\text{$\sigma_c$ constant}}{\left(2\cosh(\beta h)\right)^\abs{c}}. \notag
\end{equation}
For each $v \in V(G)$, let $\mu_v$ be the probability distribution on $\{-1,+1\}$ defined by $\mu_v(\sigma)=\frac{e^{\beta h\sigma}}{2\cosh(\beta h)}$ for all $\sigma\in\{-1,+1\}$.

\begin{lemma}
    \label{lemma:FerromagneticHeisenbergPolymerRepresentation}
    Let $\mathcal{S}=(\beta,h)$ be a ferromagnetic Heisenberg model on a graph $G$, and let $\mathcal{P}=(\mathcal{C},w,\sim)$ be the Heisenberg polymer model of $\mathcal{S}$. Then the partition function of $\mathcal{S}$ satisfies $Z_\mathcal{S}=\cosh(\beta h)^\abs{G}Z_\mathcal{P}$ and the thermal distribution of $\mathcal{S}$ satisfies $\mu_{\rho_\mathcal{S}}=\hat{\mu}_\mathcal{P}$.
\end{lemma}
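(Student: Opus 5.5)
The plan is to expand $e^{-\beta H_\mathcal{S}}$ as a sum over sequences of transpositions, group the terms by the connected components of the edge support, and check that each component contributes exactly $w_\gamma$ (or $w_\gamma\mu_\gamma$), with the field contributing $\cosh(\beta h)$ per vertex.

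First, $\sum_{v}Z_v$ commutes with every $T_e$, since a transposition preserves the total magnetisation. Hence
\begin{equation}
e^{-\beta H_\mathcal{S}}=e^{\beta\sum_e T_e}\,e^{\beta h\sum_v Z_v}. \notag
\end{equation}
Expanding the first factor by the Taylor series gives
\begin{equation}
e^{-\beta H_\mathcal{S}}=\sum_{n\geq0}\frac{\beta^n}{n!}\sum_{\rho\in E(G)^n}\Big(\prod_{e\in\rho}T_e\Big)e^{\beta h\sum_v Z_v}. \notag
\end{equation}
I will group the sequences $\rho$ by their support $S=\operatorname{supp}(\rho)\subseteq E(G)$, and then by the connected components $\Gamma_S$ of $S$, which form an admissible set of the subgraph polymer model. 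The $n=0$ term corresponds to $S=\varnothing$.

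The key combinatorial step is to factorise the sum over sequences with support $S$ into a product over components. Transpositions on vertex-disjoint components commute, so the permutation $\prod_{e\in\rho}T_e$ depends only on the subsequences $\rho|_\gamma$ for $\gamma\in\Gamma_S$. It equals the product of the permutations induced by each $\rho|_\gamma$ on $V(\gamma)$, and it fixes every vertex outside $V(\Gamma_S)$. Conversely, a tuple of subsequences $(\rho_\gamma)_{\gamma\in\Gamma_S}$ with $\operatorname{supp}(\rho_\gamma)=E(\gamma)$ arises from exactly $\binom{n}{(|\rho_\gamma|)_\gamma}$ interleavings, where $n=\sum_\gamma|\rho_\gamma|$. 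This multinomial cancels the $1/n!$, giving the exponential generating function identity
\begin{equation}
\sum_{\operatorname{supp}(\rho)=S}\frac{\beta^{|\rho|}}{|\rho|!}F(\rho)=\prod_{\gamma\in\Gamma_S}\sum_{\rho\in P_\gamma}\frac{\beta^{|\rho|}}{|\rho|!}F_\gamma(\rho) \notag
\end{equation}
for any $F$ that factorises over components. I expect this bookkeeping, including the correct treatment of the normalised trace, to be the main obstacle.

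Next I evaluate the diagonal and trace contributions. For a permutation $\pi$ of $V(G)$ and $\sigma\in\{-1,+1\}^{V(G)}$, we have
\begin{equation}
\bra{\sigma}\pi\, e^{\beta h\sum_v Z_v}\ket{\sigma}=e^{\beta h\sum_v\sigma_v}\,\mathds{1}_{\pi\sigma=\sigma}, \notag
\end{equation}
and $\pi\sigma=\sigma$ holds exactly when $\sigma$ is constant on each cycle of $\pi$. This condition factorises over the cycles of $\pi$. Summing over $\sigma$, a cycle $c$ contributes $2\cosh(\beta h|c|)$, and a fixed vertex contributes $2\cosh(\beta h)$. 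With the normalisation $\Tr[\mathbb{I}]=1$ there is an extra factor $2^{-|G|}$. I will write it as $\cosh(\beta h)^{|G|}(2\cosh(\beta h))^{-|G|}$, distributing one factor $(2\cosh(\beta h))^{-1}$ to each vertex.

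Each vertex outside $V(\Gamma_S)$ is a fixed point, so $2\cosh(\beta h)\cdot(2\cosh(\beta h))^{-1}=1$. For each $\gamma$, the vertices of $V(\gamma)$ contribute $\prod_{c\in C_\rho}2\cosh(\beta h|c|)/(2\cosh(\beta h))^{|c|}$, because the cycles partition $V(\gamma)$. Combining this with the factorisation identity gives
\begin{equation}
Z_\mathcal{S}=\cosh(\beta h)^{|G|}\sum_{S}\prod_{\gamma\in\Gamma_S}w_\gamma=\cosh(\beta h)^{|G|}Z_\mathcal{P}, \notag
\end{equation}
using the bijection between supports $S$ and admissible sets $\Gamma$. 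Non-negativity of $w_\gamma$ is immediate since every term is non-negative.

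For the thermal distribution I repeat the computation without summing over $\sigma$. Each component $\gamma$ then contributes exactly $w_\gamma\mu_\gamma(\sigma_{V(\gamma)})$. Each free vertex contributes $e^{\beta h\sigma_v}/(2\cosh(\beta h))=\mu_v(\sigma_v)$. The overall prefactor $\cosh(\beta h)^{|G|}$ cancels against the same factor in $Z_\mathcal{S}$, which yields $\mu_{\rho_\mathcal{S}}=\hat{\mu}_\mathcal{P}$. Finally, $\mu_\gamma$ is non-negative, and summing it over $\sigma$ recovers $w_\gamma$ by the same cycle computation, so $\mu_\gamma$ is a valid probability distribution.
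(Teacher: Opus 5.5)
Your proof is correct. Its skeleton matches the paper's: expand into sequences of transpositions, group by the connected components of the support, evaluate diagonal matrix elements through constancy on cycles, and distribute one factor $(2\cosh(\beta h))^{-1}$ per vertex.

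The difference is in how the factorisation over polymers is obtained. The paper first applies inclusion--exclusion at the operator level, writing $Z_\mathcal{S}=\sum_{S\subseteq E(G)}(-1)^{|S|}\sum_{T\subseteq S}(-1)^{|T|}\Tr[e^{\beta h\sum_v Z_v}e^{\beta\sum_{e\in T}T_e}]$. Every $T\subseteq S$ splits over the vertex-disjoint components of $S$, so $e^{\beta\sum_{e\in T}T_e}$ is a tensor product. The normalised trace therefore factorises at once, after dividing by $\Tr[e^{\beta h\sum_{v\in V(\gamma)}Z_v}]=\cosh(\beta h)^{|\gamma|}$. Only then is each single-polymer factor Taylor-expanded. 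The alternating sum removes all sequences whose support is a proper subset of $E(\gamma)$, leaving the sum over $P_\gamma$.

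You instead expand the global exponential first and prove the factorisation directly by counting interleavings, so that the multinomial coefficient cancels $1/n!$. In effect, the two routes are M\"obius-inverse descriptions of the same grouping of sequences by support.

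What each buys:
\begin{itemize}
\item The paper's route makes the factorisation essentially free and reuses the template of its general stoquastic lemma, at the cost of the inclusion--exclusion/Taylor identification.
\item Your route requires the interleaving bookkeeping you flagged, which you handle correctly. In return it avoids inclusion--exclusion entirely and shows directly that $w_\gamma$ is the total weight of transposition sequences with support exactly $E(\gamma)$.
\end{itemize}
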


\begin{proof}
    We first show that $Z_\mathcal{S}=\cosh(\beta h)^\abs{G}Z_\mathcal{P}$. By the principle of inclusion-exclusion and commutativity,
    \begin{align}
        Z_\mathcal{S} &= \Tr\left[e^{-\beta H_\mathcal{S}}\right] \notag \\
        &= \sum_{S \subseteq E(G)}(-1)^\abs{S}\sum_{T \subseteq S}(-1)^\abs{T}\Tr\left[e^{\beta h\sum_{v \in V(G)}Z_v}e^{\beta\sum_{e \in T}T_e}\right]. \notag
    \end{align}
    For a subset $S \subseteq E(G)$, let $\Gamma_S$ denote the maximally connected components of $S$. By factorising over these components,
    \begin{align}
        Z_\mathcal{S} &= \Tr\left[e^{\beta h\sum_{v \in V(G)}Z_v}\right]\sum_{S \subseteq E(G)}\prod_{\gamma\in\Gamma_S}(-1)^\norm{\gamma}\sum_{T \subseteq E(\gamma)}(-1)^\abs{T}\frac{\Tr\left[e^{\beta h\sum_{v \in V(\gamma)}Z_v}e^{\beta\sum_{e \in T}T_e}\right]}{\Tr\left[e^{\beta h\sum_{v \in V(\gamma)}Z_v}\right]} \notag \\
        &= \cosh(\beta h)^\abs{G}\sum_{S \subseteq E(G)}\prod_{\gamma\in\Gamma_S}(-1)^\norm{\gamma}\sum_{T \subseteq E(\gamma)}(-1)^\abs{T}\frac{\Tr\left[e^{\beta h\sum_{v \in V(\gamma)}Z_v}e^{\beta\sum_{e \in T}T_e}\right]}{\cosh(\beta h)^\abs{\gamma}}. \notag
    \end{align}
    By the Taylor series and the principle of inclusion-exclusion,
    \begin{align}
        Z_\mathcal{S} &= \cosh(\beta h)^\abs{G}\sum_{S \subseteq E(G)}\prod_{\gamma\in\Gamma_S}\sum_{n=\norm{\gamma}}^\infty\frac{\beta^n}{n!}\sum_{\substack{\rho \in E(\gamma)^n \\ \operatorname{supp}(\rho)=E(\gamma)}}\frac{\Tr\left[e^{\beta h\sum_{v \in V(\gamma)}Z_v}\prod_{e\in\rho}T_e\right]}{\cosh(\beta h)^\abs{\gamma}} \notag \\
        &= \cosh(\beta h)^\abs{G}\sum_{S \subseteq E(G)}\prod_{\gamma\in\Gamma_S}\sum_{\rho \in P_\gamma}\frac{\beta^\abs{\rho}\Tr\left[e^{\beta h\sum_{v \in V(\gamma)}Z_v}\prod_{e\in\rho}T_e\right]}{\abs{\rho}!\cosh(\beta h)^\abs{\gamma}}. \notag
    \end{align}
    Since the trace is non-zero only when $\sigma$ is constant on each cycle $c \in C_\rho$,
    \begin{align}
         Z_\mathcal{S} &= \cosh(\beta h)^\abs{G}\sum_{S \subseteq E(G)}\prod_{\gamma\in\Gamma_S}\sum_{\rho \in P_\gamma}\frac{\beta^\abs{\rho}}{\abs{\rho}!}\prod_{c \in C_\rho}\frac{2\cosh(\beta h\abs{c})}{\left(2\cosh(\beta h)\right)^\abs{c}} \notag \\
         &= \cosh(\beta h)^\abs{G}\sum_{S \subseteq E(G)}\prod_{\gamma\in\Gamma_S}w_\gamma \notag \\
         &= \cosh(\beta h)^\abs{G}\sum_{\Gamma\in\mathcal{G}}\prod_{\gamma\in\Gamma}w_\gamma \notag \\
         &= \cosh(\beta h)^\abs{G}Z_\mathcal{P}. \notag
    \end{align}
    We now show that $\mu_{\rho_\mathcal{S}}=\hat{\mu}_\mathcal{P}$. For any spin configuration $\sigma\in\{-1,+1\}^{V(G)}$, we have
    \begin{align}
        \mu_{\rho_\mathcal{S}}(\sigma) &= \frac{1}{Z_\mathcal{S}}\Tr\left[\ketbra{\sigma}{\sigma}e^{-\beta H_\mathcal{S}}\right] \notag \\
        &= \frac{1}{Z_\mathcal{P}}\sum_{\Gamma\in\mathcal{G}}\frac{e^{\beta h\sum_{v \in V(G)}\sigma_v}}{\left(2\cosh(\beta h)\right)^\abs{G}}\prod_{\gamma\in\Gamma}\sum_{\rho \in P_\gamma}\frac{2^\abs{\gamma}\beta^\abs{\rho}}{\abs{\rho}!}\Tr\left[\ketbra{\sigma_{V(\gamma)}}{\sigma_{V(\gamma)}}\prod_{e\in\rho}T_e\right] \notag \\
        &= \frac{1}{Z_\mathcal{P}}\sum_{\Gamma\in\mathcal{G}}\frac{e^{\beta h\sum_{v \in V(G)}\sigma_v}}{\left(2\cosh(\beta h)\right)^\abs{G}}\prod_{\gamma\in\Gamma}\sum_{\rho \in P_\gamma}\frac{\beta^\abs{\rho}}{\abs{\rho}!}\prod_{c \in C_\rho}\mathds{1}_\text{$\sigma_c$ constant} \notag \\
        &= \frac{1}{Z_\mathcal{P}}\sum_{\Gamma\in\mathcal{G}}\frac{e^{\beta h\sum_{v \in V(G)}\sigma_v}}{\left(2\cosh(\beta h)\right)^\abs{G}}\prod_{\gamma\in\Gamma}\frac{\left(2\cosh(\beta h)\right)^\abs{\gamma}}{e^{\beta h\sum_{v \in V(\gamma)}\sigma_v}}w_\gamma\mu_\gamma(\sigma_{V(\gamma)}) \notag \\
        &= \sum_{\Gamma\in\mathcal{G}}\mu_\mathcal{P}(\Gamma)\prod_{\gamma\in\Gamma}\mu_\gamma(\sigma_{V(\gamma)})\prod_{v \in V(G){\setminus}V(\Gamma)}\mu_v(\sigma_v) \notag \\
        &= \hat{\mu}_\mathcal{P}(\sigma). \notag
    \end{align}
    Finally, we show that $w_\gamma\geq0$ for all $\gamma\in\mathcal{C}$, and that $\mu_\gamma$ is a valid probability distribution on $\{-1,+1\}^{V(\gamma)}$ for all $\gamma\in\mathcal{C}$ with $w_\gamma>0$. Since all terms in $w_\gamma$ and $\mu_\gamma$ are non-negative, we have $w_\gamma\geq0$ for all $\gamma\in\mathcal{C}$, and $\mu_\gamma(\sigma)\geq0$ for all $\gamma\in\mathcal{C}$ with $w_\gamma>0$ and $\sigma\in\{-1,+1\}^{V(\gamma)}$. By using $\sum_{\sigma\in\{-1,+1\}^c}e^{\beta h\sum_{v \in c}\sigma_v}\mathds{1}_\text{$\sigma$ constant}=2\cosh(\beta h\abs{c})$ for each $c \in C_\rho$, we have $\sum_{\sigma\in\{-1,+1\}^{V(\gamma)}}\mu_\gamma(\sigma)=1$, and hence $\mu_\gamma$ defines a valid probability distribution on $\{-1,+1\}^{V(\gamma)}$. This completes the proof.
\end{proof}

Let $\mathcal{S}=(\beta,h)$ be a ferromagnetic Heisenberg model on a graph $G$, and let $\mathcal{P}=(\mathcal{C},w,\sim)$ be the Heisenberg polymer model of $\mathcal{S}$. For each $\gamma\in\mathcal{C}$, let $\nu_\gamma$ denote the probability distribution over sequences $\rho$ of edges of $\gamma$ with $\operatorname{supp}(\rho)=E(\gamma)$ defined by sampling $n\sim\operatorname{Poi}(\beta\norm{\gamma})$ and then $\rho$ uniformly at random from sequences of length $\norm{\gamma}+n$ with support $E(\gamma)$. We define the \emph{probabilistic Heisenberg polymer model} of $\mathcal{P}$ to be the probabilistic polymer model $\mathcal{P}_\circ=(\mathcal{C},\tilde{w},\sim)$ where, for each $\gamma\in\mathcal{C}$, the weight $\tilde{w}_\gamma$ is defined for $\rho\sim\nu_\gamma$ by
\begin{equation}
    \tilde{w}_\gamma(\rho) \coloneqq \frac{(\beta e^\beta)^\norm{\gamma}\genfrac{\{}{\}}{0pt}{}{\abs{\rho}}{\norm{\gamma}}\norm{\gamma}^\norm{\gamma}}{\binom{\abs{\rho}}{\norm{\gamma}}\norm{\gamma}^\abs{\rho}}\prod_{c \in C_\rho}\frac{2\cosh(\beta h\abs{c})}{\left(2\cosh(\beta h)\right)^\abs{c}}. \notag
\end{equation}
For each $\gamma\in\mathcal{C}$ and $\rho\in\operatorname{supp}(\nu_\gamma)$ with $\tilde{w}_\gamma(\rho)>0$, let $\tilde{\mu}_{\gamma,\rho}$ be the probability distribution on $\{-1,+1\}^{V(\gamma)}$ defined by
\begin{equation}
    \tilde{\mu}_{\gamma,\rho}(\sigma) \coloneqq \prod_{c \in C_\rho}\frac{e^{\beta h\sum_{v \in c}\sigma_v}\mathds{1}_\text{$\sigma_c$ constant}}{2\cosh(\beta h\abs{c})}. \notag
\end{equation}

\begin{lemma}
    \label{lemma:FerromagneticHeisenbergWeightEstimator}
    Fix $\Delta\in\mathbb{Z}_{\geq3}$ and $r\geq0$. Let $\mathcal{S}=(\beta,h)$ be a ferromagnetic Heisenberg model on a graph $G$ of maximum degree at most $\Delta$, let $\mathcal{P}=(\mathcal{C},w,\sim)$ be the Heisenberg polymer model of $\mathcal{S}$, and let $\mathcal{P}_\circ=(\mathcal{C},\tilde{w},\sim)$ be the probabilistic Heisenberg polymer model of $\mathcal{P}$. Let $\beta$ be such that
    \begin{equation}
        \beta \leq \frac{1}{e^{r+3}\Delta}. \notag
    \end{equation}
    Then, for all $\gamma\in\mathcal{C}$ and $\rho\in\operatorname{supp}(\nu_\gamma)$,
    \begin{equation}
        \mathbb{E}_{\nu_\gamma}\left[\tilde{w}_\gamma\right]=w_\gamma \quad\text{and}\quad \tilde{w}_\gamma(\rho) \leq \left(\frac{1}{2e^{r+2}(\Delta-1)}\right)^\norm{\gamma}, \notag
    \end{equation}
    and, for all $\gamma\in\mathcal{C}$ with $w_\gamma>0$ and $\sigma\in\{-1,+1\}^{V(\gamma)}$,
    \begin{equation}
        \mathbb{E}_{\nu_\gamma}\left[\tilde{w}_\gamma\tilde{\mu}_{\gamma,\rho}(\sigma)\right] = w_\gamma\mu_\gamma(\sigma). \notag
    \end{equation}
\end{lemma}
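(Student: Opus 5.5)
The argument follows Lemma~\ref{lemma:StoquasticWeightEstimator} closely; the stoquastic trace is replaced by the cycle weight $\prod_{c\in C_\rho}\frac{2\cosh(\beta h\abs{c})}{(2\cosh(\beta h))^\abs{c}}$.

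For unbiasedness, fix $\gamma$ and write $\mathbb{E}_{\nu_\gamma}[\tilde{w}_\gamma]$ as a sum over $n\geq0$. The Poisson weight is $\frac{(\beta\norm{\gamma})^ne^{-\beta\norm{\gamma}}}{n!}$, and the uniform weight over sequences is $\bigl(\genfrac{\{}{\}}{0pt}{}{\norm{\gamma}+n}{\norm{\gamma}}\norm{\gamma}!\bigr)^{-1}$ on the sequences of length $\norm{\gamma}+n$ with support $E(\gamma)$. Multiplying by the prefactor in $\tilde{w}_\gamma(\rho)$ with $\abs{\rho}=\norm{\gamma}+n$, the Stirling numbers cancel. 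The factor $e^{\beta\norm{\gamma}}$ from $(\beta e^\beta)^\norm{\gamma}$ cancels the Poisson normalisation, and $\norm{\gamma}^n$ cancels against $\norm{\gamma}^{\abs{\rho}}/\norm{\gamma}^{\norm{\gamma}}$. The remaining factors satisfy $\frac{1}{n!\,\norm{\gamma}!\binom{\norm{\gamma}+n}{\norm{\gamma}}}=\frac{1}{(\norm{\gamma}+n)!}$. So the $n$-th term equals $\frac{\beta^{\abs{\rho}}}{\abs{\rho}!}$ times the cycle weight, summed over $\rho\in P_\gamma$ of length $\norm{\gamma}+n$. Summing over $n$ gives exactly $w_\gamma$.

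For the joint identity, multiply $\tilde{w}_\gamma(\rho)$ by $\tilde{\mu}_{\gamma,\rho}(\sigma)$. The factors $2\cosh(\beta h\abs{c})$ cancel cycle by cycle, leaving $\prod_{c\in C_\rho}\frac{e^{\beta h\sum_{v\in c}\sigma_v}\mathds{1}_{\sigma_c\text{ constant}}}{(2\cosh(\beta h))^\abs{c}}$. The same computation as above then gives $w_\gamma\mu_\gamma(\sigma)$ directly from the definition of $\mu_\gamma$.

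For the pointwise bound, I bound the cycle factor by $1$, since this is where the ferromagnetic structure is used. Because the cycles $c\in C_\rho$ partition $V(\gamma)$, the product is $\prod_c\frac{2\cosh(\beta h\abs{c})}{(2\cosh(\beta h))^{\abs{c}}}$. For each cycle, $(2\cosh x)^k=(e^x+e^{-x})^k\geq e^{kx}+e^{-kx}=2\cosh(kx)$ by dropping the cross terms of the binomial expansion, all of which are nonnegative. Hence each factor is at most $1$.

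Next I reuse the combinatorial inequality $\genfrac{\{}{\}}{0pt}{}{\abs{\rho}}{\norm{\gamma}}\norm{\gamma}^\norm{\gamma}\leq\binom{\abs{\rho}}{\norm{\gamma}}\norm{\gamma}^{\abs{\rho}}$ from the proof of Lemma~\ref{lemma:StoquasticWeightEstimator}. This gives $\tilde{w}_\gamma(\rho)\leq(\beta e^\beta)^\norm{\gamma}$.

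Finally, with $\beta\leq\frac{1}{e^{r+3}\Delta}$ I need $\beta e^\beta\leq\frac{1}{2e^{r+2}(\Delta-1)}$. Since $\beta\leq e^{-3}/3$, we have $e^\beta<1.02$, so $\beta e^\beta\leq\frac{1.02}{e^{r+3}\Delta}$. It then suffices that $2\cdot1.02(\Delta-1)\leq e\Delta$, which holds for all $\Delta$. This is the same final step as in the stoquastic case.

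The only genuinely new ingredient is the inequality $2\cosh(kx)\leq(2\cosh x)^k$. Everything else is bookkeeping identical to the earlier lemma, with the main care needed in the cancellation of factorials in the expectation computation.
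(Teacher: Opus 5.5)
Your proposal is correct and follows the paper's proof essentially step for step. It uses the same Poisson--Stirling cancellation for unbiasedness and for the joint identity, the same bound of the cycle factor by $1$ via $2\cosh(kx)\leq(2\cosh x)^k$, and the same combinatorial inequality to reach $\tilde{w}_\gamma(\rho)\leq(\beta e^\beta)^{\norm{\gamma}}$; you also justify the $\cosh$ inequality and the final numerical check $2e^\beta(\Delta-1)\leq e\Delta$, both of which the paper leaves implicit.
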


\begin{proof}
    Fix a polymer $\gamma\in\mathcal{C}$. We first show that $\mathbb{E}_{\nu_\gamma}\left[\tilde{w}_\gamma\right]=w_\gamma$. Since there are $\genfrac{\{}{\}}{0pt}{}{\norm{\gamma}+n}{\norm{\gamma}}\norm{\gamma}!$ sequences $\rho$ of length $\norm{\gamma}+n$ whose support is $E(\gamma)$,
    \begin{align}
        \mathbb{E}_{\nu_\gamma}[\tilde{w}_\gamma] &= \sum_{n=0}^\infty\frac{(\beta\norm{\gamma})^ne^{-\beta\norm{\gamma}}}{n!}\frac{1}{\genfrac{\{}{\}}{0pt}{}{\norm{\gamma}+n}{\norm{\gamma}}\norm{\gamma}!}\sum_{\substack{\rho \in E(\gamma)^{\norm{\gamma}+n} \\ \operatorname{supp}(\rho)=E(\gamma)}}\tilde{w}_\gamma(\rho) \notag \\
        &= \sum_{n=0}^\infty\frac{\beta^{\norm{\gamma}+n}}{(\norm{\gamma}+n)!}\sum_{\substack{\rho \in E(\gamma)^{\norm{\gamma}+n} \\ \operatorname{supp}(\rho)=E(\gamma)}}\prod_{c \in C_\rho}\frac{2\cosh(\beta h\abs{c})}{\left(2\cosh(\beta h)\right)^\abs{c}} \notag \\
        &= \sum_{\rho \in P_\gamma}\frac{\beta^\abs{\rho}}{\abs{\rho}!}\prod_{c \in C_\rho}\frac{2\cosh(\beta h\abs{c})}{\left(2\cosh(\beta h)\right)^\abs{c}} \notag \\
        &= w_\gamma. \notag
    \end{align}
    We now show that $\tilde{w}_\gamma(\rho)\leq\left(\frac{1}{2e^{r+2}(\Delta-1)}\right)^\norm{\gamma}$ for all $\rho\in\operatorname{supp}(\nu_\gamma)$. Since $2\cosh(nx)\leq\left(2\cosh(x)\right)^n$ for all $x\in\mathbb{R}$ and $n\in\mathbb{Z}^+$,
    \begin{align}
        \tilde{w}_\gamma(\rho) &= \frac{(\beta e^\beta)^\norm{\gamma}\genfrac{\{}{\}}{0pt}{}{\abs{\rho}}{\norm{\gamma}}\norm{\gamma}^\norm{\gamma}}{\binom{\abs{\rho}}{\norm{\gamma}}\norm{\gamma}^\abs{\rho}}\prod_{c \in C_\rho}\frac{2\cosh(\beta h\abs{c})}{\left(2\cosh(\beta h)\right)^\abs{c}} \notag \\
        &\leq \frac{(\beta e^\beta)^\norm{\gamma}\genfrac{\{}{\}}{0pt}{}{\abs{\rho}}{\norm{\gamma}}\norm{\gamma}^\norm{\gamma}}{\binom{\abs{\rho}}{\norm{\gamma}}\norm{\gamma}^\abs{\rho}} \notag \\
        &\leq (\beta e^\beta)^\norm{\gamma}, \notag
    \end{align}
    where we have used $\genfrac{\{}{\}}{0pt}{}{\abs{\rho}}{\norm{\gamma}}\norm{\gamma}^\norm{\gamma}\leq\binom{\abs{\rho}}{\norm{\gamma}}\norm{\gamma}^\abs{\rho}$. By taking $\beta\leq\frac{1}{e^{r+3}\Delta}$,
    \begin{equation}
        \tilde{w}_\gamma(\rho) \leq \left(\frac{1}{2e^{r+2}(\Delta-1)}\right)^\norm{\gamma}. \notag
    \end{equation}
    Finally, we show that $\mathbb{E}_{\nu_\gamma}\left[\tilde{w}_\gamma\tilde{\mu}_{\gamma,\rho}(\sigma)\right]=w_\gamma\mu_\gamma(\sigma)$ for all $\gamma\in\mathcal{C}$ with $w_\gamma>0$ and $\sigma\in\{-1,+1\}^{V(\gamma)}$. By a similar argument as for $\mathbb{E}_{\nu_\gamma}[\tilde{w}_\gamma]=w_\gamma$,
    \begin{align}
        \mathbb{E}_{\nu_\gamma}\left[\tilde{w}_\gamma\tilde{\mu}_{\gamma,\rho}(\sigma)\right] &= \sum_{\rho \in P_\gamma}\frac{\beta^\abs{\rho}}{\abs{\rho}!}\prod_{c \in C_\rho}\frac{e^{\beta h\sum_{v \in c}\sigma_v}\mathds{1}_\text{$\sigma_c$ constant}}{\left(2\cosh(\beta h)\right)^\abs{c}} \notag \\
        &= w_\gamma\mu_\gamma(\sigma). \notag
    \end{align}
    This completes the proof.
\end{proof}

\begin{lemma}
    \label{lemma:ProbabilisticFerromagneticHeisenbergKappaComputable}
    Let $\mathcal{S}=(\beta,h)$ be a ferromagnetic Heisenberg model on a graph $G$, let $\mathcal{P}=(\mathcal{C},w,\sim)$ be the Heisenberg polymer model of $\mathcal{S}$, and let $\mathcal{P}_\circ=(\mathcal{C},\tilde{w},\sim)$ be the probabilistic Heisenberg polymer model of $\mathcal{P}$. Then $\mathcal{P}_\circ$ is $\kappa$-computable with $\kappa=0$.
\end{lemma}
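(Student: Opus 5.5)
The plan mirrors the proof of Lemma~\ref{lemma:ProbabilisticStoquasticKappaComputable}. The key point is that, in the cycle representation, evaluating $\tilde{w}_\gamma(\rho)$ never requires working in the $2^{\abs{\gamma}}$-dimensional Hilbert space. It only requires tracking a permutation of $V(\gamma)$, which costs polynomial time in $\norm{\gamma}$.

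First, a sample $\rho\sim\nu_\gamma$ can be drawn in expected time $\norm{\gamma}^{O(1)}$.
\begin{itemize}
\item Sample $n\sim\operatorname{Poi}(\beta\norm{\gamma})$. Its expected value is $\beta\norm{\gamma}\leq\norm{\gamma}$, so drawing it takes expected time $O(1+\beta\norm{\gamma})$.
\item Sample a uniform surjection from $[\norm{\gamma}+n]$ onto $E(\gamma)$. I would do this by rejection: draw a uniform sequence and accept it if it is surjective. Alternatively, I would use the standard recursion on Stirling numbers $\genfrac{\{}{\}}{0pt}{}{k}{j}$, which can be tabulated for $k\leq\norm{\gamma}+n$ in time polynomial in $\norm{\gamma}+n$.
\item Since $\mathbb{E}[n]\leq\beta\norm{\gamma}$, and all moments of $n$ are polynomial in $\norm{\gamma}$, the expected cost of this step is $\norm{\gamma}^{O(1)}$.
\end{itemize}
This step, the uniform surjection, is the one I expect to need the most care. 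Naive rejection sampling may have acceptance probability exponentially small when $\abs{\rho}$ is close to $\norm{\gamma}$. I would therefore use the Stirling-number dynamic programme, which is exact and polynomial.

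Second, given $\rho$, I will compute $\tilde{w}_\gamma(\rho)$.
\begin{itemize}
\item Compose the transpositions $T_e$ for $e\in\rho$ as a permutation of $V(\gamma)$. This takes time $O(\abs{\rho}+\abs{\gamma})$.
\item Decompose the permutation into its cycles $C_\rho$ in time $O(\abs{\gamma})$.
\item Evaluate $\prod_{c\in C_\rho}2\cosh(\beta h\abs{c})/(2\cosh(\beta h))^{\abs{c}}$ with $O(\abs{\gamma})$ arithmetic operations.
\item Compute the prefactor $\genfrac{\{}{\}}{0pt}{}{\abs{\rho}}{\norm{\gamma}}\norm{\gamma}^{\norm{\gamma}}/\bigl(\binom{\abs{\rho}}{\norm{\gamma}}\norm{\gamma}^{\abs{\rho}}\bigr)$ from the same Stirling table in time polynomial in $\abs{\rho}$.
\end{itemize}
Because $\gamma$ is connected, $\abs{\gamma}\leq\norm{\gamma}+1$, and $\mathbb{E}\abs{\rho}^{O(1)}=\norm{\gamma}^{O(1)}$, so this whole step takes expected time $\norm{\gamma}^{O(1)}$.

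Third, sampling from $\tilde{\mu}_{\gamma,\rho}$ is immediate, since it factorises over the cycles. For each cycle $c$, choose a common spin $s\in\{-1,+1\}$ with probability $e^{\beta hs\abs{c}}/(2\cosh(\beta h\abs{c}))$ and assign $s$ to every vertex of $c$. This takes time $O(\abs{\gamma})$. Combining the three steps, $\tilde{w}_\gamma$ can be sampled and $\tilde{\mu}_{\gamma,\rho}$ sampled in expected time $e^{0\cdot\norm{\gamma}}\norm{\gamma}^{O(1)}$, so $\mathcal{P}_\circ$ is $\kappa$-computable with $\kappa=0$.
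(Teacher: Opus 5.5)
Your proposal is correct and follows the paper's proof essentially step for step: you sample $n\sim\operatorname{Poi}(\beta\norm{\gamma})$ and a uniform surjection, compose the transpositions into a permutation of $V(\gamma)$, read off the cycles $C_\rho$, and then evaluate $\tilde{w}_\gamma(\rho)$ and sample $\tilde{\mu}_{\gamma,\rho}$ cycle by cycle, all in expected time $\norm{\gamma}^{O(1)}$. Your extra detail, drawing the surjection via the Stirling-number recursion rather than naive rejection (whose acceptance probability is exponentially small when $\abs{\rho}$ is close to $\norm{\gamma}$), fills in a step the paper only asserts.
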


\begin{proof}
    A sample $\rho\sim\nu_\gamma$ can be obtained in expected time $\norm{\gamma}^{O(1)}$ by sampling $n\sim\operatorname{Poi}(\beta\norm{\gamma})$ and then sampling a uniform surjection of length $\norm{\gamma}+n$ onto $E(\gamma)$. For such a sample $\rho$, the permutation on $V(\gamma)$ induced by $\prod_{e\in\rho}T_e$ can be computed in expected time $\norm{\gamma}^{O(1)}$, and the cycle decomposition $C_\rho$ can then be obtained in time $\abs{\gamma}^{O(1)}$. Hence, $\tilde{w}_\gamma(\rho)$ can be computed and $\tilde{\mu}_{\gamma,\rho}$ can be sampled in expected time $\norm{\gamma}^{O(1)}$, and so $\mathcal{P}_\circ$ is $\kappa$-computable with $\kappa=0$.
\end{proof}

We now prove Theorem~\ref{theorem:FerromagneticHeisenbergModelAlgorithm}.

\begin{proof}[Proof of Theorem~\ref*{theorem:FerromagneticHeisenbergModelAlgorithm}]
    Let $\mathcal{P}=(\mathcal{C},w,\sim)$ be the Heisenberg polymer model of $\mathcal{S}$ and let $\mathcal{P}_\circ=(\mathcal{C},\tilde{w},\sim)$ be the probabilistic Heisenberg polymer model of $\mathcal{P}$. By Lemma~\ref{lemma:FerromagneticHeisenbergPolymerRepresentation}, $\mu_{\rho_\mathcal{S}}=\hat{\mu}_\mathcal{P}$ and $Z_\mathcal{S}=\cosh(\beta h)^\abs{G}Z_\mathcal{P}$, so it suffices to sample from $\hat{\mu}_\mathcal{P}$ and approximate $Z_\mathcal{P}$. A sample from $\hat{\mu}_\mathcal{P}$ is obtained by sampling $\Gamma\sim\mu_\mathcal{P}$ and then sampling spin configurations $\sigma_{V(\gamma)}\sim\tilde{\mu}_{\gamma,\rho_\gamma}$ for each $\gamma\in\Gamma$ and $\sigma_v\sim\mu_v$ for each $v \in V(G){\setminus}V(\Gamma)$, where $\rho_\gamma$ denotes the sequence sampled by the accepting iteration of the graphlet sampler that produced $\gamma$. By Lemma~\ref{lemma:FerromagneticHeisenbergWeightEstimator} with $r=0$, for all $\gamma\in\mathcal{C}$ and $\rho\in\operatorname{supp}(\nu_\gamma)$,
    \begin{equation}
        \mathbb{E}_{\nu_\gamma}\left[\tilde{w}_\gamma\right]=w_\gamma \quad\text{and}\quad \tilde{w}_\gamma(\rho) \leq \left(\frac{1}{2e^2(\Delta-1)}\right)^\norm{\gamma}, \notag
    \end{equation}
    and, for all $\gamma\in\mathcal{C}$ with $w_\gamma>0$ and $\sigma\in\{-1,+1\}^{V(\gamma)}$,
    \begin{equation}
        \mathbb{E}_{\nu_\gamma}\left[\tilde{w}_\gamma\tilde{\mu}_{\gamma,\rho}(\sigma)\right] = w_\gamma\mu_\gamma(\sigma). \notag
    \end{equation}
    In particular, $\mathcal{P}_\circ$ satisfies the conditions of Lemma~\ref{lemma:ProbabilisticGraphletSampler}. Since $\mathcal{P}$ is a subgraph polymer model on $G$ of maximum degree at most $\Delta$, it is equivalent to a graphlet polymer model on $L(G)$, which has maximum degree at most $2(\Delta-1)$. By Lemma~\ref{lemma:ProbabilisticFerromagneticHeisenbergKappaComputable}, $\mathcal{P}_\circ$ is $\kappa$-computable with $\kappa=0$. By applying Theorem~\ref{theorem:PolymerSamplingAlgorithm} and Theorem~\ref{theorem:PolymerCountingAlgorithm} with $r=\kappa$, there is an $\epsilon$-approximate sampling algorithm for $\hat{\mu}_\mathcal{P}$ with expected runtime $O(\abs{G}\log(\abs{G}/\epsilon))$ and an $\epsilon$-approximate counting algorithm for $Z_\mathcal{P}$ with expected runtime $O(\abs{G}^2\epsilon^{-2}\log(\abs{G}/\epsilon)^2)$. This completes the proof.
\end{proof}

\subsection{Antiferromagnetic Heisenberg Models}
\label{section:AntiferromagneticHeisenbergModels}

We finally consider antiferromagnetic Heisenberg models on bipartite graphs. Let $\mathcal{S}=(\beta,h)$ denote an antiferromagnetic Heisenberg model on a graph $G$ with inverse temperature $\beta\geq0$ and field strength $h\in\mathbb{R}$, defined by the Hamiltonian
\begin{equation}
    H_\mathcal{S} \coloneqq \sum_{e \in E(G)}T_e-h\sum_{v \in V(G)}Z_v, \notag
\end{equation}
where $T_e$ denotes the operator that transposes the spins at the endpoints of $e$, and $Z_v$ denotes the Pauli $Z$ operator on the spin at vertex $v$. Our main result for the antiferromagnetic Heisenberg model on bipartite graphs is as follows.

\begin{theorem}
    \label{theorem:AntiferromagneticBipartiteHeisenbergModelAlgorithm}
    Fix $\Delta\in\mathbb{Z}_{\geq3}$. Let $\mathcal{S}=(\beta,h)$ be an antiferromagnetic Heisenberg model on a bipartite graph $G$ of maximum degree at most $\Delta$. Let $\beta$ be such that
    \begin{equation}
        \beta \leq \frac{1}{2e^3\Delta}. \notag
    \end{equation}
    Then, for any $\epsilon>0$, there is an $\epsilon$-approximate sampling algorithm for $\mu_{\rho_\mathcal{S}}$ with expected runtime $O(\abs{G}\log(\abs{G}/\epsilon))$ and an $\epsilon$-approximate counting algorithm for $Z_\mathcal{S}$ with expected runtime $O(\abs{G}^2\epsilon^{-2}\log(\abs{G}/\epsilon)^2)$.
\end{theorem}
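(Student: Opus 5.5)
We follow the same template as the proof of Theorem~\ref{theorem:FerromagneticHeisenbergModelAlgorithm}, now built on the loop representation of Aizenman and Nachtergaele. Since $G$ is bipartite with parts $A$ and $B$, conjugate the Hamiltonian by the unitary $U=\prod_{v\in B}(iY_v)$ (a spin flip on one sublattice). This maps $T_e$ to $-Q_e$ up to an additive constant, where $Q_e=2P_e$ and $P_e$ is the projection onto the singlet, rewritten in the rotated basis as $\ketbra{\phi}{\phi}$ with $\ket{\phi}=\ket{\uparrow\uparrow}+\ket{\downarrow\downarrow}$ (unnormalised). The field term becomes $-h\sum_v \epsilon_v Z_v$ with $\epsilon_v=\pm1$ according to the side of the bipartition. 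Since $T_e=\mathbb{I}-Q_e$, we get $H_\mathcal{S}\simeq\sum_e(\mathbb{I}-Q_e)-h\sum_v\epsilon_vZ_v$. The identity terms contribute only a global factor $e^{-\beta\norm{G}}$, and $-\sum_eQ_e$ is stoquastic with $\norm{Q_e}=2$. This doubling of the norm is the source of the factor $2$ in $\beta\le\frac{1}{2e^3\Delta}$.

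\textbf{Polymer representation.} Define the antiferromagnetic Heisenberg polymer model on $G$ with
\[
w_\gamma=\sum_{\rho\in P_\gamma}\frac{\beta^{\abs{\rho}}}{\abs{\rho}!}\prod_{\ell\in L_\rho}\frac{2\cosh(\beta h\,s_\ell)}{(2\cosh(\beta h))^{\abs{\ell}}}\,2^{\,\cdot},
\]
where $L_\rho$ is the set of loops formed by the double-bar operators $Q_e$ in the sequence. Tracing $\prod_{e\in\rho}Q_e$ gives $2^{\abs{L_\rho}}$, with each loop forcing an alternating constraint that becomes a constant constraint in the rotated basis. The quantity $s_\ell$ is the signed magnetisation of a loop, and the exact normalisation is to be fixed as we go.

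Prove the analogue of Lemma~\ref{lemma:FerromagneticHeisenbergPolymerRepresentation} in three steps: use inclusion--exclusion and commutativity of the field with the diagonal structure, factorise over the connected components $\Gamma_S$, and Taylor expand. This should give $Z_\mathcal{S}=c^{\abs{G}}e^{-\beta\norm{G}}Z_\mathcal{P}$ for an explicit constant $c$, and $\mu_{\rho_\mathcal{S}}=\hat{\mu}_\mathcal{P}$ after undoing the flip on $B$. The flip is a deterministic relabelling of spins, so it preserves sampling. Non-negativity of all terms gives $w_\gamma\ge0$, and each $\mu_\gamma$ is a genuine distribution.

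\textbf{Probabilistic weights and the bound.} Define $\tilde w_\gamma(\rho)$ exactly as in the ferromagnetic case, with $\rho\sim\nu_\gamma$ (Poisson length plus a uniform surjection), the prefactor $\frac{(\beta' e^{\beta'})^{\norm{\gamma}}\genfrac{\{}{\}}{0pt}{}{\abs{\rho}}{\norm{\gamma}}\norm{\gamma}^{\norm{\gamma}}}{\binom{\abs{\rho}}{\norm{\gamma}}\norm{\gamma}^{\abs{\rho}}}$ with $\beta'=2\beta$, and the loop factor. Unbiasedness, and the identity $\mathbb{E}[\tilde w_\gamma\tilde\mu_{\gamma,\rho}]=w_\gamma\mu_\gamma$, follow verbatim from the Stirling-number computation in Lemma~\ref{lemma:FerromagneticHeisenbergWeightEstimator}. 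For the upper bound, the loop factor must be shown to be at most $1$ after absorbing $2^{\abs{\rho}}$ into $\beta'$. Each loop contributes $2\cosh(\beta h s_\ell)$ against a per-vertex normalisation, and $\abs{L_\rho}\le$ (number of visits), so $2\cosh(nx)\le(2\cosh x)^n$ handles it just as before.

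I expect this bookkeeping of the loop-counting factors $2^{\abs{L_\rho}}$ versus $\norm{Q_e}^{\abs{\rho}}=2^{\abs{\rho}}$ to be the main obstacle. The fix I would try first is to normalise each $Q_e$ by $2$ and use that the number of loops is at most the number of visited vertex-time segments, so that the per-vertex normalisation $2^{\abs{\gamma}}$ dominates. With $\beta'=2\beta\le\frac{1}{e^3\Delta}$, this gives $\tilde w_\gamma\le(\frac{1}{2e^2(\Delta-1)})^{\norm{\gamma}}$.

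\textbf{Computability and conclusion.} Computing the loops of $\rho$ takes $\norm{\gamma}^{O(1)}$ time by tracing through the space-time diagram, so $\kappa=0$. Passing to the line graph $L(G)$, of maximum degree at most $2(\Delta-1)$, we can apply Lemma~\ref{lemma:ProbabilisticGraphletSampler}, Theorem~\ref{theorem:PolymerSamplingAlgorithm} and Theorem~\ref{theorem:PolymerCountingAlgorithm} with $r=\kappa=0$, which yields the stated runtimes.
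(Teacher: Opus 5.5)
\textbf{Gap: the pointwise bound on $\tilde w_\gamma(\rho)$ is left unresolved.} Apart from this step, your architecture matches the paper's: loop-representation weights, inclusion--exclusion and factorisation over components, a Poisson-plus-uniform-surjection estimator at rate $2\beta\norm{\gamma}$, $\kappa=0$, and Theorems~\ref{theorem:PolymerSamplingAlgorithm} and~\ref{theorem:PolymerCountingAlgorithm} on $L(G)$ with $r=0$. Conjugating by $\prod_{v\in B}(iY_v)$ just repackages the paper's constraint that $\epsilon_{l_0}\sigma_{l_0}$ be constant. If $\abs{\ell}$ in your weight means $\abs{\ell_0}$, no extra power of $2$ is needed, since internal loops have $s_\ell=0$ and contribute $2\cosh(0)=2$.

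The missing step is the bound $\tilde w_\gamma(\rho)\le(2e^2(\Delta-1))^{-\norm{\gamma}}$, which is what produces the threshold $\frac{1}{2e^3\Delta}$. You flag it as the main obstacle, and the count you propose does not close it. Normalising $Q_e$ by $2$ is right: it is the paper's rate $2\beta\norm{\gamma}$ and the factor $(2\norm{\gamma})^{\abs{\rho}}$ in $\tilde w_\gamma$. But after that, the per-vertex normalisation $(2\cosh(\beta h))^{\abs{\gamma}}$ can only pay for external loops. The number of internal loops grows with $\abs{\rho}$: for a single edge $e$ and $\rho=(e,\dots,e)$ of length $n$ there are $n-1$ of them. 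These must be paid for by the factor $2^{-\abs{\rho}}$, which requires $\abs{L_\rho^\text{int}}\le\abs{\rho}$. Bounding loops by vertex-time segments does not give this. Each step of $\rho$ cuts two worldlines, so there are $2\abs{\rho}$ segments, and you only get $2^{\abs{L_\rho}}\le4^{\abs{\rho}}$. The leftover $2^{\abs{\rho}}$ would cost another factor of $2$ in $\beta$.

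\textbf{The missing count.} A loop with $l_0=\varnothing$ cannot be a single vertical worldline, since that would pass through time $0$, so it must turn in time. Turns alternate between edges of $E_\rho^\cap$ and $E_\rho^\cup$, so the loop contains one of the $\abs{\rho}$ edges of $E_\rho^\cap$. Each such edge lies in exactly one loop, hence $\abs{L_\rho^\text{int}}\le\abs{\rho}$.

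External loops are handled by the field normalisation. Since $\abs{w(l)}\le\abs{l_0}$ and $2\cosh(nx)\le(2\cosh x)^n$, we get $2\cosh(\beta hw(l))\le(2\cosh(\beta h))^{\abs{l_0}}$. Moreover $\sum_{l\in L_\rho^\text{ext}}\abs{l_0}=\abs{\gamma}$.

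So the loop factor is at most $2^{\abs{\rho}}$, which cancels exactly against the Poisson rate $2\beta$. Together with the Stirling-ratio inequality, this gives $\tilde w_\gamma(\rho)\le(2\beta e^{2\beta})^{\norm{\gamma}}\le(2e^2(\Delta-1))^{-\norm{\gamma}}$ for $\beta\le\frac{1}{2e^3\Delta}$. The rest of your argument then goes through as in the paper.
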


We prove Theorem~\ref{theorem:AntiferromagneticBipartiteHeisenbergModelAlgorithm} by showing that the conditions required to apply Theorem~\ref{theorem:PolymerSamplingAlgorithm} and Theorem~\ref{theorem:PolymerCountingAlgorithm} are satisfied. In particular, we show that (1) the antiferromagnetic Heisenberg model on bipartite graphs admits a suitable subgraph polymer model representation, (2) the Heisenberg polymer model admits a suitable probabilistic representation, and (3) the probabilistic Heisenberg polymer model is $\kappa$-computable.

We now introduce a polymer model representation of the Heisenberg model based on the geometric representation of Aizenman and Nachtergaele~\cite{aizenman1994geometric, goldschmidt2011quantum}. Let $\mathcal{S}=(\beta,h)$ be an antiferromagnetic Heisenberg model on a bipartite graph $G$. Let $V_A \sqcup V_B$ be a bipartition of $V(G)$, and let $\epsilon_v\coloneqq+1$ for $v \in V_A$ and $\epsilon_v\coloneqq-1$ for $v \in V_B$. Further, for each $e \in E(G)$, let $P_e$ denote the operator on the spins at the endpoints of $e$ defined by $P_e\coloneqq2\ketbra{s_e}{s_e}$, where $\ket{s_e}\coloneqq\frac{1}{\sqrt{2}}\left(\ket{+-}-\ket{-+}\right)$, and note that $T_e=\mathbb{I}-P_e$. For a sequence of edges $\rho=(\{u_\tau,v_\tau\})_{\tau\in\mathbb{Z}_n}$ of a connected subgraph $\gamma$ of $G$, we define $H_\rho$ to be the graph whose vertex set is $V(\gamma)\times\mathbb{Z}_n$ and whose edge set is $E_\rho^\cap \cup E_\rho^\cup \cup E_\rho^\|$, where $E_\rho^\cap\coloneqq\{\{(u_\tau,\tau),(v_\tau,\tau)\}\mid\tau\in\mathbb{Z}_n\}$, $E_\rho^\cup\coloneqq\{\{(u_\tau,\tau+1),(v_\tau,\tau+1)\}\mid\tau\in\mathbb{Z}_n\}$, and $E_\rho^\|\coloneqq\{\{(v,\tau),(v,\tau+1)\}\mid\tau\in\mathbb{Z}_n,v \in V(\gamma){\setminus}\{u_\tau,v_\tau\}\}$. Let $L_\rho$ denote the connected components of $H_\rho$, which we call \emph{loops}. For each loop $l \in L_\rho$, let $l_0\coloneqq\{v \in V(\gamma) \mid (v,0) \in V(l)\}$. A loop $l \in L_\rho$ is called \emph{internal} if $l_0=\varnothing$, and \emph{external} otherwise. Let $L_\rho^\text{int}$ and $L_\rho^\text{ext}$ denote the sets of internal and external loops, respectively. For each loop $l \in L_\rho$, we define its \emph{winding number} by $w(l)\coloneqq\sum_{v \in l_0}\epsilon_v$, and note that $w(l)=0$ for all $l \in L_\rho^\text{int}$.

We define the \emph{Heisenberg polymer model} of $\mathcal{S}$ to be the subgraph polymer model $\mathcal{P}=(\mathcal{C},w,\sim)$ on $G$, where $\mathcal{C}$ is the set of all connected subgraphs of $G$ with at least one edge, and $w:\mathcal{C}\to\mathbb{R}_{\geq0}$ is defined by
\begin{equation}
    w_\gamma \coloneqq \sum_{\rho \in P_\gamma}\frac{\beta^\abs{\rho}2^\abs{L_\rho^\text{int}}\prod_{l \in L_\rho^\text{ext}}2\cosh(\beta hw(l))}{\abs{\rho}!\left(2\cosh(\beta h)\right)^\abs{\gamma}}, \notag
\end{equation}
where $P_\gamma$ denotes the set of all sequences $\rho$ of edges of $\gamma$ with $\operatorname{supp}(\rho)=E(\gamma)$. For each $\gamma\in\mathcal{C}$ with $w_\gamma>0$, let $\mu_\gamma$ be the probability distribution on $\{-1,+1\}^{V(\gamma)}$ defined by
\begin{equation}
    \mu_\gamma(\sigma) \coloneqq \frac{1}{w_\gamma}\sum_{\rho \in P_\gamma}\frac{\beta^\abs{\rho}2^\abs{L_\rho^\text{int}}\prod_{l \in L_\rho^\text{ext}}e^{\beta h\sum_{v \in l_0}\sigma_v}\mathds{1}_\text{$\epsilon_{l_0}\sigma_{l_0}$ constant}}{\abs{\rho}!\left(2\cosh(\beta h)\right)^\abs{\gamma}}. \notag
\end{equation}
For each $v \in V(G)$, let $\mu_v$ be the probability distribution on $\{-1,+1\}$ defined by $\mu_v(\sigma)=\frac{e^{\beta h\sigma}}{2\cosh(\beta h)}$ for all $\sigma\in\{-1,+1\}$.

\begin{lemma}
    \label{lemma:AntiferromagneticBipartiteHeisenbergPolymerRepresentation}
    Let $\mathcal{S}=(\beta,h)$ be an antiferromagnetic Heisenberg model on a bipartite graph $G$, and let $\mathcal{P}=(\mathcal{C},w,\sim)$ be the Heisenberg polymer model of $\mathcal{S}$. Then the partition function of $\mathcal{S}$ satisfies $Z_\mathcal{S}=\cosh(\beta h)^\abs{G}e^{-\beta\norm{G}}Z_\mathcal{P}$ and the thermal distribution of $\mathcal{S}$ satisfies $\mu_{\rho_\mathcal{S}}=\hat{\mu}_\mathcal{P}$.
\end{lemma}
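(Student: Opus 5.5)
We follow the proof of Lemma~\ref{lemma:FerromagneticHeisenbergPolymerRepresentation}, with two changes: the transposition expansion is replaced by one in the operators $P_e$, and the trace is evaluated with the Aizenman--Nachtergaele loop representation.

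\textbf{Step 1: shift the edge terms.} Since $T_e=\mathbb{I}-P_e$, we have
\begin{equation}
    H_\mathcal{S}=\norm{G}\mathbb{I}-\sum_{e}P_e-h\sum_vZ_v,
    \qquad\text{so}\qquad
    e^{-\beta H_\mathcal{S}}=e^{-\beta\norm{G}}e^{\beta\sum_eP_e+\beta h\sum_vZ_v}. \notag
\end{equation}
The singlet $\ket{s_e}$ has total $Z$-magnetisation zero, so each $P_e$ commutes with $\sum_vZ_v$. Hence the field factor $e^{\beta h\sum_vZ_v}$ splits off from the $P_e$ part.

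\textbf{Step 2: polymer expansion.} We repeat the inclusion--exclusion and factorisation over connected components $\Gamma_S$ exactly as in the ferromagnetic case, normalising each component by $\Tr[e^{\beta h\sum_{v\in V(\gamma)}Z_v}]=\cosh(\beta h)^{\abs{\gamma}}$ (recall $\Tr[\mathbb{I}]=1$). The Taylor expansion then gives, for each component,
\begin{equation}
    \sum_{\rho\in P_\gamma}\frac{\beta^{\abs{\rho}}}{\abs{\rho}!}\,\frac{\Tr\bigl[e^{\beta h\sum_{v\in V(\gamma)}Z_v}\prod_{e\in\rho}P_e\bigr]}{\cosh(\beta h)^{\abs{\gamma}}}. \notag
\end{equation}
The prefactor $\cosh(\beta h)^{\abs{G}}e^{-\beta\norm{G}}$ sits outside the sum.

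\textbf{Step 3: loop evaluation (main obstacle).} We must show that, with the unnormalised trace over $\{\pm1\}^{V(\gamma)}$ (whose dimension $2^{\abs{\gamma}}$ cancels the normalisation and produces the factor $(2\cosh\beta h)^{\abs{\gamma}}$ in the denominator),
\begin{equation}
    \operatorname{tr}\Bigl[e^{\beta h\sum Z_v}\prod_{e\in\rho}P_e\Bigr]=2^{\abs{L_\rho^\text{int}}}\prod_{l\in L_\rho^\text{ext}}2\cosh(\beta hw(l)), \notag
\end{equation}
and similarly that the diagonal entries $\bra{\sigma}\cdot\ket{\sigma}$ equal the summand of $\mu_\gamma$.

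To prove this, use the bipartite sublattice rotation. Writing $U=\prod_{v\in V_B}(i\sigma^y_v)$ or a similar operator, conjugation by $U$ turns $P_e$ into $2\ketbra{\phi}{\phi}$ with $\ket{\phi}=\frac{1}{\sqrt2}(\ket{++}+\ket{--})$, up to sign conventions. The matrix elements of that operator are $1$ when both spins agree before and both agree after, and $0$ otherwise.

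Inserting resolutions of the identity between consecutive factors (the time slices $\tau$ in the definition of $H_\rho$) expands the trace as a sum over spin assignments on $V(\gamma)\times\mathbb{Z}_n$. An assignment contributes exactly when the rotated spin is constant along every edge of $H_\rho$, that is, on every loop. So the nonzero terms correspond to one choice of $\pm1$ per loop.

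Each internal loop then contributes a free factor $2$. Each external loop meets time $0$ at the sites $l_0$, where the original spin is $\sigma_v=\epsilon_v s$ for the loop's rotated spin $s$. The field contributes $e^{\beta h s\sum_{v\in l_0}\epsilon_v}$, and summing over $s=\pm1$ gives $2\cosh(\beta hw(l))$.

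Care is needed in two places. First, one must check that the sublattice rotation introduces no residual signs; bipartiteness makes the sign from each $P_e$ cancel. Second, one must check that $\mathds{1}_{\epsilon_{l_0}\sigma_{l_0}\text{ constant}}$ is exactly the diagonal constraint at time $0$.

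\textbf{Step 4: conclusions.} Substituting Step 3 into Step 2 gives $Z_\mathcal{S}=\cosh(\beta h)^{\abs{G}}e^{-\beta\norm{G}}Z_\mathcal{P}$.

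Redoing the computation with $\ketbra{\sigma}{\sigma}$ inserted gives the thermal distribution. The vertices outside $V(\Gamma)$ contribute $\mu_v(\sigma_v)$, and each polymer contributes $w_\gamma\mu_\gamma(\sigma_{V(\gamma)})$, as in the ferromagnetic proof. This yields $\mu_{\rho_\mathcal{S}}=\hat{\mu}_\mathcal{P}$.

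Non-negativity of $w_\gamma$ and of $\mu_\gamma$ is immediate, since every term is non-negative. Normalisation of $\mu_\gamma$ follows by summing over $\sigma$: for each external loop, the sum over constant configurations of $\epsilon_{l_0}\sigma_{l_0}$ returns $2\cosh(\beta hw(l))$.
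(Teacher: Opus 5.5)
Your proposal is correct and follows the paper's proof step for step. It uses the shift $T_e=\mathbb{I}-P_e$ to extract $e^{-\beta\norm{G}}$, then inclusion--exclusion with the field split off by commutativity, factorisation over components normalised by $\cosh(\beta h)^{\abs{\gamma}}$, and the Taylor expansion over $\rho\in P_\gamma$. It finishes with the loop evaluation of each trace and the same $\ketbra{\sigma}{\sigma}$-insertion and normalisation arguments for $\hat{\mu}_\mathcal{P}$ and $\mu_\gamma$.

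The only difference is that you actually justify the loop factorisation, which the paper merely asserts. You conjugate by $\prod_{v\in V_B}(i\sigma^y_v)$, which sends every $P_e$ to $2\ketbra{\phi}{\phi}$ with non-negative entries; no residual signs appear, since $P_e$ is rank one and bipartiteness rotates exactly one endpoint of each edge. The same conjugation turns the time-$0$ constraint into $\epsilon_v\sigma_v$ being constant on $l_0$. That computation is sound.
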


\begin{proof}
    We first show that $Z_\mathcal{S}=\cosh(\beta h)^\abs{G}e^{-\beta\norm{G}}Z_\mathcal{P}$.
    By the principle of inclusion-exclusion and commutativity,
    \begin{align}
        Z_\mathcal{S} &= \Tr\left[e^{-\beta H_\mathcal{S}}\right] \notag \\
        &= e^{-\beta\norm{G}}\sum_{S \subseteq E(G)}(-1)^\abs{S}\sum_{T \subseteq S}(-1)^\abs{T}\Tr\left[e^{\beta h\sum_{v \in V(G)}Z_v}e^{\beta\sum_{e \in T}P_e}\right]. \notag
    \end{align}
    For a subset $S \subseteq E(G)$, let $\Gamma_S$ denote the maximally connected components of $S$. By factorising over these components,
    \begin{align}
        Z_\mathcal{S} &= e^{-\beta\norm{G}}\Tr\left[e^{\beta h\sum_{v \in V(G)}Z_v}\right]\sum_{S \subseteq E(G)}\prod_{\gamma\in\Gamma_S}(-1)^\norm{\gamma}\sum_{T \subseteq E(\gamma)}(-1)^\abs{T}\frac{\Tr\left[e^{\beta h\sum_{v \in V(\gamma)}Z_v}e^{\beta\sum_{e \in T}P_e}\right]}{\Tr\left[e^{\beta h\sum_{v \in V(\gamma)}Z_v}\right]} \notag \\
        &= \cosh(\beta h)^\abs{G}e^{-\beta\norm{G}}\sum_{S \subseteq E(G)}\prod_{\gamma\in\Gamma_S}(-1)^\norm{\gamma}\sum_{T \subseteq E(\gamma)}(-1)^\abs{T}\frac{\Tr\left[e^{\beta h\sum_{v \in V(\gamma)}Z_v}e^{\beta\sum_{e \in T}P_e}\right]}{\cosh(\beta h)^\abs{\gamma}}. \notag
    \end{align}
    By the Taylor series and the principle of inclusion-exclusion,
    \begin{align}
        Z_\mathcal{S} &= \cosh(\beta h)^\abs{G}e^{-\beta\norm{G}}\sum_{S \subseteq E(G)}\prod_{\gamma\in\Gamma_S}\sum_{n=\norm{\gamma}}^\infty\frac{\beta^n}{n!}\sum_{\substack{\rho \in E(\gamma)^n \\ \operatorname{supp}(\rho)=E(\gamma)}}\frac{\Tr\left[e^{\beta h\sum_{v \in V(\gamma)}Z_v}\prod_{e\in\rho}P_e\right]}{\cosh(\beta h)^\abs{\gamma}} \notag \\
        &= \cosh(\beta h)^\abs{G}e^{-\beta\norm{G}}\sum_{S \subseteq E(G)}\prod_{\gamma\in\Gamma_S}\sum_{\rho \in P_\gamma}\frac{\beta^\abs{\rho}\Tr\left[e^{\beta h\sum_{v \in V(\gamma)}Z_v}\prod_{e\in\rho}P_e\right]}{\abs{\rho}!\cosh(\beta h)^\abs{\gamma}}. \notag
    \end{align}
    Since the trace factorises over the loops $L_\rho$, with each loop $l \in L_\rho^\text{int}$ contributing a factor of $2$, and each loop $l \in L_\rho^\text{ext}$ contributing a non-zero factor of $2\cosh(\beta hw(l))$ only when $\epsilon_v\sigma_v$ is constant on $l_0$,
    \begin{align}
         Z_\mathcal{S} &= \cosh(\beta h)^\abs{G}e^{-\beta\norm{G}}\sum_{S \subseteq E(G)}\prod_{\gamma\in\Gamma_S}\sum_{\rho \in P_\gamma}\frac{\beta^\abs{\rho}2^\abs{L_\rho^\text{int}}\prod_{l \in L_\rho^\text{ext}}2\cosh(\beta hw(l))}{\abs{\rho}!\left(2\cosh(\beta h)\right)^\abs{\gamma}} \notag \\
         &= \cosh(\beta h)^\abs{G}e^{-\beta\norm{G}}\sum_{S \subseteq E(G)}\prod_{\gamma\in\Gamma_S}w_\gamma \notag \\
         &= \cosh(\beta h)^\abs{G}e^{-\beta\norm{G}}\sum_{\Gamma\in\mathcal{G}}\prod_{\gamma\in\Gamma}w_\gamma \notag \\
         &= \cosh(\beta h)^\abs{G}e^{-\beta\norm{G}}Z_\mathcal{P}. \notag
    \end{align}
    We now show that $\mu_{\rho_\mathcal{S}}=\hat{\mu}_\mathcal{P}$. For any spin configuration $\sigma\in\{-1,+1\}^{V(G)}$, we have
    \begin{align}
        \mu_{\rho_\mathcal{S}}(\sigma) &= \frac{1}{Z_\mathcal{S}}\Tr\left[\ketbra{\sigma}{\sigma}e^{-\beta H_\mathcal{S}}\right] \notag \\
        &= \frac{1}{Z_\mathcal{P}}\sum_{\Gamma\in\mathcal{G}}\frac{e^{\beta h\sum_{v \in V(G)}\sigma_v}}{\left(2\cosh(\beta h)\right)^\abs{G}}\prod_{\gamma\in\Gamma}\sum_{\rho \in P_\gamma}\frac{2^\abs{\gamma}\beta^\abs{\rho}}{\abs{\rho}!}\Tr\left[\ketbra{\sigma_{V(\gamma)}}{\sigma_{V(\gamma)}}\prod_{e\in\rho}P_e\right] \notag \\
        &= \frac{1}{Z_\mathcal{P}}\sum_{\Gamma\in\mathcal{G}}\frac{e^{\beta h\sum_{v \in V(G)}\sigma_v}}{\left(2\cosh(\beta h)\right)^\abs{G}}\prod_{\gamma\in\Gamma}\sum_{\rho \in P_\gamma}\frac{\beta^\abs{\rho}2^\abs{L_\rho^\text{int}}}{\abs{\rho}!}\prod_{l \in L_\rho^\text{ext}}\mathds{1}_\text{$\epsilon_{l_0}\sigma_{l_0}$ constant} \notag \\
        &= \frac{1}{Z_\mathcal{P}}\sum_{\Gamma\in\mathcal{G}}\frac{e^{\beta h\sum_{v \in V(G)}\sigma_v}}{\left(2\cosh(\beta h)\right)^\abs{G}}\prod_{\gamma\in\Gamma}\frac{\left(2\cosh(\beta h)\right)^\abs{\gamma}}{e^{\beta h\sum_{v \in V(\gamma)}\sigma_v}}w_\gamma\mu_\gamma(\sigma_{V(\gamma)}) \notag \\
        &= \sum_{\Gamma\in\mathcal{G}}\mu_\mathcal{P}(\Gamma)\prod_{\gamma\in\Gamma}\mu_\gamma(\sigma_{V(\gamma)})\prod_{v \in V(G){\setminus}V(\Gamma)}\mu_v(\sigma_v) \notag \\
        &= \hat{\mu}_\mathcal{P}(\sigma). \notag
    \end{align}
    Finally, we show that $w_\gamma\geq0$ for all $\gamma\in\mathcal{C}$, and that $\mu_\gamma$ is a valid probability distribution on $\{-1,+1\}^{V(\gamma)}$ for all $\gamma\in\mathcal{C}$ with $w_\gamma>0$. Since all terms in $w_\gamma$ and $\mu_\gamma$ are non-negative, we have $w_\gamma\geq0$ for all $\gamma\in\mathcal{C}$, and $\mu_\gamma(\sigma)\geq0$ for all $\gamma\in\mathcal{C}$ with $w_\gamma>0$ and $\sigma\in\{-1,+1\}^{V(\gamma)}$. By using $\sum_{\sigma\in\{-1,+1\}^{l_0}}e^{\beta h\sum_{v \in l_0}\sigma_v}\mathds{1}_\text{$\epsilon_{l_0}\sigma_{l_0}$ constant}=2\cosh(\beta hw(l))$ for each $l \in L_\rho^\text{ext}$, and $w(l)=0$ for all $l \in L_\rho^\text{int}$, we have $\sum_{\sigma\in\{-1,+1\}^{V(\gamma)}}\mu_\gamma(\sigma)=1$, and hence $\mu_\gamma$ defines a valid probability distribution on $\{-1,+1\}^{V(\gamma)}$. This completes the proof.
\end{proof}

Let $\mathcal{S}=(\beta,h)$ be an antiferromagnetic Heisenberg model on a bipartite graph $G$, and let $\mathcal{P}=(\mathcal{C},w,\sim)$ be the Heisenberg polymer model of $\mathcal{S}$. For each $\gamma\in\mathcal{C}$, let $\nu_\gamma$ denote the probability distribution over sequences $\rho$ of edges of $\gamma$ with $\operatorname{supp}(\rho)=E(\gamma)$ defined by sampling $n\sim\operatorname{Poi}(2\beta\norm{\gamma})$ and then $\rho$ uniformly at random from sequences of length $\norm{\gamma}+n$ with support $E(\gamma)$. We define the \emph{probabilistic Heisenberg polymer model} of $\mathcal{P}$ to be the probabilistic polymer model $\mathcal{P}_\circ=(\mathcal{C},\tilde{w},\sim)$ where, for each $\gamma\in\mathcal{C}$, the weight $\tilde{w}_\gamma$ is defined for $\rho\sim\nu_\gamma$ by
\begin{equation}
    \tilde{w}_\gamma(\rho) \coloneqq \frac{(2\beta e^{2\beta})^\norm{\gamma}\genfrac{\{}{\}}{0pt}{}{\abs{\rho}}{\norm{\gamma}}\norm{\gamma}^\norm{\gamma}}{\binom{\abs{\rho}}{\norm{\gamma}}(2\norm{\gamma})^\abs{\rho}}\frac{2^\abs{L_\rho^\text{int}}\prod_{l \in L_\rho^\text{ext}}2\cosh(\beta hw(l))}{\left(2\cosh(\beta h)\right)^\abs{\gamma}}. \notag
\end{equation}
For each $\gamma\in\mathcal{C}$ and $\rho\in\operatorname{supp}(\nu_\gamma)$ with $\tilde{w}_\gamma(\rho)>0$, let $\tilde{\mu}_{\gamma,\rho}$ be the probability distribution on $\{-1,+1\}^{V(\gamma)}$ defined by
\begin{equation}
    \tilde{\mu}_{\gamma,\rho}(\sigma) \coloneqq \prod_{l \in L_\rho^\text{ext}}\frac{e^{\beta h\sum_{v \in l_0}\sigma_v}\mathds{1}_\text{$\epsilon_{l_0}\sigma_{l_0}$ constant}}{2\cosh(\beta hw(l))}. \notag
\end{equation}

\begin{lemma}
    \label{lemma:AntiferromagneticBipartiteHeisenbergWeightEstimator}
    Fix $\Delta\in\mathbb{Z}_{\geq3}$ and $r\geq0$. Let $\mathcal{S}=(\beta,h)$ be an antiferromagnetic Heisenberg model on a bipartite graph $G$ of maximum degree at most $\Delta$, let $\mathcal{P}=(\mathcal{C},w,\sim)$ be the Heisenberg polymer model of $\mathcal{S}$, and let $\mathcal{P}_\circ=(\mathcal{C},\tilde{w},\sim)$ be the probabilistic Heisenberg polymer model of $\mathcal{P}$. Let $\beta$ be such that
    \begin{equation}
        \beta \leq \frac{1}{2e^{r+3}\Delta}. \notag
    \end{equation}
    Then, for all $\gamma\in\mathcal{C}$ and $\rho\in\operatorname{supp}(\nu_\gamma)$,
    \begin{equation}
        \mathbb{E}_{\nu_\gamma}\left[\tilde{w}_\gamma\right]=w_\gamma \quad\text{and}\quad \tilde{w}_\gamma(\rho) \leq \left(\frac{1}{2e^{r+2}(\Delta-1)}\right)^\norm{\gamma}, \notag
    \end{equation}
    and, for all $\gamma\in\mathcal{C}$ with $w_\gamma>0$ and $\sigma\in\{-1,+1\}^{V(\gamma)}$,
    \begin{equation}
        \mathbb{E}_{\nu_\gamma}\left[\tilde{w}_\gamma\tilde{\mu}_{\gamma,\rho}(\sigma)\right] = w_\gamma\mu_\gamma(\sigma). \notag
    \end{equation}
\end{lemma}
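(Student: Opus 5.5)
The plan is to follow the proof of Lemma~\ref{lemma:FerromagneticHeisenbergWeightEstimator} almost verbatim, with two changes: the Poisson parameter is now $2\beta\norm{\gamma}$, and the cycle factor is replaced by the loop factor
\begin{equation}
F(\rho) \coloneqq 2^{\abs{L_\rho^\text{int}}}\prod_{l \in L_\rho^\text{ext}}2\cosh(\beta hw(l))\big/\left(2\cosh(\beta h)\right)^{\abs{\gamma}}. \notag
\end{equation}

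\textbf{Unbiasedness.} Fix $\gamma$ and write $k=\norm{\gamma}$. A sequence $\rho$ of length $k+n$ with support $E(\gamma)$ has probability $\frac{(2\beta k)^ne^{-2\beta k}}{n!}\cdot\frac{1}{\genfrac{\{}{\}}{0pt}{}{k+n}{k}k!}$ under $\nu_\gamma$. Multiplying by $\tilde{w}_\gamma(\rho)$ should collapse to $\frac{\beta^{k+n}}{(k+n)!}F(\rho)$, by the following cancellations:
\begin{itemize}
\item the Stirling numbers cancel;
\item $e^{-2\beta k}$ cancels against $(e^{2\beta})^k$;
\item $\binom{k+n}{k}k!\,n! = (k+n)!$;
\item $(2\beta)^k(2\beta k)^n k^k/(2k)^{k+n} = \beta^{k+n}$.
\end{itemize}
Summing over $n$ and over $\rho$ then gives exactly $\sum_{\rho\in P_\gamma}\frac{\beta^{\abs{\rho}}}{\abs{\rho}!}F(\rho)=w_\gamma$. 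The identity $\mathbb{E}_{\nu_\gamma}[\tilde{w}_\gamma\tilde{\mu}_{\gamma,\rho}(\sigma)] = w_\gamma\mu_\gamma(\sigma)$ follows identically. Multiplying $F(\rho)$ by $\tilde{\mu}_{\gamma,\rho}(\sigma)$ cancels each $2\cosh(\beta hw(l))$. This leaves precisely the summand in the definition of $\mu_\gamma$ multiplied by $w_\gamma$.

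\textbf{Uniform bound.} I expect the main obstacle to be bounding $F(\rho)\le 2^{\abs{\rho}}$ for every $\rho$. I would split this into the external and internal loops.
\begin{itemize}
\item \emph{External loops.} Since $\abs{w(l)}\le\abs{l_0}$, we have $2\cosh(\beta hw(l))\le 2\cosh(\beta h\abs{l_0})\le(2\cosh(\beta h))^{\abs{l_0}}$, using the same inequality $2\cosh(nx)\le(2\cosh x)^n$ as before. The sets $l_0$ for $l\in L_\rho^\text{ext}$ partition $V(\gamma)$, since each $(v,0)$ lies in exactly one loop. Hence the external product is at most $(2\cosh(\beta h))^{\abs{\gamma}}$.
\item \emph{Internal loops.} An internal loop avoids time $0$. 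It should therefore be forced to contain some ``$\cap$'' edge $\{(u_\tau,\tau),(v_\tau,\tau)\}$, because the vertical edges alone would carry it around to time $0$. I would check this carefully from the definition of $H_\rho$. Distinct loops are disjoint components, so they contain distinct $\cap$ edges. There are only $\abs{\rho}$ such edges, so $\abs{L_\rho^\text{int}}\le\abs{\rho}$.
\end{itemize}
Together these give $F(\rho)\le 2^{\abs{\rho}}$.

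\textbf{Concluding the bound.} With $F(\rho)\le 2^{\abs{\rho}}$, the factor $2^{\abs{\rho}}$ cancels against the $2^{\abs{\rho}}$ in $(2k)^{\abs{\rho}}$. Applying $\genfrac{\{}{\}}{0pt}{}{\abs{\rho}}{k}k^k\le\binom{\abs{\rho}}{k}k^{\abs{\rho}}$ as in the earlier lemmas then yields $\tilde{w}_\gamma(\rho)\le(2\beta e^{2\beta})^{k}$. Finally, $2\beta\le\frac{1}{e^{r+3}\Delta}$ gives
\begin{equation}
2\beta e^{2\beta}\le\frac{e^{2\beta}}{e^{r+3}\Delta}\le\frac{1}{2e^{r+2}(\Delta-1)}. \notag
\end{equation}
The last inequality needs $2e^{2\beta-1}(\Delta-1)/\Delta\le1$, which holds since $2\beta\le e^{-3}/3$ makes $2e^{2\beta-1}<1$.
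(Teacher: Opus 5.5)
Your proposal is correct and follows the paper's proof essentially step for step. It uses the same term-by-term cancellation for both unbiasedness identities, and the same uniform bound via $2\cosh(nx)\le(2\cosh x)^n$, $\sum_{l\in L_\rho^\text{ext}}\abs{l_0}=\abs{\gamma}$, $\abs{L_\rho^\text{int}}\le\abs{\rho}$ and the Stirling-number inequality, giving $\tilde{w}_\gamma(\rho)\le(2\beta e^{2\beta})^{\norm{\gamma}}$. Your checks of $\abs{L_\rho^\text{int}}\le\abs{\rho}$ (which the paper only asserts) and of the final numerical inequality add detail the paper omits. For the former, ``vertical edges alone'' is not quite the right reason, since a loop could also use $\cup$ edges; the clean argument is that the vertex of an internal loop with maximal time coordinate must lie on a $\cap$ edge, because a vertical edge would either contradict maximality or reach time $0$.
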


\begin{proof}
    Fix a polymer $\gamma\in\mathcal{C}$. We first show that $\mathbb{E}_{\nu_\gamma}\left[\tilde{w}_\gamma\right]=w_\gamma$. Since there are $\genfrac{\{}{\}}{0pt}{}{\norm{\gamma}+n}{\norm{\gamma}}\norm{\gamma}!$ sequences $\rho$ of length $\norm{\gamma}+n$ whose support is $E(\gamma)$,
    \begin{align}
        \mathbb{E}_{\nu_\gamma}[\tilde{w}_\gamma] &= \sum_{n=0}^\infty\frac{(2\beta\norm{\gamma})^ne^{-2\beta\norm{\gamma}}}{n!}\frac{1}{\genfrac{\{}{\}}{0pt}{}{\norm{\gamma}+n}{\norm{\gamma}}\norm{\gamma}!}\sum_{\substack{\rho \in E(\gamma)^{\norm{\gamma}+n} \\ \operatorname{supp}(\rho)=E(\gamma)}}\tilde{w}_\gamma(\rho) \notag \\
        &= \sum_{n=0}^\infty\frac{\beta^{\norm{\gamma}+n}}{(\norm{\gamma}+n)!}\sum_{\substack{\rho \in E(\gamma)^{\norm{\gamma}+n} \\ \operatorname{supp}(\rho)=E(\gamma)}}\frac{2^\abs{L_\rho^\text{int}}\prod_{l \in L_\rho^\text{ext}}2\cosh(\beta hw(l))}{\left(2\cosh(\beta h)\right)^\abs{\gamma}} \notag \\
        &= \sum_{\rho \in P_\gamma}\frac{\beta^\abs{\rho}}{\abs{\rho}!}\frac{2^\abs{L_\rho^\text{int}}\prod_{l \in L_\rho^\text{ext}}2\cosh(\beta hw(l))}{\left(2\cosh(\beta h)\right)^\abs{\gamma}} \notag \\
        &= w_\gamma. \notag
    \end{align}
    We now show that $\tilde{w}_\gamma(\rho)\leq\left(\frac{1}{2e^{r+2}(\Delta-1)}\right)^\norm{\gamma}$ for all $\rho\in\operatorname{supp}(\nu_\gamma)$. Since $\abs{w(l)}\leq\abs{l_0}$ for each $l \in L_\rho^\text{ext}$, we have $2\cosh(\beta hw(l))\leq\left(2\cosh(\beta h)\right)^\abs{l_0}$ by using $2\cosh(nx)\leq\left(2\cosh(x)\right)^n$ for all $x\in\mathbb{R}$ and $n\in\mathbb{Z}^+$. Further, $\sum_{l \in L_\rho^\text{ext}}\abs{l_0}=\abs{\gamma}$ and $\abs{L_\rho^\text{int}}\leq\abs{\rho}$. Therefore,
    \begin{align}
        \tilde{w}_\gamma(\rho) &= \frac{(2\beta e^{2\beta})^\norm{\gamma}\genfrac{\{}{\}}{0pt}{}{\abs{\rho}}{\norm{\gamma}}\norm{\gamma}^\norm{\gamma}}{\binom{\abs{\rho}}{\norm{\gamma}}(2\norm{\gamma})^\abs{\rho}}\frac{2^\abs{L_\rho^\text{int}}\prod_{l \in L_\rho^\text{ext}}2\cosh(\beta hw(l))}{\left(2\cosh(\beta h)\right)^\abs{\gamma}} \notag \\
        &\leq \frac{(2\beta e^{2\beta})^\norm{\gamma}\genfrac{\{}{\}}{0pt}{}{\abs{\rho}}{\norm{\gamma}}\norm{\gamma}^\norm{\gamma}}{\binom{\abs{\rho}}{\norm{\gamma}}\norm{\gamma}^\abs{\rho}} \notag \\
        &\leq (2\beta e^{2\beta})^\norm{\gamma}, \notag
    \end{align}
    where we have used $\genfrac{\{}{\}}{0pt}{}{\abs{\rho}}{\norm{\gamma}}\norm{\gamma}^\norm{\gamma}\leq\binom{\abs{\rho}}{\norm{\gamma}}\norm{\gamma}^\abs{\rho}$. By taking $\beta\leq\frac{1}{2e^{r+3}\Delta}$,
    \begin{equation}
        \tilde{w}_\gamma(\rho) \leq \left(\frac{1}{2e^{r+2}(\Delta-1)}\right)^\norm{\gamma}. \notag
    \end{equation}
    Finally, we show that $\mathbb{E}_{\nu_\gamma}\left[\tilde{w}_\gamma\tilde{\mu}_{\gamma,\rho}(\sigma)\right]=w_\gamma\mu_\gamma(\sigma)$ for all $\gamma\in\mathcal{C}$ with $w_\gamma>0$ and $\sigma\in\{-1,+1\}^{V(\gamma)}$. By a similar argument as for $\mathbb{E}_{\nu_\gamma}[\tilde{w}_\gamma]=w_\gamma$,
    \begin{align}
        \mathbb{E}_{\nu_\gamma}\left[\tilde{w}_\gamma\tilde{\mu}_{\gamma,\rho}(\sigma)\right] &= \sum_{\rho \in P_\gamma}\frac{\beta^\abs{\rho}2^\abs{L_\rho^\text{int}}\prod_{l \in L_\rho^\text{ext}}e^{\beta h\sum_{v \in l_0}\sigma_v}\mathds{1}_\text{$\epsilon_{l_0}\sigma_{l_0}$ constant}}{\abs{\rho}!\left(2\cosh(\beta h)\right)^\abs{\gamma}} \notag \\
        &= w_\gamma\mu_\gamma(\sigma). \notag
    \end{align}
    This completes the proof.
\end{proof}

\begin{lemma}
    \label{lemma:ProbabilisticAntiferromagneticBipartiteHeisenbergKappaComputable}
    Let $\mathcal{S}=(\beta,h)$ be an antiferromagnetic Heisenberg model on a bipartite graph $G$, let $\mathcal{P}=(\mathcal{C},w,\sim)$ be the Heisenberg polymer model of $\mathcal{S}$, and let $\mathcal{P}_\circ=(\mathcal{C},\tilde{w},\sim)$ be the probabilistic Heisenberg polymer model of $\mathcal{P}$. Then $\mathcal{P}_\circ$ is $\kappa$-computable with $\kappa=0$.
\end{lemma}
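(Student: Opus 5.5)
We follow the template of Lemma~\ref{lemma:ProbabilisticFerromagneticHeisenbergKappaComputable}. The claim is that a sample $\rho\sim\nu_\gamma$ can be drawn, and then $\tilde{w}_\gamma(\rho)$ computed and $\tilde{\mu}_{\gamma,\rho}$ sampled, in expected time $\norm{\gamma}^{O(1)}$. This is exactly $\kappa$-computability with $\kappa=0$ for a subgraph polymer model.

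The first step is drawing $\rho$. We sample $n\sim\operatorname{Poi}(2\beta\norm{\gamma})$ in expected time $O(1+\beta\norm{\gamma})=\norm{\gamma}^{O(1)}$, since $\beta\le 1$. We then draw a uniformly random surjection from $[\norm{\gamma}+n]$ onto $E(\gamma)$. This can be done in time polynomial in $\norm{\gamma}+n$, for example using the standard recursion for Stirling numbers of the second kind, or by rejection sampling uniform maps. Because $\mathbb{E}[(\norm{\gamma}+n)^{k}]=\norm{\gamma}^{O(1)}$ for Poisson $n$ with mean $O(\norm{\gamma})$, the expected cost is $\norm{\gamma}^{O(1)}$. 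The prefactor
\[
\frac{(2\beta e^{2\beta})^{\norm{\gamma}}\genfrac{\{}{\}}{0pt}{}{\abs{\rho}}{\norm{\gamma}}\norm{\gamma}^{\norm{\gamma}}}{\binom{\abs{\rho}}{\norm{\gamma}}(2\norm{\gamma})^{\abs{\rho}}}
\]
involves a Stirling number and a binomial coefficient. Both can be computed by dynamic programming in time $\abs{\rho}^{O(1)}$, so this factor is also covered by the expected $\norm{\gamma}^{O(1)}$ bound.

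The second step, and the main point, is computing the loop data without touching the $2^{\abs{\gamma}}$-dimensional Hilbert space. We build the graph $H_\rho$ explicitly. Its vertex set is $V(\gamma)\times\mathbb{Z}_{\abs{\rho}}$, of size $\abs{\gamma}\abs{\rho}$. Each vertex has degree at most $2$: at each time slice, a vertex is either paired by a $\cap$ or $\cup$ edge or continues vertically. So $H_\rho$ has $O(\abs{\gamma}\abs{\rho})$ edges, and $\abs{\gamma}\le\norm{\gamma}+1$ since $\gamma$ is connected. Its connected components $L_\rho$ can be found by breadth-first search in time $O(\abs{\gamma}\abs{\rho})$. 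For each loop $l$ we read off $l_0$. This classifies $l$ as internal or external and gives $w(l)=\sum_{v\in l_0}\epsilon_v$; the bipartition $\epsilon$ of $G$ restricted to $\gamma$ is computed once in linear time. From these quantities, $2^{\abs{L_\rho^\text{int}}}$, $\prod_{l\in L_\rho^\text{ext}}2\cosh(\beta h w(l))$ and $(2\cosh(\beta h))^{\abs{\gamma}}$ all follow with polynomially many arithmetic operations. The expected total cost is again $\norm{\gamma}^{O(1)}$.

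Finally, $\tilde{\mu}_{\gamma,\rho}$ factorises over the external loops, and each $l_0$ lies entirely in one external loop. For each $l\in L_\rho^\text{ext}$ we independently choose $s\in\{\pm1\}$ with probability $e^{\beta h s\,w(l)}/(2\cosh(\beta h w(l)))$ and set $\sigma_v=\epsilon_v s$ for $v\in l_0$. This is exactly the distribution
\[
\frac{e^{\beta h\sum_{v\in l_0}\sigma_v}\mathds{1}_{\epsilon_{l_0}\sigma_{l_0}\text{ constant}}}{2\cosh(\beta h w(l))},
\]
and sampling it takes time $O(\abs{\gamma})$. The only delicate point is expressing the expected-time bound in terms of $\abs{\rho}$, which is random, so we must ensure its polynomial moments are $\norm{\gamma}^{O(1)}$. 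This follows from the Poisson moment bounds above. Hence $\mathcal{P}_\circ$ is $\kappa$-computable with $\kappa=0$.
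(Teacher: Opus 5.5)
Your proposal is correct and follows the same route as the paper's proof: sample $n\sim\operatorname{Poi}(2\beta\norm{\gamma})$ and a uniform surjection, compute the loop decomposition $L_\rho$ in polynomial time, then read off $\tilde{w}_\gamma(\rho)$ and sample $\tilde{\mu}_{\gamma,\rho}$ loop by loop, with the details the paper leaves implicit filled in. Drop the aside that rejection sampling of uniform maps also works: its acceptance probability $\norm{\gamma}!\genfrac{\{}{\}}{0pt}{}{\abs{\rho}}{\norm{\gamma}}/\norm{\gamma}^{\abs{\rho}}$ is roughly $e^{-\norm{\gamma}}$ when $n$ is small (e.g.\ $n=0$, which has probability $e^{-2\beta\norm{\gamma}}$), so it would cost $e^{\Omega(\norm{\gamma})}$ in expectation and only your Stirling-recursion sampler supports $\kappa=0$.
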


\begin{proof}
    A sample $\rho\sim\nu_\gamma$ can be obtained in expected time $\norm{\gamma}^{O(1)}$ by sampling $n\sim\operatorname{Poi}(2\beta\norm{\gamma})$ and then sampling a uniform surjection of length $\norm{\gamma}+n$ onto $E(\gamma)$. For such a sample $\rho$, the loop decomposition $L_\rho$ can be computed in expected time $\norm{\gamma}^{O(1)}$. Hence, $\tilde{w}_\gamma(\rho)$ can be computed and $\tilde{\mu}_{\gamma,\rho}$ can be sampled in expected time $\norm{\gamma}^{O(1)}$, and so $\mathcal{P}_\circ$ is $\kappa$-computable with $\kappa=0$.
\end{proof}

We now prove Theorem~\ref{theorem:AntiferromagneticBipartiteHeisenbergModelAlgorithm}.

\begin{proof}[Proof of Theorem~\ref*{theorem:AntiferromagneticBipartiteHeisenbergModelAlgorithm}]
    Let $\mathcal{P}=(\mathcal{C},w,\sim)$ be the Heisenberg polymer model of $\mathcal{S}$ and let $\mathcal{P}_\circ=(\mathcal{C},\tilde{w},\sim)$ be the probabilistic Heisenberg polymer model of $\mathcal{P}$. By Lemma~\ref{lemma:AntiferromagneticBipartiteHeisenbergPolymerRepresentation}, $\mu_{\rho_\mathcal{S}}=\hat{\mu}_\mathcal{P}$ and $Z_\mathcal{S}=\cosh(\beta h)^\abs{G}e^{-\beta\norm{G}}Z_\mathcal{P}$, so it suffices to sample from $\hat{\mu}_\mathcal{P}$ and approximate $Z_\mathcal{P}$. A sample from $\hat{\mu}_\mathcal{P}$ is obtained by sampling $\Gamma\sim\mu_\mathcal{P}$ and then sampling spin configurations $\sigma_{V(\gamma)}\sim\tilde{\mu}_{\gamma,\rho_\gamma}$ for each $\gamma\in\Gamma$ and $\sigma_v\sim\mu_v$ for each $v \in V(G){\setminus}V(\Gamma)$, where $\rho_\gamma$ denotes the sequence sampled by the accepting iteration of the graphlet sampler that produced $\gamma$. By Lemma~\ref{lemma:AntiferromagneticBipartiteHeisenbergWeightEstimator} with $r=0$, for all $\gamma\in\mathcal{C}$ and $\rho\in\operatorname{supp}(\nu_\gamma)$,
    \begin{equation}
        \mathbb{E}_{\nu_\gamma}\left[\tilde{w}_\gamma\right]=w_\gamma \quad\text{and}\quad \tilde{w}_\gamma(\rho) \leq \left(\frac{1}{2e^2(\Delta-1)}\right)^\norm{\gamma}, \notag
    \end{equation}
    and, for all $\gamma\in\mathcal{C}$ with $w_\gamma>0$ and $\sigma\in\{-1,+1\}^{V(\gamma)}$,
    \begin{equation}
        \mathbb{E}_{\nu_\gamma}\left[\tilde{w}_\gamma\tilde{\mu}_{\gamma,\rho}(\sigma)\right] = w_\gamma\mu_\gamma(\sigma). \notag
    \end{equation}
    In particular, $\mathcal{P}_\circ$ satisfies the conditions of Lemma~\ref{lemma:ProbabilisticGraphletSampler}. Since $\mathcal{P}$ is a subgraph polymer model on $G$ of maximum degree at most $\Delta$, it is equivalent to a graphlet polymer model on $L(G)$, which has maximum degree at most $2(\Delta-1)$. By Lemma~\ref{lemma:ProbabilisticAntiferromagneticBipartiteHeisenbergKappaComputable}, $\mathcal{P}_\circ$ is $\kappa$-computable with $\kappa=0$. By applying Theorem~\ref{theorem:PolymerSamplingAlgorithm} and Theorem~\ref{theorem:PolymerCountingAlgorithm} with $r=\kappa$, there is an $\epsilon$-approximate sampling algorithm for $\hat{\mu}_\mathcal{P}$ with expected runtime $O(\abs{G}\log(\abs{G}/\epsilon))$ and an $\epsilon$-approximate counting algorithm for $Z_\mathcal{P}$ with expected runtime $O(\abs{G}^2\epsilon^{-2}\log(\abs{G}/\epsilon)^2)$. This completes the proof.
\end{proof}

\section{Conclusion \& Outlook}
\label{section:ConclusionOutlook}

We have established a general framework for developing fast sampling and counting algorithms for stoquastic spin systems at high temperature, based on the polymer dynamics of Chen et al.~\cite{chen2021fast} and the graphlet sampling algorithm of Blanca et al.~\cite{blanca2024fast}. We obtained fast algorithms for approximating the partition function and sampling from the thermal distribution of (1) general stoquastic spin systems, (2) ferromagnetic Heisenberg models, and (3) antiferromagnetic Heisenberg models on bipartite graphs. For the Heisenberg models, we obtained an improved bound on the inverse temperature by using the cycle representation of T\'oth~\cite{toth1993improved, goldschmidt2011quantum} and the loop representation of Aizenman and Nachtergaele~\cite{aizenman1994geometric, goldschmidt2011quantum} to construct an unbiased estimator for the polymer weights that can be computed in polynomial time.

Several natural questions remain open. First, it would be interesting to obtain optimal bounds on the inverse temperature for which our algorithms apply. Second, it would be interesting to extend our results to the low-temperature setting and to graphs of unbounded degree, both of which have been achieved for classical models~\cite{chen2021fast, galanis2022fast, blanca2024fast}. Finally, it would be of significant interest to obtain fast classical algorithms for general quantum spin systems beyond the stoquastic setting.

\section*{Acknowledgements}

We thank Tyler Helmuth, Will Perkins, and Gabriel Waite for helpful discussions. RLM was supported by the ARC Training Centre for Future Leaders in Quantum Computing (FLiQC), project number IC240100009.

\bibliography{bibliography}

\end{document}